\newtheorem{theorem}{Theorem}
\newtheorem{lemma}{Lemma}
\newtheorem{corollary}{Corollary}
\newtheorem{remark}{Remark}
\newtheorem{assumption}{Assumption}
\def\AA{\mathbf A}
\def\hat{\widehat}
\def\tilde{\widetilde}
\newcommand{\ttt}{\boldsymbol \theta}
\newcommand{\BB}{\mathbf B}
\newcommand{\XX}{\mathbf X}
\newcommand{\MM}{\mathbf M}
\newcommand{\xx}{\mathbf x}
\newcommand{\ZZ}{\mathbf Z}
\newcommand{\zz}{\mathbf z}
\newcommand{\UU}{\mathbf U}
\newcommand{\OO}{\mathbf O}
\newcommand{\VV}{\mathbf V}
\newcommand{\WW}{\mathbf W}
\newcommand{\II}{\mathbf I}
\newcommand{\uu}{\mathbf u}
\newcommand{\vv}{\mathbf v}
\newcommand{\ww}{\mathbf w}
\newcommand{\bb}{\mathbf b}
\newcommand{\ME}{\mathbb E}
\newcommand{\MP}{\mathbb P}
\newcommand{\MU}{\mathbb U}
\newcommand\independent{\protect\mathpalette{\protect\independenT}{\perp}}
\def\independenT#1#2{\mathrel{\rlap{$#1#2$}\mkern2mu{#1#2}}}
\def\spacingset#1{\renewcommand{\baselinestretch}%
{#1}\small\normalsize} \spacingset{1}
\begin{document}



\title{Efficient Generalization and Transportation
}
\author{Zhenghao Zeng$^1$, Edward H.\ Kennedy$^1$, Lisa M.\ Bodnar$^2$, Ashley I.\ Naimi$^3$}
\affil{$^1$Department of Statistics \& Data Science, Carnegie Mellon University \\
$^2$Department of Epidemiology, University of Pittsburgh \\
$^3$Department of Epidemiology, Emory University}

  \date{}

\maketitle

\begin{abstract}
When estimating causal effects, it is important to assess external validity, i.e., determine how useful a given study is to inform a practical question for a specific target population. One challenge is that the covariate distribution in the population underlying a study may be different from that in the target population. If some covariates are effect modifiers, the average treatment effect (ATE) may not generalize to the target population. To tackle this problem, we propose new methods to generalize or transport the ATE from a source population to a target population, in the case where the source and target populations have different sets of covariates. When the ATE in the target population is identified, we propose new doubly robust estimators and establish their rates of convergence and limiting distributions. Under regularity conditions, the doubly robust estimators provably achieve the efficiency bound and are locally asymptotic minimax optimal.  A sensitivity analysis is provided when the identification assumptions fail. Simulation studies show the advantages of the proposed doubly robust estimator over simple plug-in estimators. Importantly, we also provide minimax lower bounds and higher-order estimators of the target functionals. The proposed methods are applied in transporting causal effects of dietary intake on adverse pregnancy outcomes from an observational study to \textit{the whole U.S. pregnant female population}.

\end{abstract}

\noindent 
\textit{Keywords}: Generalization, Transportation, Influence function, Doubly robust estimation, Minimax lower bounds, Sensitivity Analysis, Dietary Intake, Adverse Pregnancy Outcomes.

\def\spacingset#1{\renewcommand{\baselinestretch}%
{#1}\small\normalsize} \spacingset{1}
\spacingset{1}

\section{Introduction} \label{Intro}

In causal inference, researchers typically aim at estimating causal effects in a particular ``source'' population based on randomized trials or observational studies in that population.  population is the sole population of interest, random samples from the source are representative, and standard techniques to estimate average treatment effects (ATEs) can be applied to obtain reliable estimates of effects in the whole population. However, in many studies (e.g. randomized trials in clinical medicine, policy analysis) samples are drawn from a different population, unrepresentative of the target of interest \citep{kennedy2015literature, bell2016estimates, allcott2015site}. In other words, the source and target populations can be different, and the ATE we obtain from the sample only applies to the source population -- and may not generalize to the target population directly. Failing to consider this lack of representation may yield unreliable conclusions that can even be harmful, especially in medicine and policy evaluations \citep{westreich2019target,chen2021ethical}.

One example of unrepresentative samples is when the distributions of some covariates (e.g. age and BMI) in the target population differ from those in the source population. When some of these covariates are effect modifiers (e.g. age and BMI may modify the effects of some medicine), the ATE in the target population can be quite different from that in the source population. There exists a rich literature in bridging findings from the source population to the target population \citep{pearl2011transportability, cole2010generalizing, tipton2013improving, rudolph2017robust, buchanan2018generalizing, dahabreh2019generalizing, dahabreh2020extending, chattopadhyay2022one}. Most of these papers adopt the idea that we first estimate the probability for a subject to be in the source population and use it as a reweighting term in the proposed estimator. For instance, \cite{dahabreh2019generalizing} and \cite{dahabreh2020extending} proposed three types of estimators based on outcome modeling, inverse probability weighting and doubly robust-style augmented inverse probability weighting. \cite{rudolph2017robust} derived efficient influence functions and robust estimators for three transported estimands: ATE, intent-to-treat ATE and complier ATE. \cite{chattopadhyay2022one} developed a one-step weighting estimator where the weights are learned from a convex optimization problem to simultaneously model the inverse propensity score and outcome regression functions. However, the theoretical properties of the doubly robust estimators have not been well understood. The form of their second-order bias and the conditions required to ensure their asymptotic normality are unclear. Another limitation is that the previous work assumes the sets of covariates in the source population and the target population are the same. 
In practice there often exist some covariates that are available in the source, but not in the target. Simply ignoring them is not an efficient way to exploit the samples. For instance, including more covariates in the source may enable us to be more confident in the conditional exchangeability assumption in the source population.

In this work, we fill these two gaps by developing efficiency theory and establishing theoretical guarantees for the doubly robust estimators. In particular, we generalize previous results to accommodate covariate mismatch in the source and target populations. After formulating the generalization and transportation functional of interest, we examine the assumptions required to identify these functionals. When these functionals are identified, we derive the first-order influence functions and establish the asymptotic normality of a doubly robust estimator (under additional conditions). Simulations show how the proposed doubly robust estimator has a smaller estimation error compared to a simple plug-in estimator. To provide a complete story for generalization and transportation of causal effects, we then consider the minimax lower bound and higher-order estimation when the source population and the target population share the same set of covariates. We derive the minimax lower bound and propose a new higher-order estimator based on second-order influence functions, which attains the minimax lower bound in a broader regime than the doubly robust estimator. To the best of our knowledge, the minimax lower bound and high-order estimation have not been studied under generalization and transportation settings. Finally we apply our proposed methods to transporting the causal effects of fruit and vegetable intake on adverse pregnancy outcomes from an observational study to \textit{the whole U.S. pregnant female population} as an illustration.

The paper is organized as follows. In Section \ref{sec:Preliminaries} we introduce the setup and notation. In Section \ref{sec:Identification} we discuss the identification assumptions to identify the ATE in the target population. Efficiency theory and doubly robust methods are provided in Section \ref{sec:Efficiency-theory}. We then summarize the minimax lower bounds of the target functionals in Section \ref{sec:Minimax} and derive higher-order estimators in Section \ref{sec:Quadratic}. Simulation studies are presented in Section \ref{sec:Simulation} to explore finite-sample properties of our methods. In Section \ref{sec:Real-data} we provide a data analysis to illustrate the proposed methods, transporting effects of dietary intake on pregnancy outcomes. Finally we conclude with a discussion in Section \ref{sec:Discussion}. All the proofs and additional details on real data analysis are presented in the supplementary materials.

\section{Preliminaries}\label{sec:Preliminaries}

In this section we first introduce the setup in generalization and transportation setting. We then formalize the treatment effects of interests as statistical functionals with the potential outcome framework \citep{splawa1990application, rubin1974estimating}. Finally we introduce some notation that will be useful in presenting our results.

\subsection{Data Structure}

In the generalization and transportation setting there are typically two populations, i.e., a source population and a target population. The source population is usually the underlying population of a randomized trial or observational study and is defined by enrollment processes and inclusion
or exclusion criteria of the study. We assume we observe $n_1$ source samples 
$$\mathcal{D}_1 = \{ \ZZ_i = (\XX_i,A_i,Y_i, S_i = 1), 1\leq i \leq n_1 \}$$ 
from the source population, where $\XX$ is a $d$-dimensional vector containing all covariates, $A$ is the treament assignment and $Y$ is the outcome. We denote $S$ as an indicator such that $S=1$ if a subject is in the source population and $S=0$ otherwise. We are interested in estimating the ATE in a target population without direct access to treatment and outcome information on target samples. The observational unit in the target population is $\ZZ = (\VV, S=0)$ and the target dataset $\mathcal{D}_2$ consists of $n_2$ such realizations, i.e., 
$$\mathcal{D}_2  = \{\ZZ_i = (\VV_i, S_i=0), n_1+1 \leq i \leq n_1 +n_2=n \} , $$
where, importantly, $\VV \subseteq \XX$ represents partial covariates in $\XX$, which may be a strict subset. Here we slightly abuse the notation and denote both the covariates vector and set of covariates as $\VV$ (in the target population) and $\XX$ (in the source population). In practice, there may be variables present in the target population but not in the source population. Since the information about the treatment effect in the source population is captured by $\XX$, covariates that are not included in $\XX$ may not help us identify and estimate the ATE in the target population. Therefore, we do not use these variables to generalize or transport the causal effects and assume $\VV \subseteq \XX$. 

\subsection{Effects of Interest}

Before defining our estimands of interest, we first discuss differences in the target population definition in the generalization and transportation setting. In the generalization setting, the source population is a subset of the target population. After we collect treatment and outcome information and estimate the ATE in the source population, we hope to \emph{generalize} the causal effects to the whole target population. The most natural design is a trial nested within a cohort of eligible individuals \citep{dahabreh2019generalizing}. In such designs, researchers collect covariate information for all individuals, but only collect treatment and outcome information from a subset of them. This setup arises in comprehensive cohort studies \citep{olschewski1985comprehensive}, where only a few patients consent to randomization in a clinical trial, as well as clinical trials embedded in healthcare systems where all individuals' information is collected routinely but only some of them are included in a trial \citep{fiore2016integrating}. 

In contrast, in the transportation setting, the source population is (at least partly) external to the target population \citep{cole2010generalizing}. The source population is different from the target population and source samples may not be representative of the target population. The goal is to \emph{transport} the causal effects from the source population to the target population. This setup arises widely in public policy research, where randomized trial or observational study is conducted in selected samples while the target samples are from administrative databases or surveys and can be very different from the samples enrolled in the study (our data example in Section \ref{sec:Real-data} falls into this category, as will be discussed in more detail shortly).

Now we define the generalization and transportation effects of interest. We use the random variable $Y^a$ to denote the potential (counterfactual) outcome we would have observed
had a subject received treatment $A = a$, which may be contrary to the fact $Y$. For simplicity, we consider binary treatment in this work, where $A=1$ means treatment and $A=0$ means control. Then the ATE in the target population is 
\begin{equation}\label{generalization-effect}
    \psi := \mathbb{E}[Y^1 - Y^0]
\end{equation}
in the generalization case and 
\begin{equation}\label{transportation-effect}
    \theta := \mathbb{E}[Y^1 - Y^0|S=0]
\end{equation}
in the transportation case. Note that in the generalization case the source population is part of the target population and hence we take an expectation over the whole population. However, in the transportation case the source population may not be representative of the target population, and we only take an expectation in the target population (i.e. conditioning on $S=0$).

Under standard causal assumptions (consistency, positivit \citep{rosenbaum1983central}, exchangeability \citep{hernan2023causal}), the ATE in the source population $\mathbb{E}[Y^1 - Y^0|S=1]$ can be identified and efficiently estimated. Mathematically, we do not have $\mathbb{E}[Y^1 - Y^0|S=0] = \mathbb{E}[Y^1 - Y^0|S=1] = \mathbb{E}[Y^1 - Y^0]$ in general, especially when the effect modifiers have different distributions in the source and the target population. Hence we need additional assumptions and novel methodology to estimate the treatment effect in the target population. We define the mean potential outcomes in the target population as
\begin{equation}\label{theta_a}
    \psi_a = \mathbb{E}[Y^a], \quad \theta_a = \mathbb{E}[Y^a|S=0]
\end{equation}
so $\psi_a$ and $\theta_a$ correspond to the generalization and transportation cases, respectively.
In this paper we will focus on the identification and estimation of $\psi_a$ and $\theta_a$. After estimating $\psi_a$ or $\theta_a$, the ATE in the target population can be estimated by taking the difference. Other estimands, such as the risk ratio or odds ratio, can be estimated in a similar manner.

\subsection{Nuisance Functions \& Other Notation}

To present our results concisely we need the following notation on commonly used nuisance functions. 
\begin{itemize}
    \item The propensity score in the source population is $\pi_a(\xx) = \MP(A = a | \XX = \xx, S = 1)$. If the source dataset $\mathcal{D}_1$ is from a randomized trial then $\pi_a(\xx)$ can be known. Otherwise we would need to estimate it from $\mathcal{D}_1$.
    \item The conditional probability of being selected into the source population (commonly referred to as the participation probability) is $\rho(\vv) = \MP (S=1|\VV=\vv)$.
    \item The conditional mean and variance of the outcomes among subjects receiving treatment $A=a$ in the source population are $\mu_a(\xx) = \mathbb{E}[Y \mid \XX=\xx, A=a, S=1]$ and $\sigma_a^2(\xx) = \text{Var}(Y \mid \XX=\xx, A=a, S=1) $.
    \item The function obtained by further regressing $\mu_a$ on $\VV$ in the source population, $\tau_a(\vv) = \ME[\mu_a(\XX)|\VV=\vv, S=1]$.
    
\end{itemize}

For a univariate function $f$ on variables $\ZZ$ we use $\MP_n [f(\ZZ)] $ or $\MP_n (f)$to denote the sample average $\frac{1}{n}\sum_{i=1}^n f(\ZZ_i)$. For a bivariate function $g$ we use $\MU_n [g(\ZZ_1,\ZZ_2)]$ or $\MU_n (g)$ to denote the U-statistic measure $ \frac{1}{n(n-1)} \sum_{i \neq j}g(\ZZ_i, \ZZ_j)$. The Hellinger distance $H^2(\MP,\mathbb{Q})$ between two distributions $\MP$ and $\mathbb{Q}$ is defined as
\[
H^2(\MP, \mathbb{Q})=\frac{1}{2} \int \left[ \sqrt{p(\xx)}-\sqrt{q(\xx)}\right]^2 \nu(d \xx)
\]
for a dominating measure $\nu$. We say a function $f$ is $s$-smooth if it is $\lfloor s\rfloor$ times continuously differentiable with derivatives up to order $\lfloor s\rfloor$ bounded by some constant $C>0$ and $\lfloor s\rfloor$-order derivatives
Hölder continuous, i.e.
\[
\left|D^\beta f(\xx)-D^\beta f\left(\xx^{\prime}\right)\right| \leq C\left\|\xx-\xx^{\prime}\right\|_2^{s-\lfloor s\rfloor}
\]
for all $\beta = (\beta_1,\dots, \beta_d)$ with $\sum_{i} \beta_i = \lfloor s\rfloor$, where $D^\beta=\frac{\partial^\beta}{\partial x_1^{\beta_1} \ldots \partial x_d^{\beta_d}}$ is the differential operator. Hölder class, denoted by $\mathcal{H}(s)$, is the function class containing all $s$-smooth functions.

We denote the weighted $L_2$ norm with weight function $w$ as $\|f\|_w = \sqrt{\int f(\zz)^2w(\zz) d \MP (\zz)}$ and when the weight function $w=1$ we abbreviate the notation as $\|f\|$. In this paper we mainly use $w = \rho \pi_a$ as a weight function. For a matrix $\mathbf{\Omega}$ we let $\|\mathbf{\Omega}\|$ and $\|\mathbf{\Omega}\|_F$ denote its spectral norm and Frobenius norm, respectively. We write $a_n \lesssim b_n$ if $a_n \leq Cb_n$ for a positive constant $C$ and sufficiently large $n$.

\section{Identification} \label{sec:Identification}
In this section we discuss sufficient conditions to identify functionals in \eqref{theta_a} from the observable data. These assumptions are generalizations of the identification conditions used in \cite{dahabreh2019generalizing} and \cite{dahabreh2020extending} to the case $\VV\subseteq \XX$. (Note in Section \ref{sec:Sensitivity-analysis} we consider sensitivity analysis and allow several of the following assumptions to be violated.)

\begin{assumption}\label{consistency}
  Consistency: 
$Y = Y^a \text{\, if \,} A=a.$

\end{assumption}
\begin{assumption}\label{exchangeability}
No unmeasured confounding in source: 
$(Y^0, Y^1) \independent A \mid \XX, S=1.$
\end{assumption}
\begin{assumption}\label{positivity-treatment}
Treatment positivity in source: $\pi_a(\XX) >0 \quad \MP(\cdot|S=1) \text{ a.s. for all  } a.$
\end{assumption}
Assumption \ref{consistency} is also known as stable unit treatment value assumption (SUTVA) and requires no interference between different subjects, i.e. the outcome for an individual is not affected by other individuals' treatments.
Assumption \ref{exchangeability} is a standard assumption used to identify average treatment effects. It holds if the source dataset comes from a randomized trial or if we collect enough covariates in $\XX$ so that the treatment process is completely explained by $\XX$.
Assumption \ref{positivity-treatment}, also known as the overlap assumption, has been used in causal inference since \cite{rosenbaum1983central}. It guarantees that every
subject in the source population has a positive probability of receiving each treatment $a$. With these three assumptions we are able to identify the average treatment effect in the source population. But we need additional assumptions for generalization and transportation.

\begin{assumption}\label{transportability}
Exchangeability between populations: $
S \independent Y^{a} \mid \VV.$
\end{assumption}

\begin{assumption}\label{positivity-selection}
Positivity of selection: $\rho(\VV)>0 \text{\, a.s.}$
\end{assumption}

Assumption \ref{transportability} is critical to generalizing/transporting the effects from the source population to the target population \citep{kern2016assessing, dahabreh2019generalizing, dahabreh2020extending}. Under assumption \ref{transportability} we have
\[
\mathbb{E}[Y^a|\VV,S=1] = \mathbb{E}[Y^a|\VV] = \mathbb{E}[Y^a|\VV,S=0],
\]
which further implies
\begin{equation}\label{cond_eff_tran}
    \mathbb{E}[Y^1-Y^0|\VV,S=1] = \mathbb{E}[Y^1-Y^0|\VV,S=0].
\end{equation}
Hence, the source population and the target
population have the same conditional average treatment effect. We essentially just need \eqref{cond_eff_tran} to identify the ATE in the target population. To state our results concisely we formalize the assumption in terms of potential outcome $Y^a$ instead of the contrast $Y^1-Y^0$. For equality \eqref{cond_eff_tran} to hold, all effect modifiers that are distributed differently between the source and the target populations must be measured in $\VV$. In practice, researchers can first identify effect modifiers from variables shared by both the source and target populations and include them in 
$\VV$. They can then identify additional covariates in the source population that explain the treatment assignment and include them in 
$\XX$, ensuring that Assumption \ref{exchangeability} is satisfied.
Assumption \ref{positivity-selection} requires that in each stratum of effect modifiers $\VV$, there is a positive probability of being in the source population for every individual. Thus all members in the target population are represented by some individuals in the source population.

\begin{theorem} \label{thm-identification}
Under identification assumptions \ref{consistency}--\ref{positivity-selection}, the estimands $\psi_a$ and $\theta_a$ are identified as 
\begin{equation}\label{eq:identify}
\begin{aligned}
    \psi_a = &\, \mathbb{E} \left\{\mathbb{E}\left[\mathbb{E}(Y \mid \XX, A=a, S=1) \mid \VV, S=1\right] \right\} \\
    = & \, \mathbb{E} \{\mathbb{E}[\mu_a(\XX) \mid \VV, S=1] \}= \ME [\tau_a(\VV)], \\
    \theta_a =&\, \mathbb{E} \left\{\mathbb{E}\left[\mathbb{E}(Y \mid \XX, A=a, S=1) \mid \VV, S=1\right] |S=0\right\} \\
    = & \, \mathbb{E} \{\mathbb{E}[\mu_a(\XX) \mid \VV, S=1] |S=0\}= \ME [\tau_a(\VV)|S=0].
\end{aligned}
\end{equation}
\end{theorem}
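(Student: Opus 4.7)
The plan is to unfold $\psi_a = \mathbb{E}[Y^a]$ and $\theta_a = \mathbb{E}[Y^a \mid S=0]$ by iterated conditioning, invoking the assumptions one at a time in a specific order dictated by the conditioning sets. For $\psi_a$, I would first apply the tower property to write $\psi_a = \mathbb{E}\{\mathbb{E}[Y^a \mid \VV]\}$. Next, Assumption \ref{transportability} (together with Assumption \ref{positivity-selection}, which is what makes $\mathbb{E}[\,\cdot \mid \VV, S=1]$ well-defined almost surely) gives $\mathbb{E}[Y^a \mid \VV] = \mathbb{E}[Y^a \mid \VV, S=1]$. Since $\VV \subseteq \XX$, a second application of the tower property yields
\[
\mathbb{E}[Y^a \mid \VV, S=1] = \mathbb{E}\{\mathbb{E}[Y^a \mid \XX, S=1] \mid \VV, S=1\}.
\]

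Now I bring in the source-population assumptions. Assumption \ref{exchangeability}, combined with Assumption \ref{positivity-treatment} (which ensures $\mathbb{E}[\,\cdot \mid \XX, A=a, S=1]$ is well-defined $\MP(\cdot\mid S=1)$-a.s.), gives $\mathbb{E}[Y^a \mid \XX, S=1] = \mathbb{E}[Y^a \mid \XX, A=a, S=1]$. Finally, Assumption \ref{consistency} collapses the counterfactual to the observed outcome: $\mathbb{E}[Y^a \mid \XX, A=a, S=1] = \mathbb{E}[Y \mid \XX, A=a, S=1] = \mu_a(\XX)$. Chaining these equalities yields the first line of \eqref{eq:identify}, and the remaining two equalities are just the definitions of $\mu_a$ and $\tau_a(\VV) = \mathbb{E}[\mu_a(\XX) \mid \VV, S=1]$.

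The argument for $\theta_a$ is entirely parallel but starts one level down. Writing $\theta_a = \mathbb{E}\{\mathbb{E}[Y^a \mid \VV, S=0] \mid S=0\}$, Assumption \ref{transportability} is again the bridge: it implies $\mathbb{E}[Y^a \mid \VV, S=0] = \mathbb{E}[Y^a \mid \VV, S=1]$ (on the set where both conditional expectations are defined, which is guaranteed by Assumption \ref{positivity-selection}). From this point forward, the same steps using Assumptions \ref{consistency}--\ref{positivity-treatment} inside the inner conditional expectation produce $\mathbb{E}[Y^a \mid \VV, S=1] = \tau_a(\VV)$, and taking the outer expectation given $S=0$ delivers $\theta_a = \mathbb{E}[\tau_a(\VV) \mid S=0]$.

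There is no real technical obstacle here; the only thing requiring care is the \emph{order} of the conditioning steps and the pairing of each exchangeability assumption with its corresponding positivity assumption (Assumption \ref{transportability} with Assumption \ref{positivity-selection}; Assumption \ref{exchangeability} with Assumption \ref{positivity-treatment}), so that every conditional expectation that appears is almost-surely well-defined. Once that bookkeeping is in place, the derivation is a short chain of tower-property rewrites and substitutions.
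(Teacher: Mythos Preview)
Your proposal is correct and follows essentially the same approach as the paper's proof: the same chain of tower-property rewrites, with Assumption~\ref{transportability} used to pass from $\mathbb{E}[Y^a\mid\VV]$ to $\mathbb{E}[Y^a\mid\VV,S=1]$, then Assumptions~\ref{exchangeability} and~\ref{consistency} applied inside the inner conditional. If anything, you are slightly more explicit than the paper in pairing each exchangeability step with its positivity assumption to ensure the conditional expectations are well-defined.
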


We can understand the above functionals by evaluating the three iterative expectations. First we regress the outcome $Y$ on $\XX$ among subjects receiving treatment $A=a$ in the source population and obtain $\mu_a(\xx)$, which contains information on the conditional treatment effect. Then we further regress $\mu_a$ on effect modifiers $\VV$ in the source population to obtain $\tau_a(\VV)$, which summarizes the conditional treatment effects within the subset of covariates $\VV$. Finally we take the mean of $\tau_a$ in the target population and obtain the target functional. The validity of the last step is guaranteed by Assumption \ref{transportability}, which implies the information on treatment effects contained in $\tau_a$ generalizes to the whole population. The proof of identifiability is provided in the appendix. 

Note that the mean potential outcome in the source population can be written as 
\[
\ME [Y^a \mid S=1] = \ME [\tau_a(\VV) \mid S=1].
\]
It differs from the target functional only in the last step, where we average over $\tau_a$ in the target population instead of in the source population. When the distributions of effect modifiers $\VV$ in the source population and the target populations are different, we have $\ME [\tau_a(\VV) \mid S=1] \neq \ME [\tau_a(\VV) \mid S=0]$ and thus the treatment effect in the source population may not generalize to the target population directly.

In applications, one needs to carefully assess the five assumptions above. In general, these assumptions are untestable and their plausibility needs to be evaluated from substantive knowledge on the mechanism of treatment assignment and study participation. If some assumptions are likely to be violated, researchers should perform sensitivity analysis to assess the robustness of their results. We discuss one way of performing sensitivity analysis in the following section.

\subsection{Sensitivity Analysis}\label{sec:Sensitivity-analysis}

When the identification assumptions do not hold simultaneously, it is not guaranteed that identification results \eqref{eq:identify} hold and we cannot identify the target functionals from the observed data. For instance, the transportability assumption may not hold because there remain unmeasured effect modifiers. In this setting, \cite{colnet2022causal} analyzed the bias due to missingness of important effect modifiers in either the source population or target population under a linear conditional average treatment effect model. However, we can still derive bounds on causal effects under other assumptions (e.g. we may relax conditions required to identify the target functionals \citep{luedtke2015statistics} or construct bounds via available instrumental variable \citep{balke1997bounds, levis2023covariate}). In this section we relax exchangeability and transportability assumptions to derive bounds that provide us with a range of plausible values for the ATE and can be useful in some applications.

\begin{assumption}\label{relax-exchangeability}
Relaxation of Assumption \ref{exchangeability}: There exists a positive constant $\delta_1$ such that 
\[
|\mathbb{E}\left[Y^{a} \mid \XX, A=1, S=1\right] - \mathbb{E}\left[Y^{a} \mid \XX, A=0, S=1\right]| \leq \delta_1 \text{ a.s. for all } a.
\]
\end{assumption}

\begin{assumption}\label{relax-transportability}
Relaxation of Assumption \ref{transportability}: There exists a positive constant $\delta_2$ such that 
\[
|\mathbb{E}[Y^a|\VV,S=0] - \mathbb{E}[Y^a|\VV,S=1]| \leq \delta_2 \text{ a.s. for all } a.
\]
\end{assumption}
We note that when Assumption \ref{exchangeability} and Assumption \ref{transportability} hold, we have $\delta_1 = \delta_2 = 0$. Hence Assumption \ref{relax-exchangeability} and Assumption \ref{relax-transportability} are indeed relaxations of Assumption \ref{exchangeability} and Assumption \ref{transportability}. In practice, the value $\delta_1$ and $\delta_2$ may come from the domain knowledge of the problem of interests. The following theorem characterizes the bounds on the ATE in the target population when the treatment is binary.

\begin{theorem}\label{thm-sensitivity}
Under Assumption \ref{consistency}, \ref{positivity-treatment}, \ref{positivity-selection}, \ref{relax-exchangeability} and \ref{relax-transportability}, we have 
\begin{equation*}
    \begin{aligned}
    \psi_a &\,\in \left[\ME [\tau_a(\VV)] -\delta_1 \ME[\MP(A=1-a \mid \VV,S=1)] -\delta_2 \MP(S=0), \right.\\
    &\, \left. \ME [\tau_a(\VV)] +\delta_1 \ME[\MP(A=1-a \mid \VV,S=1)] +\delta_2 \MP(S=0)\right]. \\
    \theta_a &\,\in \left[\ME [\tau_a(\VV)|S=0] -\delta_1 \ME[\MP(A=1-a \mid \VV,S=1) \mid S=0] -\delta_2, \right.\\
    &\, \left. \ME [\tau_a(\VV)|S=0] +\delta_1 \ME[\MP(A=1-a \mid \VV,S=1) \mid S=0] +\delta_2\right].
    \end{aligned}
\end{equation*}
Hence the ATE in the generalization functional $\psi_1 - \psi_0$ is in the interval
\[
\left[ \ME[\tau_1(\VV) - \tau_0(\VV)] -\delta_1 -2\delta_2 \MP(S=0), \ME[\tau_1(\VV) - \tau_0(\VV)] +\delta_1 +2\delta_2 \MP(S=0)  \right],
\]
the ATE in the transportation functional $\theta_1 - \theta_0$ is in the interval
\[
\left[\ME [\tau_1(\VV)-\tau_0(\VV)|S=0] - \delta_1-2\delta_2, \ME [\tau_1(\VV)-\tau_0(\VV)|S=0] + \delta_1+2\delta_2\right].
\]
\end{theorem}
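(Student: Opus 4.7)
The plan is to build the bounds from the inside out: start at the innermost conditional mean $\ME[Y^a\mid \XX,S=1]$, use Assumption \ref{relax-exchangeability} to bound the ``counterfactual'' branch there, iterate to $\ME[Y^a\mid\VV,S=1]$, apply Assumption \ref{relax-transportability} to cross into $\ME[Y^a\mid\VV,S=0]$, and finally take outer expectations to produce $\theta_a$ and $\psi_a$ bounds. The ATE intervals will then follow by pairing an upper bound for $a=1$ with a lower bound for $a=0$ and exploiting the identity $\pi_0(\XX)+\pi_1(\XX)=1$.

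First I would decompose $\ME[Y^a\mid\XX,S=1]$ over the binary treatment:
\[
\ME[Y^a\mid\XX,S=1]=\mu_a(\XX)\,\pi_a(\XX)+\ME[Y^a\mid\XX,A=1-a,S=1]\,\pi_{1-a}(\XX),
\]
using consistency to identify $\ME[Y^a\mid\XX,A=a,S=1]$ with $\mu_a(\XX)$. By Assumption \ref{relax-exchangeability} the unknown branch lies within $\mu_a(\XX)\pm\delta_1$, so
\[
\bigl|\ME[Y^a\mid\XX,S=1]-\mu_a(\XX)\bigr|\leq \delta_1\,\pi_{1-a}(\XX).
\]
Conditioning both sides on $\VV,S=1$ and using the tower property gives
\[
\bigl|\ME[Y^a\mid\VV,S=1]-\tau_a(\VV)\bigr|\leq \delta_1\,\MP(A=1-a\mid \VV, S=1),
\]
since $\ME[\pi_{1-a}(\XX)\mid\VV,S=1]=\MP(A=1-a\mid\VV,S=1)$. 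Assumption \ref{relax-transportability} then adds at most $\delta_2$ when moving from $S=1$ to $S=0$ conditional on $\VV$, yielding a two-sided bound on $\ME[Y^a\mid\VV,S=0]$ with slack $\delta_1\,\MP(A=1-a\mid\VV,S=1)+\delta_2$ around $\tau_a(\VV)$.

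Next I would take outer expectations. For $\theta_a=\ME[\ME(Y^a\mid\VV,S=0)\mid S=0]$ the bound on $\ME[Y^a\mid\VV,S=0]$ integrates directly to the claimed interval centered at $\ME[\tau_a(\VV)\mid S=0]$. For $\psi_a=\ME[Y^a\mid S=1]\MP(S=1)+\ME[Y^a\mid S=0]\MP(S=0)$, I handle the $S=1$ piece using only the $\ME[Y^a\mid\VV,S=1]$ bound (no $\delta_2$ incurred) and the $S=0$ piece using the full bound; iterated expectation collapses the two $\tau_a$ terms into $\ME[\tau_a(\VV)]$ and the two $\delta_1$ terms into $\delta_1\,\ME[\MP(A=1-a\mid\VV,S=1)]$, while the $\delta_2$ contribution appears only with weight $\MP(S=0)$.

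Finally I would form $\psi_1-\psi_0$ and $\theta_1-\theta_0$ by adding the upper bound for $a=1$ to the negative of the lower bound for $a=0$ (and symmetrically for the other endpoint). The key simplification is
\[
\MP(A=0\mid\VV,S=1)+\MP(A=1\mid\VV,S=1)=1,
\]
which collapses the two $\delta_1$ contributions into a single $\delta_1$, while the two $\delta_2$ contributions add to $2\delta_2\,\MP(S=0)$ (generalization) or $2\delta_2$ (transportation). The main bookkeeping hazard is exactly this step: one must not double-count $\delta_1$ by treating the exchangeability slacks at $a=1$ and $a=0$ as independent, and one must be careful that in the $\psi_a$ case the $S=1$ stratum contributes no $\delta_2$ while still contributing a $\delta_1$ term that merges cleanly with the $S=0$ stratum via the tower property.
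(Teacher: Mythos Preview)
Your proposal is correct and follows essentially the same approach as the paper: decompose over the treatment $A$ using Assumption~\ref{relax-exchangeability}, iterate from $\XX$ to $\VV$ via the tower property, and then use Assumption~\ref{relax-transportability} to handle the $S=0$ stratum; the ATE intervals follow exactly as you describe via $\MP(A=0\mid\VV,S=1)+\MP(A=1\mid\VV,S=1)=1$. The only cosmetic difference is ordering: the paper first decomposes $\ME[Y^a\mid\VV]$ over $S$ conditionally on $\VV$ and then works inward to the $\XX$ level, whereas you first obtain the bound at the $\XX$ level, lift it to $\VV$, and then decompose $\psi_a$ over $S$ marginally---but the tower property collapses both organizations to the same bound.
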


Since the efficiency theory in Section \ref{sec:DR-estimation} directly holds for $\ME [\tau_a(\VV)]$ and $[\ME [\tau_a(\VV)|S=0]$, one can use the doubly robust estimator to estimate them efficiently. When the specific values of $\delta_1$ and $\delta_2$ are available, one can directly construct the bounds in Theorem \ref{thm-sensitivity}. If exact domain knowledge on the precise value of $\delta_1$ and $\delta_2$ is unavailable, we can estimate the bounds as a function of $(\delta_1,\delta_2)$, and for example obtain which values of $\delta_1$ and $\delta_2$ substantially change results (e.g., flip the sign of the treatment effects). For instance, when we are confident about Assumption \ref{exchangeability} (e.g., the source dataset comes from a randomized experiment), we can set $\delta_1 =0$. Then the value for $\delta_2$ changes the sign of the effect is $|\ME [\tau_1(\VV)-\tau_0(\VV)|S=0]|/2$. This value can reflect the robustness of our results when the identification assumptions do not necessarily hold.

\section{Efficiency Theory and Doubly Robust Estimation}\label{sec:Efficiency-theory}

In this section we develop the nonparametric theory for estimation of the ATE in the target population. Namely we first derive the efficient influence function, together with the nonparametric efficiency bound. The nonparametric efficiency bound provides a benchmark for efficient estimation in a nonparametric model, indicating the best possible performance in a local asymptotic minimax sense \citep{van2000asymptotic}. Next we propose a doubly robust estimator of $\psi_a$ and $\theta_a$ based on the influence function, which is shown to be asymptotically normal and attain the efficiency bound under additional mild conditions.  

\subsection{Efficient Influence Function and Efficiency Bound}\label{sec:EIF-EB}

We first introduce the problem faced by the plug-in estimator and motivate the study of efficient influence functions. Denote the plug-in estimators of nuisance functions $(\mu_a, \tau_a, \pi_a, \rho)$ as $(\hat{\mu}_a, \hat{\tau}_a, \hat{\pi}, \hat{\rho})$. Based on the identification result for $\psi_a$ after equation \eqref{eq:identify}, a plug-in estimator of $\psi_a$ is then given by
\[
\hat{\psi}_a = \MP_{n} (\hat{\tau}_a) = \frac{1}{n} \sum_{i=1}^{n} \hat{\tau}_a(\VV_i). 
\]
This plug-in estimator would be $\sqrt{n}$-consistent if we used correct parametric models to estimate the nuisance function $\tau_a$. However, there is generally not sufficient background knowledge to ensure correct
specification of such parametric models; thus analysts often use flexible non-parametric methods to avoid model misspecification. However, under such circumstances, the conditional bias of the plug-in estimator is of order $\|\hat{\tau}_a - \tau_a\|$, which is typically slower than the $\sqrt{n}$-rate, perhaps much slower when the nuisance functions are complex and the number of covariates is large. Hence the plug-in estimator generally suffers from slow convergence rates and a lack of tractable limiting distributions. These drawbacks make it difficult to estimate $\psi_a$ accurately and perform statistical inference with plug-in estimators. 

To address these difficulties, one can derive the efficient influence functions of the target functionals. The efficient influence function is critical in non-parametric efficiency theory \citep{bickel1993efficient, tsiatis2006semiparametric, van2000asymptotic,van2003unified, kennedy2022semiparametric}. Mathematically, the influence function is the derivative in a
Von Mises expansion (i.e., distributional Taylor expansion) of the target statistical functional. In the discrete case, it coincides with the Gateaux derivative of the functional when the contamination distribution is a point-mass. Influence functions are important in the following respects. First, the variance of the influence function is equal to the efficiency bound of the target statistical functional, which characterizes the inherent estimation difficulty of the target functional and provides a benchmark to compare against when we construct estimators. Moreover, it allows us to correct for first-order bias in the plug-in estimator and obtain doubly robust-style estimators, which enjoy appealing statistical properties even if non-parametric methods with relatively slow rates are used in nuisance estimation.

In the following discussions, we will first present the efficient influence function of $\psi_a$ and $\theta_a$. We then derive the doubly robust estimator and establish its $\sqrt{n}$-consistency and asymptotic normality under appropriate conditions. The efficient influence functions and efficiency bounds are summarized in the following results.

\begin{lemma}\label{lem-if}
Under an unrestricted nonparametric model, the efficient influence function of $\psi_a$ is given by 
\[
\phi_a^{ge} (\ZZ) = \frac{I(A=a, S=1)(Y-\mu_a(\XX))}{\rho(\VV) \pi_a(\XX)} + \frac{I(S=1)(\mu_a(\XX) - \tau_a(\VV))}{\rho(\VV)} + \tau_a(\VV) - \psi_a
\]
and the efficient influence function of $\theta_a$ is given by
\begin{equation*}
	\begin{aligned}
        \phi_{a}^{\text{tr}} (Z) &=\frac{1}{\MP(S=0)}\left\{\frac{I(A=a, S=1) (1 - \rho(\VV))\left(Y-\mu_{a}(\XX)\right)}{\rho (\VV) \pi_a(\XX)}\right.\\
        &\left.+\frac{I(S=1) (1 - \rho(\VV))\left(\mu_{a}(\XX)-\tau_{a}(\VV)\right)}{\rho(\VV)} +I(S=0)\left[\tau_{a}(\VV)-\theta_{a}\right]\right\}.
        \end{aligned}
\end{equation*}
\end{lemma}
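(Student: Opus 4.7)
The plan is to derive both efficient influence functions via the standard pathwise-derivative technique in the unrestricted nonparametric model. Given the observed-data structure, a single observation $\ZZ$ has density factorizing as
\[
p(s) \, p(\vv \mid s) \, \bigl[p(\xx \mid \vv, s=1) \, p(a \mid \xx, s=1) \, p(y \mid \xx, a, s=1)\bigr]^{I(s=1)},
\]
so the tangent space admits the orthogonal decomposition $\mathcal{T}_S \oplus \mathcal{T}_{\VV \mid S} \oplus \mathcal{T}_{\XX \mid \VV,S=1} \oplus \mathcal{T}_{A \mid \XX,S=1} \oplus \mathcal{T}_{Y \mid \XX,A,S=1}$. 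I take a regular parametric submodel $\{P_t\}$ with score $s(\ZZ)$ split accordingly, and compute $\partial_t \psi_a(P_t)|_{t=0}$ by applying the chain rule to the three nested layers of
\[
\psi_a(P_t) = \int \tau_a^{(t)}(\vv) \, dP_t(\vv), \qquad \tau_a^{(t)}(\vv) = \ME_{P_t}[\mu_a^{(t)}(\XX) \mid \VV=\vv, S=1],
\]
with $\mu_a^{(t)}(\xx) = \ME_{P_t}[Y \mid \XX=\xx, A=a, S=1]$.

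The three resulting pieces correspond exactly to the three terms of the claimed $\phi_a^{ge}$: (i) varying the marginal of $\VV$ over the full population produces $\ME[(\tau_a(\VV)-\psi_a)\,s(\ZZ)]$ after centering; (ii) varying $\MP(\XX \mid \VV, S=1)$ produces a quantity supported on $S=1$ which, via the identity $\MP(S=1,\VV=\vv)/\rho(\vv) = p(\vv)$ used to convert the outer integration against $p(\vv)\,d\vv$ into an observable expectation, rewrites as $\ME\!\left[\tfrac{I(S=1)(\mu_a(\XX)-\tau_a(\VV))}{\rho(\VV)}\,s(\ZZ)\right]$; and (iii) varying $\MP(Y \mid \XX, A=a, S=1)$ produces, by an analogous re-expression that additionally introduces the propensity weight $I(A=a)/\pi_a(\XX)$, the term $\ME\!\left[\tfrac{I(A=a,S=1)(Y-\mu_a(\XX))}{\rho(\VV)\pi_a(\XX)}\,s(\ZZ)\right]$. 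At each step, orthogonality of the tangent-space summands lets me replace the nuisance-specific score component by the full score $s(\ZZ)$, and summing the three contributions gives the stated $\phi_a^{ge}$ (unique in the nonparametric model).

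For $\theta_a$, I write $\theta_a = N/D$ with $N = \ME[I(S=0)\tau_a(\VV)]$ and $D = \MP(S=0)$, note that $D$ has efficient influence function $I(S=0)-\MP(S=0)$, and repeat the three-step derivation for $N$ with the outer integration taken against $dP(S=0,\vv) = (1-\rho(\vv))p(\vv)\,d\vv$ rather than $dP(\vv)$. This simply multiplies the two bias-correction pieces by $(1-\rho(\VV))$ and replaces the outer piece by $I(S=0)\tau_a(\VV)-N$. Applying the ratio rule for influence functions, together with the identity $N = \theta_a \MP(S=0)$ which cancels the remaining constants, then yields exactly $\phi_a^{\mathrm{tr}}$.

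The main obstacle is the careful bookkeeping around the fact that $\XX$ is observed only when $S=1$, whereas the outer expectation in $\psi_a$ runs over $\VV$ in the entire population. This forces one to move repeatedly between conditional laws given $\VV$ versus given $(\VV,S=1)$, and to re-express inner conditional expectations via iterated-expectations identities like $\MP(S=1,\VV=\vv)/\rho(\vv) = p(\vv)$; getting these manipulations right is what produces the precise $\rho$- and $\pi_a$-weights in the EIFs. Once the factorization and the weighting identities are established, the remaining chain-rule calculation and the tangent-space orthogonality argument are routine.
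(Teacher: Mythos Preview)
Your proposal is correct and follows essentially the same pathwise-derivative strategy as the paper: both decompose the score along the factorization of the observed-data likelihood, differentiate the three nested conditional expectations defining $\psi_a$ (and $\theta_a$), and then re-express each resulting piece as an observable weighted expectation via the identities involving $\rho(\VV)$ and $\pi_a(\XX)$. The only cosmetic differences are that the paper proceeds by \emph{verifying} that the claimed $\phi_a^{ge}$ satisfies $\partial_\epsilon \psi_a(\MP_\epsilon)|_{\epsilon=0}=\ME(\phi_a^{ge} S_\epsilon)$ term-by-term rather than constructing it forward, and for $\theta_a$ it differentiates $p(\vv\mid S=0)$ directly instead of using your ratio-rule decomposition $\theta_a=N/D$; both routes yield the same calculation.
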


\begin{theorem}\label{thm-if}
The nonparametric efficiency bound of $\psi_a$ is 
\begin{equation*}
    \begin{aligned}
        \sigma_{a, \text{ge}}^2   &=\mathbb{E}\left[\frac{\MP(S=1|\XX)  \operatorname{Var}(Y \mid \XX, A=a, S=1)}{\rho^2 (\VV) \pi_a(\XX)}\right]\\
        &+\mathbb{E}\left[\frac{ \operatorname{Var}(\mu_a(\XX)|\VV,S=1)}{\rho(\VV)}\right] + \operatorname{Var}(\tau_a(\VV)) 
    \end{aligned}
\end{equation*}
and the nonparametric efficiency bound of $\theta_a$ is 
\begin{equation*}
    \begin{aligned}
        \sigma_{a, \text{tr}}^2 &=\frac{1}{\MP^2(S=0)}\left\{\mathbb{E}\left[\frac{\MP(S=1|\XX) (1 - \rho(\VV))^2  \operatorname{Var}(Y \mid \XX, A=a, S=1)}{\rho^2 (\VV) \pi_a(\XX)}\right]\right.\\
        &\left.+\mathbb{E}\left[\frac{(1 - \rho(\VV))^2 \operatorname{Var}(\mu_a(\XX)|\VV,S=1)}{\rho(\VV)}\right] + \MP(S=0)\operatorname{Var}(\tau_a(\VV)|S=0) \right\}.
    \end{aligned}
\end{equation*}
\end{theorem}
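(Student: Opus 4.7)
The plan is to compute the variance of the efficient influence functions given in Lemma \ref{lem-if}, invoking the standard fact from nonparametric efficiency theory that in a locally nonparametric model the efficiency bound equals $\text{Var}(\phi_a^{\text{ge}})$ and $\text{Var}(\phi_a^{\text{tr}})$, respectively \citep{van2003unified}.

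First I would split $\phi_a^{\text{ge}}$ into the three summands
\[
T_1 = \frac{I(A=a,S=1)(Y-\mu_a(\XX))}{\rho(\VV)\pi_a(\XX)}, \quad T_2 = \frac{I(S=1)(\mu_a(\XX)-\tau_a(\VV))}{\rho(\VV)}, \quad T_3 = \tau_a(\VV) - \psi_a,
\]
and verify that each is mean zero and that the three are mutually uncorrelated. Both properties follow from iterated expectations: conditional on $(\XX, A, S)$, $T_1$ has mean zero by the definition of $\mu_a$, so $T_1$ is orthogonal to every function of $(\XX, A, S, \VV)$ -- in particular to $T_2$ and $T_3$; and conditional on $(\VV, S)$, $T_2$ has mean zero by the definition of $\tau_a$, so $T_2$ is orthogonal to every function of $(\VV, S)$, hence to $T_3$. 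Thus $\text{Var}(\phi_a^{\text{ge}}) = \ME[T_1^2] + \ME[T_2^2] + \ME[T_3^2]$.

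Next I would evaluate each second moment. For $T_1$, conditioning first on $(\XX, A=a, S=1)$ yields $\ME[(Y-\mu_a(\XX))^2 \mid \XX, A=a, S=1] = \sigma_a^2(\XX)$, and then using $\MP(A=a, S=1 \mid \XX) = \pi_a(\XX)\MP(S=1 \mid \XX)$ together with the fact that $\rho(\VV)$ is $\XX$-measurable (since $\VV \subseteq \XX$) gives $\ME[T_1^2] = \ME\left[\MP(S=1\mid\XX)\sigma_a^2(\XX)/(\rho^2(\VV)\pi_a(\XX))\right]$. For $T_2$, conditioning on $(\VV, S=1)$ yields $\ME[(\mu_a(\XX)-\tau_a(\VV))^2 \mid \VV, S=1] = \text{Var}(\mu_a(\XX) \mid \VV, S=1)$, and $\MP(S=1\mid\VV) = \rho(\VV)$ cancels one power of $\rho(\VV)$ in the denominator, giving $\ME[T_2^2] = \ME\left[\text{Var}(\mu_a(\XX) \mid \VV, S=1)/\rho(\VV)\right]$. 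Finally $\ME[T_3^2] = \text{Var}(\tau_a(\VV))$, and summing produces $\sigma_{a,\text{ge}}^2$ as stated.

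The transportation case is entirely analogous. Up to the overall $1/\MP(S=0)$ scaling, $\phi_a^{\text{tr}}$ decomposes into three terms obtained from $T_1, T_2, T_3$ by multiplying the first two by $(1-\rho(\VV))$ and replacing $T_3$ with $I(S=0)(\tau_a(\VV)-\theta_a)$. The same orthogonality argument carries over because $(1-\rho(\VV))$ is a function of $\VV$ and so lies inside the conditioning $\sigma$-algebras used above; and the third term's second moment equals $\MP(S=0)\,\text{Var}(\tau_a(\VV) \mid S=0)$ since $\theta_a = \ME[\tau_a(\VV) \mid S=0]$. The whole calculation reduces to routine bookkeeping of indicators and conditional probabilities, so I do not anticipate a genuine obstacle; the only minor care required is checking that the $(1-\rho(\VV))$ weights do not break the orthogonality, which is immediate from the observation just made.
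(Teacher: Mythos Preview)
Your proposal is correct and follows essentially the same approach as the paper: it computes the variance of the efficient influence functions from Lemma~\ref{lem-if} by showing the three summands are mean-zero and pairwise uncorrelated via iterated conditioning, then evaluates each squared term separately. The paper's appendix carries out exactly this orthogonality argument (giving one cross-term as an illustrative example) after first verifying the pathwise-derivative condition for Lemma~\ref{lem-if}; since you take Lemma~\ref{lem-if} as given, your write-up is simply the final variance step of the paper's combined proof.
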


The efficiency bounds in Theorem \ref{thm-if} show how particular nuisance quantities determine the estimation difficulty of our target functionals. Specifically, the efficiency bounds of $\psi_a$ and $\theta_a$ both depend on 
\begin{itemize}
    \item The inverse propensity score $1/\pi_a(\XX)$ measuring how likely an individual will receive treatment $A=a$. 
    \item The conditional variance $\sigma_a^2(\xx) =\text{Var}(Y \mid \XX=
    \xx, A=a, S=1)$, which measures how much variation of $Y$ can be explained by $\XX$ for subjects receiving treatment $a$ in the source population. 
    \item The conditional variance $ \text{Var}(\mu_a(\XX) \mid \VV=\vv, S=1)$, which measures how much variation of $\mu_a(\XX)$ can be explained by $\VV$ for subjects in the source population.
    \item The variance of $\tau_a(\VV)$ in the target population, i.e. $\text{Var}(\tau_a(\VV))$ in the generalization case and $\text{Var}(\tau_a(\VV)|S=0)$ in the transportation case.
\end{itemize}

There are also some differences in the efficiency bounds of the two functionals. First, the efficiency bound in the transportation case depends on the probability of being in the target population $\MP (S=0)$ explicitly. Moreover, the first term in the efficiency bound $\sigma_{a,\text{ge}}^2$ depends on $\rho(\VV)$ via its reciprocal while the first term in $\sigma_{a,\text{tr}}^2$ depends on the inverse odds of being in the source population $(1-\rho(\VV))/\rho(\VV)$. This implies the inverse odds ratio $(1-\rho(\VV))/\rho(\VV)$ may be a more fundamental quantity than $1/\rho(\VV)$ in transportation problems, and we will see this phenomenon in Section \ref{sec:Quadratic} as well. In addition to parameterizing the inverse odds of being in the source population through $\rho$, one can also express it as:
\[
\frac{1-\rho(\VV)}{\rho(\VV)} = \frac{\MP(S=0)\MP(\VV\mid S=0)}{\MP(S=1)\MP(\VV \mid S=1)}
\]
and parameterize the density ratio $\MP(\VV\mid S=0)/\MP(\VV \mid S=1)$. Then in the estimation step, the inverse odds ratio can be obtained by directly estimating the density ratio \citep{qin1998inferences, sugiyama2012density}.

The efficient influence functions in Theorem \ref{thm-if} generalize those of \citet{dahabreh2019generalizing, dahabreh2020extending} to the setting where there can be a mismatch between the covariates $\VV$ in the target population, and the covariates $\XX$ in the source population. To be concrete, in the special case $\VV=\XX$ the efficient influence functions of $\psi_a$ and $\theta_a$ are

\[
\phi_a^{ge} (\ZZ) = \frac{I(A=a, S=1)(Y-\mu_a(\XX))}{\rho(\XX) \pi_a(\XX)} +  \mu_a(\XX) - \psi_a,
\]
\[
\phi_a^{tr} (\ZZ) = \frac{1}{\MP(S=0)} \left\{\frac{I(A=a, S=1)(1-\rho(\XX))(Y-\mu_a(\XX))}{\rho(\XX) \pi_a(\XX)} + I(S=0)( \mu_a(\XX) - \theta_a) \right\}
\]
and the corresponding efficiency bounds are
\[
\sigma_{a, \text{ge}}^2   =\mathbb{E}\left[\frac{  \operatorname{Var}(Y \mid \XX, A=a, S=1)}{\rho (\XX) \pi_a(\XX)}\right]\\
 + \operatorname{Var}(\mu_a(\XX)) 
\]
\[
\sigma_{a, \text{tr}}^2 =\frac{1}{\MP^2(S=0)}\left\{\mathbb{E}\left[\frac{ (1 - \rho(\XX))^2  \operatorname{Var}(Y \mid \XX, A=a, S=1)}{\rho (\XX) \pi_a(\XX)}\right] + \MP(S=0)\operatorname{Var}(\mu_a(\XX)|S=0) \right\}.
\]
These results can be derived separately starting from the functional $\psi_a = \ME[\ME(Y|\XX,A=a,S=1)]$ and $\theta_a = \ME[\ME(Y|\XX,A=a,S=1)|S=0]$. Alternatively, one can set $\VV=
\XX$ in Lemma \ref{lem-if} and Theorem \ref{thm-if} to obtain the results by noting the second terms vanish in each formula due to $\mu_a(\XX) = \tau_a(\VV)$ and $\text{Var}(\mu_a(\XX)|\XX,S=1) = 0$. From this perspective, the second terms in the general case $\VV \subset \XX$ come from an extra step where we regress the conditional mean $\mu_a(\XX)$ on possible effect modifiers $\VV$. 

As we mentioned above the efficiency bound characterizes the fundamental statistical difficulty of estimating the target functionals, and acts as a nonparametric analog of the Cramer-Rao bound. Specifically, no estimator can have a smaller mean square error than the efficiency bound in a local asymptotic minimax sense, as summarized in the following Corollary \ref{cor:local-minimax}.  

\begin{corollary}\label{cor:local-minimax}
For any estimators $\hat{\psi}_a$ and $\hat{\theta}_a$, we have
\begin{equation*}
    \begin{aligned}
        &\, \inf _{\delta>0} \liminf _{n \rightarrow \infty} \sup _{\mathbb{Q}:\text{TV}(\MP, \mathbb{Q})<\delta} n \mathbb{E}_{\mathbb{Q}}\left[\{\widehat{\psi}_a-\psi_a(\mathbb{Q})\}^2\right] \geq  \sigma_{a, \text{ge}}^2 (\MP)\\
        &\, \inf _{\delta>0} \liminf _{n \rightarrow \infty} \sup _{\mathbb{Q}:\text{TV}(\MP, \mathbb{Q})<\delta} n \mathbb{E}_{\mathbb{Q}}\left[\{\widehat{\theta}_a-\theta_a(\mathbb{Q})\}^2\right] \geq  \sigma_{a, \text{tr}}^2 (\MP)
    \end{aligned}
\end{equation*}
where $\text{TV}(\MP, \mathbb{Q})$ is the total variation distance between $\MP$ and $\mathbb{Q}$ and $\sigma_{a, \text{ge}}^2 (\MP)$ and $\sigma_{a, \text{tr}}^2 (\MP)$ are the nonparametric efficiency bounds in Theorem \ref{thm-if} evaluated at $\MP$.
\end{corollary}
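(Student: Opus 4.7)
The plan is to invoke the classical Hájek-Le Cam local asymptotic minimax theorem (see, e.g., Theorem 25.21 of van der Vaart, \emph{Asymptotic Statistics}, 2000), which says that for any pathwise differentiable functional $\psi$ at $\MP$ with efficient influence function $\tilde\phi_{\MP}$ in the nonparametric tangent space, and any estimator sequence $T_n$,
\[
\sup_{I} \liminf_n \sup_{h \in I} n\, \ME_{\MP_{n,h}} \left[ \left( T_n - \psi(\MP_{n,h}) \right)^2 \right] \geq \text{Var}_{\MP} \left( \tilde\phi_{\MP} \right),
\]
where $I$ ranges over finite subsets of the tangent space and $\MP_{n,h}$ denotes a quadratic-mean-differentiable submodel of $\MP$ along direction $h$ at rate $1/\sqrt{n}$. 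My task is essentially to reduce the supremum over shrinking total-variation balls in the corollary to a supremum over such parametric submodels.

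First I would invoke Lemma \ref{lem-if} and Theorem \ref{thm-if} to conclude that $\psi_a$ and $\theta_a$ are pathwise differentiable at every $\MP$ in the nonparametric model, with efficient influence functions $\phi_a^{\text{ge}}$ and $\phi_a^{\text{tr}}$ whose variances coincide with $\sigma_{a,\text{ge}}^2(\MP)$ and $\sigma_{a,\text{tr}}^2(\MP)$. In the fully nonparametric model the tangent space equals $L_0^2(\MP)$, so the efficient influence function already lies in the tangent closure, and no additional projection is required. This identifies the right-hand side of the LAM bound with exactly the efficiency bounds appearing in the corollary.

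Next I would verify that total-variation neighborhoods absorb the parametric submodels used by the LAM theorem. For any bounded tangent direction $h$, a standard submodel construction (e.g., $d\MP_{n,h}/d\MP \propto 1 + h/\sqrt{n}$, appropriately truncated and normalized) satisfies $\text{TV}(\MP, \MP_{n,h}) = O(n^{-1/2})$. Hence for any fixed $\delta > 0$ and any finite set $I$ of bounded tangent directions, $\MP_{n,h} \in \{\mathbb{Q} : \text{TV}(\MP,\mathbb{Q}) < \delta\}$ for all sufficiently large $n$, so
\[
\sup_{\mathbb{Q} : \text{TV}(\MP,\mathbb{Q}) < \delta} n\, \ME_{\mathbb{Q}} \left[ \left( \hat\psi_a - \psi_a(\mathbb{Q}) \right)^2 \right] \geq \sup_{h \in I} n\, \ME_{\MP_{n,h}} \left[ \left( \hat\psi_a - \psi_a(\MP_{n,h}) \right)^2 \right].
\]
Taking $\liminf_n$ on both sides, then $\sup_I$ on the right, and applying the LAM theorem yields the lower bound $\sigma_{a,\text{ge}}^2(\MP)$. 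Because the left-hand side is nondecreasing in $\delta$, the infimum over $\delta > 0$ is the limit as $\delta \downarrow 0$, which is still at least $\sigma_{a,\text{ge}}^2(\MP)$. The argument for $\hat\theta_a$ and $\sigma_{a,\text{tr}}^2(\MP)$ is identical.

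The main potential obstacle is technical rather than conceptual: ensuring that the chosen parametric submodels $\MP_{n,h}$ keep the relevant nuisance functions (in particular $\rho$ and $\pi_a$) bounded away from zero so that $\psi_a(\MP_{n,h})$ and $\theta_a(\MP_{n,h})$ remain well-defined along the perturbation. This is handled by restricting $h$ to a suitable subset of bounded tangent directions, which still spans $L_0^2(\MP)$ densely and therefore suffices to make the supremum of variances over $I$ attain $\text{Var}_\MP(\phi_a^{\text{ge}})$ and $\text{Var}_\MP(\phi_a^{\text{tr}})$, by continuity of the variance functional on the tangent space.
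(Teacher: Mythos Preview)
Your proposal is correct and matches the paper's approach: the paper does not give a standalone proof of this corollary but simply treats it as an immediate consequence of the classical local asymptotic minimax theorem in \citet{van2000asymptotic}, once Lemma~\ref{lem-if} and Theorem~\ref{thm-if} have established pathwise differentiability and the efficient influence functions. Your write-up is a faithful and more detailed expansion of exactly that reference, including the standard reduction of the total-variation neighborhood to finite collections of $1/\sqrt{n}$-perturbed submodels.
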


We have characterized the efficiency bounds with efficient influence functions, which implies that (without further assumptions) the asymptotic local minimax mean squared error of any estimator scaled by a factor of $n$ cannot be smaller than these bounds \citep{van2000asymptotic}. The next step is to correct for the first-order bias of plug-in-style estimators by instead deriving doubly robust estimators, as detailed in the following subsection.

\subsection{Doubly Robust Estimation}\label{sec:DR-estimation}

After deriving the influence functions, we can correct for the first-order bias of the plug-in estimator via the following doubly robust estimators:
\begin{equation}\label{dr-est-generalization}
\hat{\psi}_{a}^{dr}  = \MP_n \left\{\frac{I(A=a, S=1) \left(Y-\hat{\mu}_{a}(\XX)\right)}{\hat{\rho} (\VV) \hat{\pi}_a(\XX)} +\frac{I(S=1) \left(\hat{\mu}_{a}(\XX)-\hat{\tau}_{a}(\VV)\right)}{\hat{\rho}(\VV)} +\hat{\tau}_{a}(\VV)\right\} , 
\end{equation}
\begin{equation}\label{dr-est-transportation}
\begin{aligned}
    \hat{\theta}_{a}^{dr}  &=\frac{1}{\hat{\MP}(S=0)}\MP_n \left\{\frac{I(A=a, S=1) (1 - \hat{\rho}(\VV))\left(Y-\hat{\mu}_{a}(\XX)\right)}{\hat{\rho} (\VV) \hat{\pi}_a(\XX)}\right.\\
    &\left.+\frac{I(S=1) (1 - \hat{\rho}(\VV))\left(\hat{\mu}_{a}(\XX)-\hat{\tau}_{a}(\VV)\right)}{\hat{\rho}(\VV)} +I(S=0)\hat{\tau}_{a}(\VV)\right\}.
\end{aligned}    
\end{equation}

The doubly robust estimators combine simple plug-in-style estimators with inverse-probability-weighted estimators to correct for the first-order bias. For instance, in the doubly robust estimator $\hat{\psi}_{a}^{dr}$ the last term $\MP_n[\hat{\tau}_a(\VV)]$ is the outcome regression-based plug-in estimator and the first two terms are centered inverse-probability-weighted terms motivating from the influence function in Lemma \ref{lem-if}. Compared with the well-known doubly robust estimator for the ATE 
\[
\MP_n \left \{ \frac{I(A=a)(Y-\hat{\mu}_a(\XX))}{\hat{\pi}_a(\XX)} + \hat{\mu}_a(\XX)  \right \}
\]
in the generalization and transportation setting the participation probability also needs to be modeled and incorporated into the reweighting terms. Moreover, an extra term appears in \eqref{dr-est-generalization} and \eqref{dr-est-transportation} due to further regressing $\mu_a(\XX)$ on $\VV$, as similarly discussed in Section \ref{sec:EIF-EB}.

For simplicity we define the uncentered influence function terms in the brackets above as
 \begin{equation*}
 \varphi_{a}^{\text{ge}} (\ZZ) =\frac{I(A=a, S=1) \left(Y-\mu_{a}(\XX)\right)}{\rho (\VV) \pi_a(\XX)} +\frac{I(S=1) \left(\mu_{a}(\XX)-\tau_{a}(\VV)\right)}{\rho(\VV)} +\tau_{a}(\VV).
\end{equation*}
\begin{equation*}
    \begin{aligned}
    \varphi_{a}^{\text{tr}} (\ZZ) &=\frac{I(A=a, S=1) (1 - \rho(\VV))\left(Y-\mu_{a}(\XX)\right)}{\rho (\VV) \pi_a(\XX)}\\
    &+\frac{I(S=1) (1 - \rho(\VV))\left(\mu_{a}(\XX)-\tau_{a}(\VV)\right)}{\rho(\VV)} +I(S=0)\tau_{a}(\VV).
    \end{aligned}
\end{equation*}

The following theorems characterize the properties of these new doubly robust estimators.

\begin{theorem}[Doubly robust estimation of generalization functional]\label{thm-dr-generalization}
Suppose the nuisance functions $(\hat{\mu}_a, \hat{\tau}_a ,\hat{\pi}_a, \hat{\rho})$ are estimated from a separate independent sample. Further assume our estimates satisfy $\left\|\hat{\varphi}_{a}^{\text{ge}}-\varphi_{a}^{\text{ge}}\right\|_{2}=o_{\MP}(1)$, and $\hat{\rho}(\VV), \hat{\pi}_a (\XX) \geq \epsilon >0$ for some positive constant $\epsilon$. 
Then we have
\[
\hat{\psi}_a^{\text{dr}} - \psi_a = \mathbb{P}_n (\phi_a^{\text{ge}}) + O_\MP\Big(\left\|\hat{\mu}_{a}-\mu_{a}\right\|\|\hat{\pi}_a - \pi_a\| + \|\hat{\rho}-\rho\|\left\|\hat{\tau}_a-\tau_{a}\right\| \Big) + o_{\MP}(1/\sqrt{n}) .
\]
If the nuisance estimators further satisfy the following convergence rate
\begin{equation*}
	   \begin{aligned}
&\left\|\hat{\mu}_{a}-\mu_{a}\right\|\|\hat{\pi}_a - \pi_a\|=o_{\MP}\left(1/\sqrt{n}\right), \\
&\|\hat{\rho}-\rho\|\left\|\hat{\tau}_a-\tau_{a}\right\|=o_{\MP}\left(1/\sqrt{n}\right),
\end{aligned}
\end{equation*}
then $\hat{\psi}_a^{\text{dr}}$ is $\sqrt{n}$-consistent and asymptotically normal, with asymptotic variance equal to the nonparametric efficiency bound equal to $\sigma_{a, \text{ge}}^2$ of Theorem \ref{thm-if}, and so also locally asymptotic minimax optimal in the sense of Corollary \ref{cor:local-minimax}.
\end{theorem}

\begin{theorem}[Doubly robust estimation of transportation functional]\label{thm-dr-transportation}
Suppose the nuisance functions $\hat{\mu}_a, \hat{\tau}_a ,\hat{\pi}_a, \hat{\rho}$ are estimated from a separate independent sample. Further assume our estimates satisfy $\left\|\hat{\varphi}_{a}^{\text{tr}}-\varphi_{a}^{\text{tr}}\right\|_{2}=o_{\MP}(1), \MP(S=0)>0, \hat{\rho}(\VV), \hat{\pi}_a (\XX) \geq \epsilon >0$ for some positive constant $\epsilon$. Then we have
\[
\hat{\theta}_a^{\text{dr}} - \theta_a = \mathbb{P}_n (\phi_a^{\text{tr}}) +O_\MP\Big(\left\|\hat{\mu}_{a}-\mu_{a}\right\|\|\hat{\pi}_a - \pi_a\| + \|\hat{\rho}-\rho\|\left\|\hat{\tau}_a-\tau_{a}\right\| \Big) + o_{\MP}(1/\sqrt{n}).
\]
If the nuisance estimators further satisfy the following convergence rate
\begin{equation*}
	   \begin{aligned}
&\left\|\hat{\mu}_{a}-\mu_{a}\right\|\|\hat{\pi}_a - \pi_a\|=o_{\MP}\left(1/\sqrt{n}\right), \\
&\|\hat{\rho}-\rho\|\left\|\hat{\tau}_{a}-\tau_{a}\right\|=o_{\MP}\left(1/\sqrt{n}\right),
        \end{aligned}
\end{equation*}
then $\hat{\theta}_a^{\text{dr}}$ is $\sqrt{n}$-consistent and asymptotically normal, with asymptotic variance equal to the nonparametric efficiency bound equal to $\sigma_{a, \text{tr}}^2$ of Theorem \ref{thm-if}, and so also locally asymptotic minimax optimal in the sense of Corollary \ref{cor:local-minimax}.
\end{theorem}
\begin{remark}
For simplicity, we assume all the nuisance estimators are constructed from a separate independent sample with the same size $n$ as the estimation sample over which $\MP_n$ takes an average. Using the same sample to both estimate the nuisance functions and average the (uncentered) influence functions may also yield similar results by further assuming empirical process assumptions to avoid overfitting. For instance, one can assume the nuisance functions and their estimates belong to a Donsker class and arrive at similar estimation guarantees. However, such assumptions are hard to verify in practice and sample splitting enables us to get rid of them: one can randomly split the data in folds and use different folds to estimate the nuisance functions and average the influence functions. To recover full sample size efficiency, one can swap the folds, repeat the same procedures and finally average the results, known as cross-fitting and commonly used in the literature \citep{bickel1988estimating, robins2008higher, zheng2010asymptotic, chernozhukov2018double, kennedy2020sharp}. In this paper all the results are based on a single split procedure, with the understanding that extending to procedures based on cross-fitting is straightforward. \\
\end{remark}
We note that in Theorem \ref{thm-dr-generalization} and \ref{thm-dr-transportation} we do not require that each individual nuisance function converges at $\sqrt{n}$-rate as might be required in the plug-in estimator case. The condition is instead on the product of convergence rates, i.e. $\left\|\hat{\mu}_{a}-\mu_{a}\right\|\|\hat{\pi}_a - \pi_a\| = o_{\MP}(1/\sqrt{n})$ and $\left\|\hat{\tau}_{a}-\tau_{a}\right\|\|\hat{\rho} - \rho\| = o_{\MP}(1/\sqrt{n})$. This shows the key property of doubly robust estimators: after we correct for the first-order bias, the error only involves second-order products and hence is ``doubly small". In applications, such conditions on the convergence rate are much easier to satisfy. Estimators like ours that have errors that involve multiple nuisance functions are sometimes referred to as multiply robust estimators \citep{tchetgen2012semiparametric}, since there are multiple ways in which the error term is $o_{\MP}(1/\sqrt{n})$. For instance, (1) quarter rate $\left\|\hat{\mu}_{a}-\mu_{a}\right\| = o_{\MP}(n^{-1/4})$ and $\|\hat{\pi}_a - \pi_a\| = O_{\MP}(n^{-1/4})$ or (2) $\left\|\hat{\mu}_{a}-\mu_{a}\right\| = o_{\MP}(1)$ and $\|\hat{\pi}_a - \pi_a\| = O_{\MP}(n^{-1/2})$ (e.g., we know exactly the parametric model for propensity score) both satisfy the condition; further, $n^{-1/4}$-style rates can be attained under appropriate smoothness, sparsity, or other structural assumptions. So we may apply flexible non-parametric methods (e.g. random forests) or high dimensional models (e.g. Lasso regression) to estimate the nuisance functions and still maintain the $\sqrt{n}$-consistency and asymptotic normality of our effect estimator; this is the main advantage of doubly robust estimators over plug-ins. In the special case $\VV=\XX$ (i.e. the source and the target population share the same sets of covariates), conditions in Theorem \ref{thm-dr-generalization} and Theorem \ref{thm-dr-transportation} are reduced to $\left\|\hat{\mu}_{a}-\mu_{a}\right\|\|\hat{\pi}_a - \pi_a\| = o_{\MP}(1/\sqrt{n})$ and $\left\|\hat{\mu}_{a}-\mu_{a}\right\|\|\hat{\rho} - \rho\| = o_{\MP}(1/\sqrt{n})$. Compared with the ATE case where we only require $\|\hat{\mu}_{a}-\mu_{a}\|\|\hat{\pi}_a - \pi_a\| = o_{\MP}(1/\sqrt{n})$, extra conditions on the convergence rates of modeling participation probability is needed in the generalization and transportation problem. 

We conclude this section with additional comments on estimating $\tau_a(\VV)$. In practice, a simple plug-in approach is to first estimate $\mu_a(\XX)$ for each data point and then further regress $\hat{\mu}_a$ on partial covariates $\VV$ in the source dataset, known as regression with estimated or imputed outcomes \citep{kennedy2023towards, foster2019orthogonal}. However, the estimation error of $\tau_a$ is closely tied to that of $\mu_a$. When $\hat{\mu}_a$ suffers from a slow convergence rate, $\hat{\tau}_a$ may inherit this slow rate as well. Faster convergence rates of estimating $\tau_a(\VV)$ can be achieved by adopting the stability framework in \cite{kennedy2023towards} or orthogonal statistical learning framework in \cite{foster2019orthogonal}. For example, instead of directly regressing $\hat{\mu}_a(\XX)$ on $\VV$, one can estimate ${\mu}_a$ and $\pi_a$, and construct a pseudo-outcome
\[
\hat{g}(\ZZ)=\frac{I(A=a)(Y-\hat{\mu}_a(\XX))}{\hat{\pi}_a(\XX)} + \hat{\mu}_a(\XX)
\]
for each data point in the source dataset. Similar to how doubly robust estimators reduce bias in estimating the ATE, regressing $\hat{g}(\ZZ)$ on $\VV$ yields an estimator of $\tau_a$ with an error rate depending on the product of the estimation rates for $\hat{\mu}_a$ and $\hat{\pi}_a$ (under appropriate conditions from \citet{kennedy2023towards}). This estimator offers a faster convergence rate than naive plug-in approaches and can achieve the oracle estimation rate for $\tau_a$ in a broader regime. For further details, we refer readers to \citet{kennedy2023towards}.

\section{Minimax Lower Bounds}\label{sec:Minimax}

In Section \ref{sec:Efficiency-theory} we established local asymptotic minimax optimality of doubly robust estimators (under certain conditions). In this section we will examine the minimax lower bounds from a global perspective, i.e., the minimax rate over suitable model classes, more generally when parametric $\sqrt{n}$ rates are not attainable. The minimax rate (in terms of mean-squared-error) of a statistical functional $\theta = \theta(\MP)$ over model class $\mathcal{P}$ is defined as
\[
\inf_{\hat{\theta}} \sup_{\MP \in \mathcal{P}} \left( \mathbb{E}_{\MP} (\hat{\theta} - \theta(\MP))^2 \right),
\]
where the infimum ranges over all possible estimators. It provides an important benchmark to compare against in constructing estimators. If one estimator has estimation error guarantees matching the minimax rate, then one may stop searching for estimators that can achieve a smaller estimation error and conclude the estimator is optimal in terms of worst-case rates. If the minimax rate is not achieved, one may study alternative estimators with smaller statistical error or study sharper bounds for the problem. In this section we derive the fundamental minimax lower bounds in estimating ATE in the target population under the special case $\VV=\XX$ (so the source population and target population share the same set of covariates). We introduce the ideas and techniques that are useful in deriving minimax rates in Appendix \ref{sec:minimax-general}. The minimax lower bounds for estimating the ATE in the target population, derived using these techniques, are presented in the remainder of this section.

\subsection{Minimax Lower Bounds in Generalization and Transportation}\label{sec:minimax-rates}
In the generalization and transportation setting, consider the target functionals in the case $\VV=\XX$. When the identification assumptions hold we can identify the effects as
\begin{equation}\label{eq:identification-v=x}
    \begin{aligned}
    \psi_a =&\, \mathbb{E}[Y^a] = \mathbb{E}\{\mathbb{E}[Y|\XX,S=1,A=a]\} \\
    \theta_a =&\, \mathbb{E}[Y^a|S=0] = \mathbb{E}\{\mathbb{E}[Y|\XX,S=1,A=a]|S=0\}.
    \end{aligned}
\end{equation}
We restrict the range of covariates $\XX$ as $[0,1]^d$ in this section. Consider the following model class for the generalization and transportation functionals, respectively:
\begin{equation*}
    \begin{aligned}
    \mathcal{P}_{\text{ge}}=\{(f, \rho, \pi_a, \mu_a): &\, \frac{1}{\rho \pi_a} \text{ is } \alpha\text{-smooth },\mu_a \text{ is } \beta\text{-smooth }, f\rho \pi_a=1 / 2, \\
    &\, \rho\pi_a \text{ and } \mu_a \text{ are bounded away from 0 and 1}\},
    \end{aligned}
\end{equation*}
\begin{equation*}
    \begin{aligned}
    \mathcal{P}_{\text{tr}}=\{(f, \rho, \pi_a, \mu_a): &\, \frac{1-\rho}{\rho \pi_a} \text{ is } \alpha\text{-smooth },\mu_a \text{ is } \beta\text{-smooth }, f\rho \pi_a=1 / 2, \\
    &\, \rho \pi_a \text{ and } \mu_a \text{ are bounded away from 0 and 1}\}.
    \end{aligned}
\end{equation*}
Here $f$ is the density of covariates $\XX$. We note that the selection/treatment probability $\rho \pi_a$ is parameterized together in the model class $\mathcal{P}_{\text{ge}}$. In other words, one can intuitively view $I(S=1, A=a)$ (i.e., both selected in the source population and receive treatment $a$) as a new treatment. The probability of getting treated under this ``compound" treatment is exactly $\rho \pi_a$. One minor difference between $\mathcal{P}_{\text{ge}}$ and $\mathcal{P}_{\text{tr}}$ is that in $\mathcal{P}_{\text{ge}}$ we impose smoothness conditions on $1/\rho \pi_a$ while in $\mathcal{P}_{\text{tr}}$ the smoothness condition is imposed on $(1-\rho)/\rho \pi_a$. As pointed out in the discussion on efficiency bounds in Section \ref{sec:EIF-EB} and quadratic estimation in Section \ref{sec:Quadratic}, the inverse odds ratio $(1-\rho)/\rho$ turns out to be a more fundamental quantity in transportation problems. To be consistent with these results, we also impose a smoothness condition on the inverse odds ratio in deriving the minimax rate.


The following theorem characterizes the minimax lower bounds in estimating generalization and transportation functionals. The strategy to construct two distribution classes $\MP_{\boldsymbol{\lambda}}$ and $\mathbb{Q}_{\boldsymbol{\lambda}}$ is similar to the techniques in proving the minimax rate for ATE, as in \cite{robins2009semiparametric}. 

\begin{theorem}\label{thm-minimax}
Let $s=(\alpha+\beta)/2$ denote the average smoothness of the nuisance functions. The minimax rate of generalization functional $\psi_a$ over $\mathcal{P}_{\text{ge}}$ is lower bounded by 
\[
\inf_{\hat{\psi}_a} \sup_{\MP \in \mathcal{P}_{\text{ge}}} \left( \mathbb{E}_{\MP} (\hat{\psi}_a - \psi_a)^2 \right)^{1/2} \gtrsim \begin{cases}n^{-1/(1+d/4s)} & \text { if } s<d/4 \\ n^{-1/2} & \text { otherwise. }\end{cases}
\]
The minimax rate of transportation functional $\theta_a$ over $\mathcal{P}_{\text{tr}}$ is lower bounded by 
\[
\inf_{\hat{\theta}_a} \sup_{\MP \in \mathcal{P}_{\text{tr}}} \left( \mathbb{E}_{\MP} (\hat{\theta}_a - \theta_a)^2 \right)^{1/2} \gtrsim \begin{cases}n^{-1/(1+d/4s)} & \text { if } s<d/4 \\ n^{-1/2} & \text { otherwise. }\end{cases}
\]
\end{theorem}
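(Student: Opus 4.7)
The proof naturally splits at the threshold $s = d/4$. In the $s \geq d/4$ regime the $n^{-1/2}$ lower bound is essentially parametric: apply Le Cam's two-point method to a regular one-dimensional submodel $\{\MP_t\}$ passing through a base $\MP_0 \in \mathcal{P}_{\text{ge}}$ (respectively $\mathcal{P}_{\text{tr}}$), choosing $t \asymp n^{-1/2}$ so that the Hellinger product $H^2(\MP_t^n, \MP_0^n) = O(1)$ while the functional moves by $\Theta(n^{-1/2})$ along the score direction of the efficient influence function of Lemma \ref{lem-if}. This is essentially the global version of the local asymptotic minimax Corollary \ref{cor:local-minimax}; restricting to the smoothness class can only tighten the supremum and thus cannot reduce the bound.

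For $s < d/4$, I would follow the wavelet-perturbation template of \cite{robins2009semiparametric}. Partition $[0,1]^d$ into $m^d$ cubes $B_j$ with $m$ to be tuned, and let $\{b_j\}$ be translates of a single smooth bump supported on $[0,1/m]^d$ with $\int b_j = 0$, $\|b_j\|_\infty \leq 1$, $\int b_j^2 \asymp m^{-d}$, and with Hölder seminorm of order $s$ scaling so that $m^{-s} b_j \in \mathcal{H}(s)$. I would construct a two-prior family $\{\MP_{\boldsymbol{\lambda}, \boldsymbol{\nu}}\}$ indexed by independent Rademacher vectors $\boldsymbol{\lambda}, \boldsymbol{\nu} \in \{-1,+1\}^{m^d}$, in which the outcome regression is perturbed as $\mu_a^{\boldsymbol{\lambda}}(x) = 1/2 + c_1 m^{-\beta} \sum_j \lambda_j b_j(x)$, while the reciprocal $1/(\rho \pi_a)$ (in the generalization case) or the inverse odds $(1-\rho)/(\rho \pi_a)$ (in the transportation case) is perturbed by $c_2 m^{-\alpha} \sum_j \nu_j b_j$. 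The design constraint $f \rho \pi_a = 1/2$ then determines the marginal $f$, and small enough constants $c_1, c_2$ ensure that all boundedness and positivity requirements of $\mathcal{P}_{\text{ge}}$ (respectively $\mathcal{P}_{\text{tr}}$) are met. The outcome is $\mathrm{Bern}(\mu_a^{\boldsymbol{\lambda}}(X))$ conditional on $(A=a, S=1)$.

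The core computation has three standard ingredients. First, a Von Mises expansion of $\psi_a$ (or $\theta_a$) around the base distribution shows that the terms linear in $\boldsymbol{\lambda}$ and $\boldsymbol{\nu}$ vanish because $\int b_j = 0$, leaving a bilinear contribution proportional to $m^{-(\alpha+\beta)-d}\sum_j \lambda_j \nu_j$, whose typical magnitude under the product prior is $m^{-(\alpha+\beta)-d/2} = m^{-2s-d/2}$ by a Khintchine-type bound. Second, Bernoulli KL expansions together with the disjoint supports of the $b_j$ give a per-sample chi-squared divergence from the null of order $m^{-2\alpha} + m^{-2\beta}$. Third, an Ingster-type mixture-against-point testing argument (sharper than Assouad's lemma, which would only give rates in $\min(\alpha,\beta)$) shows that the bilinear signal is undetectable whenever $n \cdot m^{-2s} \lesssim m^{d/2}$; balancing the signal $m^{-2s-d/2}$ against this indistinguishability condition gives $m \asymp n^{2/(4s+d)}$ and produces the claimed rate $n^{-2s/(4s+d)} = n^{-1/(1+d/4s)}$.

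The main obstacle, and the reason the two model classes impose smoothness on \emph{different} composite nuisance functions, is that the bilinear bias governing the lower bound takes the form $\int \mu_a(x)\, q(x)\, dx$ with $q \propto 1/(\rho\pi_a)$ for the generalization functional but $q \propto (1-\rho)/(\rho\pi_a)$ for the transportation functional (arising because $\theta_a = \MP(S=0)^{-1}\int \mu_a(x)(1-\rho(x))f(x)\,dx$). Perturbing these bias-carrying quantities directly is what makes the bilinear expansion symmetric and produces the same rate in both cases. Verifying the precise constants, checking that the constructed $\rho$ and $\pi_a$ separately remain bounded away from $0$ and $1$ once $(1-\rho)/(\rho\pi_a)$ is prescribed, and confirming that the transportation functional's normalization $\MP(S=0)$ does not absorb the signal, is the main case-specific work beyond the standard RLTV template.
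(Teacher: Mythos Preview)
Your smooth-regime sketch is fine and matches the paper. In the low-smoothness regime, however, the construction does not deliver the claimed rate.

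The problem is the use of two \emph{independent} Rademacher vectors $\boldsymbol\lambda,\boldsymbol\nu$. As you yourself compute, the functional perturbation is the bilinear form $c\,m^{-(\alpha+\beta)-d}\sum_j \lambda_j\nu_j$, which under the product prior has mean zero and typical size $m^{-(\alpha+\beta)-d/2}=m^{-2s-d/2}$. Plugging in your own choice $m\asymp n^{2/(4s+d)}$ gives $m^{-2s-d/2}=n^{-(4s+d)/(4s+d)}=n^{-1}$, not $n^{-2s/(4s+d)}$ (and neither matches the target rate, which is $n^{-4s/(4s+d)}$). More fundamentally, a Le Cam/Ingster argument requires the functional values under the two hypotheses to be separated, and with independent $\boldsymbol\lambda,\boldsymbol\nu$ the bilinear sum is centered at zero; you have given no mechanism to extract a nonzero separation of the right order from it.

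The fix, and what the paper does, is to perturb both nuisances with the \emph{same} Rademacher vector $\boldsymbol\lambda$: set $1/(\rho\pi_a)$ (or $(1-\rho)/(\rho\pi_a)$) equal to a constant plus $k^{-\alpha/d}\sum_j\lambda_j H_j$ and $\mu_a$ equal to a constant plus $k^{-\beta/d}\sum_j\lambda_j H_j$, and then compare the two mixture families $\MP_{\boldsymbol\lambda}$ (only one nuisance perturbed) and $\mathbb{Q}_{\boldsymbol\lambda}$ (both perturbed). Because $\lambda_j^2\equiv 1$, the cross term becomes $\sum_j\lambda_j^2\int H_j^2=\text{const}$ and the separation $|\psi_a(\mathbb{Q}_{\boldsymbol\lambda})-\psi_a(\MP_{\boldsymbol\lambda})|\asymp k^{-(\alpha+\beta)/d}=m^{-2s}$ is \emph{deterministic}. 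One then bounds the Hellinger distance between the two $\boldsymbol\lambda$-mixtures via the Robins--Li--Tchetgen Tchetgen--van der Vaart lemma (Lemma~\ref{bound-Hellinger} here), obtaining $H^2\lesssim n^2 k^{-1}(k^{-2(\alpha+\beta)/d}+k^{-4(\alpha\wedge\beta)/d})$; choosing $k\asymp n^{2d/(4s+d)}$ makes this $O(1)$ and yields the correct lower bound $k^{-(\alpha+\beta)/d}\asymp n^{-4s/(4s+d)}$. For the transportation functional the paper simply takes $\rho$ constant so that the construction reduces to the generalization case.
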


Although the minimax lower bounds above are the same as those for the ATE, we believe these results are not trivial, and presenting them helps us gain a more complete and comprehensive insight into generalization and transportation functionals. Following the efficiency theory for the ATE, the doubly robust estimator of $\psi_a$ and $\theta_a$ may achieve the minimax rate under some regimes. For instance, under the conditions in Theorem \ref{thm-dr-generalization} and assuming  $\|\hat{\pi}_a \hat{\rho} - \pi \rho\| \|\hat{\mu}_a - \mu_a\| = o_{\MP}(1/\sqrt{n})$, the doubly robust estimator is $\sqrt{n}-$consistent and attains the minimax rate (this corresponds to the smooth regime $s \geq d/4$). However, in the less smooth regime $s < d/4$, even if we can estimate $\rho \pi_a$ at a minimax rate (i.e. $\|\hat{\pi}_a \hat{\rho} - \pi \rho\| = O_{\MP}(n^{-\frac{\alpha}{2\alpha + d}})$ and $\|\hat{\mu}_a - \mu_a\| = O_{\MP}(n^{-\frac{\beta}{2\beta + d}})$), the necessary condition for the doubly robust estimator to achieve the rate $n^{-(2 \alpha+2 \beta) /(2 \alpha+2 \beta+d)}$ is
\[
\frac{\alpha}{2\alpha+d} + \frac{\beta}{2\beta+d} \geq \frac{2\alpha + 2\beta}{2\alpha + 2\beta +d},
\]
which can be a restrictive condition and motivates us to improve the doubly robust estimator. The main potential drawback of the doubly robust estimator is that we only correct for the first-order bias of the plug-in estimator. In the following section, we propose a higher-order/quadratic estimator based on second-order Von Mises expansion \citep{robins2009quadratic}, which also takes second-order bias into consideration and achieves the minimax lower bound in a broader regime.

\section{Higher-Order Estimation}\label{sec:Quadratic}

In this section we study a higher-order (quadratic) estimator of the functionals of interest. An introduction to quadratic Von Mises calculus is provided in Appendix \ref{sec:Quadratic-VonMises} . We apply the second-order Von Mises expansion to generalization and transportation functionals in the case $\VV=\XX$ and propose quadratic estimators of them in Section \ref{sec:Quadratic-estimation}. The error guarantees of our estimators are then established under mild assumptions.

\subsection{Quadratic Estimators in Generalization and Transportation} \label{sec:Quadratic-estimation}

We now present our higher-order estimator. 
We first define some objects that will be useful in presenting the theory of this section. Namely let:
\begin{equation*}
    \begin{aligned}
    \mathbf{\Omega} = &\, \int \bb(\xx)\bb(\xx)^{\top} \rho(\xx) \pi_a(\xx) dF(\xx), \\
    \widehat{\mathbf{\Omega}} = &\, \int \bb(\xx)\bb(\xx)^{\top} \hat{\rho}(\xx) \hat{\pi}_a(\xx) d\hat{F}(\xx), \\
    \Pi_{\bb} (g) (\xx) =&\,  \bb(\xx)^{\top} \mathbf{\Omega}^{-1}\int \bb(\uu)g(\uu)\rho(\uu) \pi_a(\uu) dF(\uu),\\
    \widehat{\Pi}_{\bb} (g) (\xx) =&\,  \bb(\xx)^{\top} \widehat{\mathbf{\Omega}}^{-1}\int \bb(\uu)g(\uu)\rho(\uu) \pi_a(\uu) dF(\uu).\\
    \end{aligned}
\end{equation*}
Here $\bb: \mathbb{R}^d \mapsto \mathbb{R}^k$ is a $k$-dimensional basis ($k$ corresponds to the dimension of the subspace $L$ in Section \ref{sec:Quadratic-VonMises}). $\mathbf{\Omega}$ is the Gram matrix of basis $\bb$ if we define the inner product between two functions $f,g$ as $\langle f,g \rangle = \int f(\xx)g(\xx)\rho(\xx)\pi_a(\xx) dF(\xx)$. $\widehat{\mathbf{\Omega}}$ is the estimated Gram matrix where all nuisance functions are replaced with their estimators. ${\Pi}_{\bb} (g)$ is the projection of $g$ onto the linear span of $\bb$ and $\widehat{\Pi}_{\bb} (g)$ is the estimated projection of $g$ where the gram matrix $\mathbf{\Omega}$ is replaced with its estimator $\widehat{\mathbf{\Omega}}$. With this notation, the non-centered first-order and (approximate) second-order influence functions of the generalization functional are 
\begin{equation*}
    \begin{aligned}
    \phi_{a,1}^{\text{ge}}(\ZZ) = &\, \frac{I(S=1,A=a)}{\rho(\XX)\pi_a(\XX)}(Y-\mu_a(\XX)) + \mu_a(\XX), \\
    \phi_{a,2}^{\text{ge}}(\ZZ_1, \ZZ_2) = &\, -I(S_1=1,A_1=a)(Y_1-\mu_a(\XX_1))\bb(\XX_1)^{\top}\mathbf{\Omega}^{-1}\bb(\XX_2) \frac{I(S_2=1,A_2=a) - \rho(\XX_2)\pi_a(\XX_2)}{\rho(\XX_2)\pi_a(\XX_2)}.
    \end{aligned}
\end{equation*}
Following the discussions in Section \ref{sec:Quadratic-VonMises}, a quadratic estimator is 
\[
\hat{\psi}_{a}^{qr} = \mathbb{P}_n[\hat{\phi}_{a,1}^{\text{ge}}(\ZZ)] + \mathbb{U}_n [\hat{\phi}_{a,2}^{\text{ge}}(\ZZ_1, \ZZ_2)],
\]
where all the nuisance functions and $\mathbf{\Omega}$ are replaced with their estimators. Note that the plug-in estimator $\psi(\hat{\MP})$ in the general quadratic estimator \eqref{eq:qr-estimator} cancels the centralization term in the centered first-order influence function $\phi_1(\ZZ,\hat{\MP})$. Hence in $\hat{\psi}_a^{qr}$ (where $\phi_{a,1}^{\text{ge}}(\ZZ)$ is the non-centered influence function) the plug-in term disappears.
In the following discussions, we first present a general theorem summarizing the conditional bias and variance of the quadratic estimator $\hat{\psi}_a^{qr}$, without assuming any conditions on the convergence rates of nuisance estimation. Then we examine the assumptions needed for the quadratic estimator $\hat{\psi}_a^{qr}$ to achieve the minimax optimal rate.

\begin{theorem}[Quadratic estimation of generalization functional]\label{thm-qr-generalization}

Assume all the nuisance functions $\hat{\mu}_a, \hat{\rho}, \hat{\pi}_a, \hat{F}$ are estimated from a separate training sample $D^n$. Further assume that $\rho, \pi, \hat{\rho}, \hat{\pi}$ are all bounded away from zero, the eigenvalues of $\mathbf{\Omega}, \widehat{\mathbf{\Omega}}$ are bounded away from zero and infinity. Then the conditional bias and variance of $\hat{\psi}_a^{qr}$ (giving the training data to estimate the nuisance functions) are bounded as
\begin{equation*}
    \begin{aligned}
    |\mathbb{E}[\hat{\psi}_a^{qr}|D^n] - \psi_a| \lesssim &\, \left\|(\II-\Pi_{\bb})   \left(\frac{1}{\hat{\rho}\hat{\pi}_a} - \frac{1}{\rho\pi_a} \right) \right\|_w  \left\|(\II-\Pi_{\bb}) \left( \hat{\mu}_a - \mu_a\right) \right\|_w \\
    + &\, \left\| \frac{1}{\hat{\rho}\hat{\pi}_a} - \frac{1}{\rho\pi_a} \right\|_w \left\|  \hat{\mu}_a - \mu_a \right\|_w \|\widehat{\mathbf{\Omega}}^{-1} - \mathbf{\Omega}^{-1}\|. \\
    \operatorname{Var}(\hat{\psi}_a^{qr}|D^n) \lesssim &\, \frac{1}{n}+ \frac{k}{n^2}
    \end{aligned}
\end{equation*}
where the weighted $L_2$ norm of a function $g$ is $\|g\|_w^2 = \int g^2(\xx)\pi_a(\xx)\rho(\xx) dF(\xx)$.
\end{theorem}

The boundedness assumption on the eigenvalues of $\mathbf{\Omega}$ can be implied by boundedness of $\rho, \pi_a$ and $\frac{d F}{d \nu}$ (the density of $F$ with respect to an underlying measure $\nu$) together with the assumption that $\int \bb(\xx) \bb(\xx)^{\top} d\nu(\xx)$ is positive definite. See Proposition 2.1 in \cite{belloni2015some} and Proposition 8 in \cite{kennedy2022minimax} for more detailed discussions. From Theorem \ref{thm-qr-generalization} we see the conditional bias is mainly composed of two parts. The first term 
\[
\left\|(\II-\Pi_{\bb})   \left(\frac{1}{\hat{\rho}\hat{\pi}_a} - \frac{1}{\rho\pi_a} \right) \right\|_w  \left\|(\II-\Pi_{\bb}) \left( \hat{\mu}_a - \mu_a\right) \right\|_w
\]
is the approximation error of the projection $\Pi_{\bb}$ and corresponds to the representational error discussed in Section \ref{sec:Quadratic-VonMises}. The second term 
\[
\left\| \frac{1}{\hat{\rho}\hat{\pi}_a} - \frac{1}{\rho\pi_a} \right\|_w \left\|  \hat{\mu}_a - \mu_a \right\|_w \|\widehat{\mathbf{\Omega}}^{-1} - \mathbf{\Omega}^{-1}\|
\]
is a third-order error term and corresponds to the remainder term $R_3(\hat{\MP}, \MP)$ discussed in Section \ref{sec:Quadratic-VonMises}. There is also an extra term $k/n^2$ in the conditional variance of the proposed higher-order estimator $\hat{\psi}_a^{qr}$. Hence we need to carefully select $k$ to balance the representational error, third-order error term, and the extra term in the variance. 

Note that we need some approximation guarantees of the basis $\bb$ to ensure that the representational error is small. Concretely, we impose the following uniform approximation assumption on $\bb$.

\begin{assumption}\label{assume-approximation}
For any $s>0$ and $g \in \mathcal{H}(s)$, the basis $\bb$ satisfies
\[
\left\|\left(\II-\Pi_{\bb}\right)g\right\|_w \lesssim k^{-s / d}.
\]
\end{assumption}
Assumption \ref{assume-approximation} holds for a wide class of basis functions. For instance, when the function class containing $g$ is supported on a convex and compact subset of $\mathbb{R}^d$, the approximation is valid even with the uniform norm $\|\cdot\|_{\infty}$ when the basis uses spline, CDV wavelet, or local polynomial partition series.  

Under the conditions in Theorem \ref{thm-qr-generalization}, if we assume $1/\rho\pi_a, 1/\hat{\rho}\hat{\pi}_a \in \mathcal{H}(\alpha) $ and $\hat{\mu}_a, \mu_a \in \mathcal{H}(\beta)$, by the approximation property of the basis $\bb$, we have
\[
\left\|(\II-\Pi_{\bb})   \left(\frac{1}{\hat{\rho}\hat{\pi}_a} - \frac{1}{\rho\pi_a} \right) \right\|_w \lesssim k^{-\alpha / d} \quad,  \left\|(\II-\Pi_{\bb}) \left( \hat{\mu}_a - \mu_a\right) \right\|_w \lesssim k^{-\beta / d}.
\]
In the less smooth regime $s = (\alpha + \beta)/2 < d/4$, set $k \sim n^{2 d /(d+2 \alpha+2 \beta)}$ to balance the representational error and variance. Further assume 
\[
\left\| \frac{1}{\hat{\rho}\hat{\pi}_a} - \frac{1}{\rho\pi_a} \right\|_w \left\|  \hat{\mu}_a - \mu_a \right\|_w \|\widehat{\mathbf{\Omega}}^{-1} - \mathbf{\Omega}^{-1}\| \lesssim n^{-(2 \alpha+2 \beta) /(2 \alpha+2 \beta+d)},
\]
Then we see that this estimator can achieve the minimax lower bound. In the smooth regime $s=(\alpha + \beta)/2 \geq d/4$, we can set $k$ similarly and assume 
\[
\left\| \frac{1}{\hat{\rho}\hat{\pi}_a} - \frac{1}{\rho\pi_a} \right\|_w \left\|  \hat{\mu}_a - \mu_a \right\|_w \|\widehat{\mathbf{\Omega}}^{-1} - \mathbf{\Omega}^{-1}\| \lesssim n^{-1/2},
\]
and again this estimator achieves the minimax lower bound. Compared with the condition required for the doubly robust estimator to achieve the minimax lower bound
\[
\|\hat{\rho}\hat{\pi}_a - \rho \pi_a\|\|\hat{\mu}_a - \mu_a\| = o_{\MP}(n^{-1/2}),
\]
since we have a third-order error term for the quadratic estimator, our higher-order/quadratic estimator achieves the minimax lower bound in a broader regime.


For the transportation functional $\theta_a$, we consider a related functional 
\[
\eta_a = \mathbb{E}[I(S=0)\mathbb{E}(Y|\XX,A=a,S=1)]=\mathbb{E}[I(S=0)\mu_a(\XX)] = \MP(S=0) \theta_a.
\]
Since $\MP(S=0)$ can be estimated at $\sqrt{n}$-rate, to estimate $\theta_a$ at a minimax optimal rate we only need to estimate $\eta_a$ at an optimal rate. The first-order and approximate second-order influence functions for $\eta_a$ are 
\begin{equation*}
    \begin{aligned}
    \phi_{a,1}^{\text{tr}}(\ZZ) = &\, \frac{I(S=1,A=a)(1-\rho(\XX))}{\rho(\XX)\pi_a(\XX)}(Y-\mu_a(\XX)) + I(S=0)\mu_a(\XX) \\
    \phi_{a,2}^{\text{tr}}(\ZZ_1, \ZZ_2) = &\, I(S_1=1,A_1=a)(Y_1-\mu_a(\XX_1))\bb(\XX_1)^{\top}\mathbf{\Omega}^{-1}\bb(\XX_2)\\
    &\, \times \frac{I(S_2=0)\rho(\XX_2)\pi_a(\XX_2) - (1-\rho(\XX_2))I(A_2=a, S_2=1)}{\rho(\XX_2)\pi_a(\XX_2)}
    \end{aligned}
\end{equation*}
The quadratic estimator is 
\[
\hat{\eta}_{a}^{qr} = \mathbb{P}_n[\hat{\phi}_{a,1}^{\text{tr}}(\ZZ)] + \mathbb{U}_n [\hat{\phi}_{a,2}^{\text{tr}}(\ZZ_1, \ZZ_2)].
\]
The following theorem, which is the analogous version of Theorem \ref{thm-qr-generalization} in the transportation setting, summarizes the estimation guarantee of the quadratic estimator $\hat{\eta}_{a}^{qr}$. 
\begin{theorem}[Quadratic estimation of transportation functional]\label{thm-qr-transportation}

Assume all the nuisance functions $\hat{\mu}_a, \hat{\rho}, \hat{\pi}_a, \hat{F}$ are estimated from a separate training sample $D^n$. Further assume that $\rho, \pi, \hat{\rho}, \hat{\pi}$ are all bounded away from zero, the eigenvalues of $\mathbf{\Omega}, \widehat{\mathbf{\Omega}}$ are bounded away from zero and infinity. The conditional bias and variance of $\hat{\eta}_{a}$ (giving the training data to estimate the nuisance functions) are bounded as
\begin{equation*}
    \begin{aligned}
    |\mathbb{E}[\hat{\eta}_{a}^{qr}|D^n] - \eta_a| \lesssim &\, \left\|(\II-\Pi_{\bb})   \left(\frac{1-\hat{\rho}}{\hat{\rho}\hat{\pi}_a} - \frac{1-\rho}{\rho\pi_a} \right) \right\|_w  \left\|(\II-\Pi_{\bb}) \left( \hat{\mu}_a - \mu_a\right) \right\|_w \\
    + &\, \left\| \frac{1-\hat{\rho}}{\hat{\rho}\hat{\pi}_a} - \frac{1-\rho}{\rho\pi_a} \right\|_w \left\|  \hat{\mu}_a - \mu_a \right\|_w \|\widehat{\mathbf{\Omega}}^{-1} - \mathbf{\Omega}^{-1}\|. \\
    \operatorname{Var}(\hat{\eta}_{a}^{qr}|D^n) \lesssim &\, \frac{1}{n}+ \frac{k}{n^2}
    \end{aligned}
\end{equation*}
where the weighted $L_2$ norm of a function $g$ is $\|g\|_w^2 = \int g^2(\xx)\pi_a(\xx)\rho(\xx) dF(\xx)$
\end{theorem}

Now the story is the same as the generalization case except we need to replace $1/\rho\pi_a$ with $(1-\rho)/\rho\pi_a$ (i.e. we impose smoothness assumption on $(1-\rho)/\rho\pi_a$ and $(1-\hat{\rho})/\hat{\rho}\hat{\pi}_a$). With the approximation property of basis $\bb$ (Assumption \ref{assume-approximation}) and further assume a bound on the convergence rate of the nuisance function estimators 
\[
\left\| \frac{1-\hat{\rho}}{\hat{\rho}\hat{\pi}_a} - \frac{1-\rho}{\rho\pi_a} \right\|_w \left\|  \hat{\mu}_a - \mu_a \right\|_w \|\widehat{\mathbf{\Omega}}^{-1} - \mathbf{\Omega}^{-1}\| ,
\]
we can prove that the quadratic estimator $\hat{\eta}_{a}^{qr}$ achieves the minimax lower bound in a broader regime than the doubly robust estimator.



\section{Simulation Study}\label{sec:Simulation}

In this section we examine the performance of doubly robust estimators empirically. Specifically, we examine the performance of the plug-in-style estimator and the doubly robust estimator under varying nuisance convergence rates in Section \ref{sec:simu-rate} and in the presence of potential model misspecification in Section \ref{sec:simu-misspecification}.

\subsection{Estimation Error and Nuisance Estimation Error}\label{sec:simu-rate}
Consider the following setting: $\XX = (X_1, X_2, X_3, X_4, X_5) \sim N(0, \II_5)$, $\VV = (X_1, X_2, X_3)$. Given $n$ samples, generate $S$ according to $\rho (\VV) = \MP(S=1 | \VV)  = 0.5$ (participation probability). In the source population, set $\pi_1(\xx) = \text{expit}(0.3x_1 - 0.3x_3)$ and simulate the treatment $A \sim \text{Bernoulli }(\pi_1(\XX))$. Consider the following linear potential outcome model
\[
\mu_1 (\xx) = 1.5x_1+x_4+1, \, \mu_0(\xx) = x_1,
\]
and $Y = A\mu_1(\XX) + (1-A)\mu_0(\XX) + N(0,1)$. So we have
\[
\tau_1(\vv) = \mathbb{E}[\mu_1(\XX)|\VV=\vv,S=1] = 1.5x_1 +1, \tau_0(\vv) = x_1.
\]
The transportation functional is $\theta_1 = \ME [\tau_1(\VV) \mid S=0] = 1$. The nuisance estimators are $\hat{\mu}_a(\xx) = \mu_a(\xx) + \epsilon_{1,n}, \hat{\tau}_a(\vv) = \tau_a(\vv) + \epsilon_{2,n}, \hat{\rho}(\vv) = \text{expit}(\text{logit}(\rho(\vv))+\epsilon_{3,n}), \hat{\pi}(\xx) = \text{expit}(\text{logit}(\pi_1(\xx))+\epsilon_{4,n})$, where $\epsilon_{i,n} \sim N(n^{-\alpha}, n^{-2\alpha})$. This construction guarantees that the root mean square errors (RMSE) of $\hat{\pi}_1, \hat{\rho}, \hat{\mu}_a, \hat{\tau}_a$ are of order $O(n^{-\alpha})$. Then we can use different values of $\alpha$ to evaluate the performance of the doubly robust estimator and plug-in estimator when the nuisance functions are estimated with different convergence rates $O(n^{-\alpha})$. Specifically, we set the possible values of $\alpha$ to be a sequence ranging from 0.1 to 0.5 by a step of 0.05. In each replication, we generate the data and use the doubly robust estimator and the plug-in estimator to estimate the functional $\theta_1 =1$. This process is replicated 1000 times for sample size $n=100, 1000, 5000$ and the RMSE is computed. We report the simulation results in Figure \ref{simu-results}.

\begin{figure}[H]
	\centering
	\subfigure[n=100]{
		\begin{minipage}[t]{0.3\linewidth}
			\centering
			\includegraphics[width=2in]{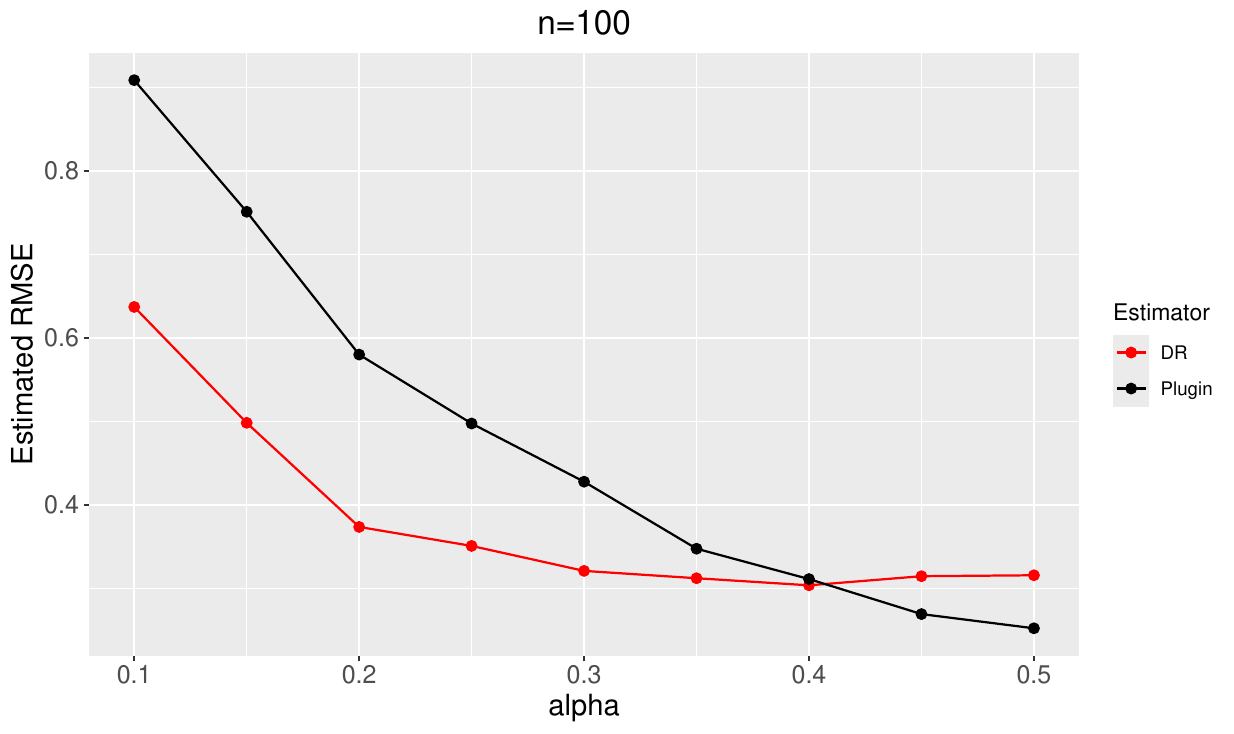}
	\end{minipage}}
	\subfigure[n=1000]{
		\begin{minipage}[t]{0.3\linewidth}
			\centering
			\includegraphics[width=2in]{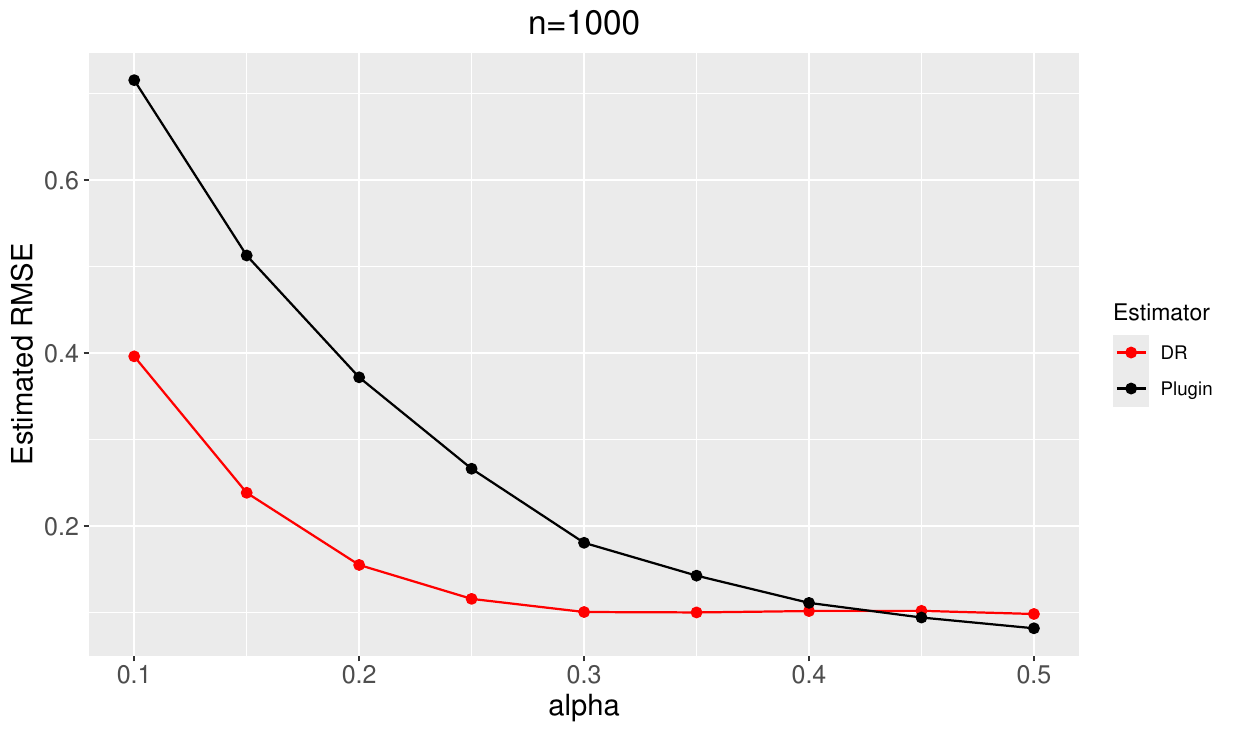}
	\end{minipage}}
	\subfigure[n=5000]{
		\begin{minipage}[t]{0.3\linewidth}
			\centering
			\includegraphics[width=2in]{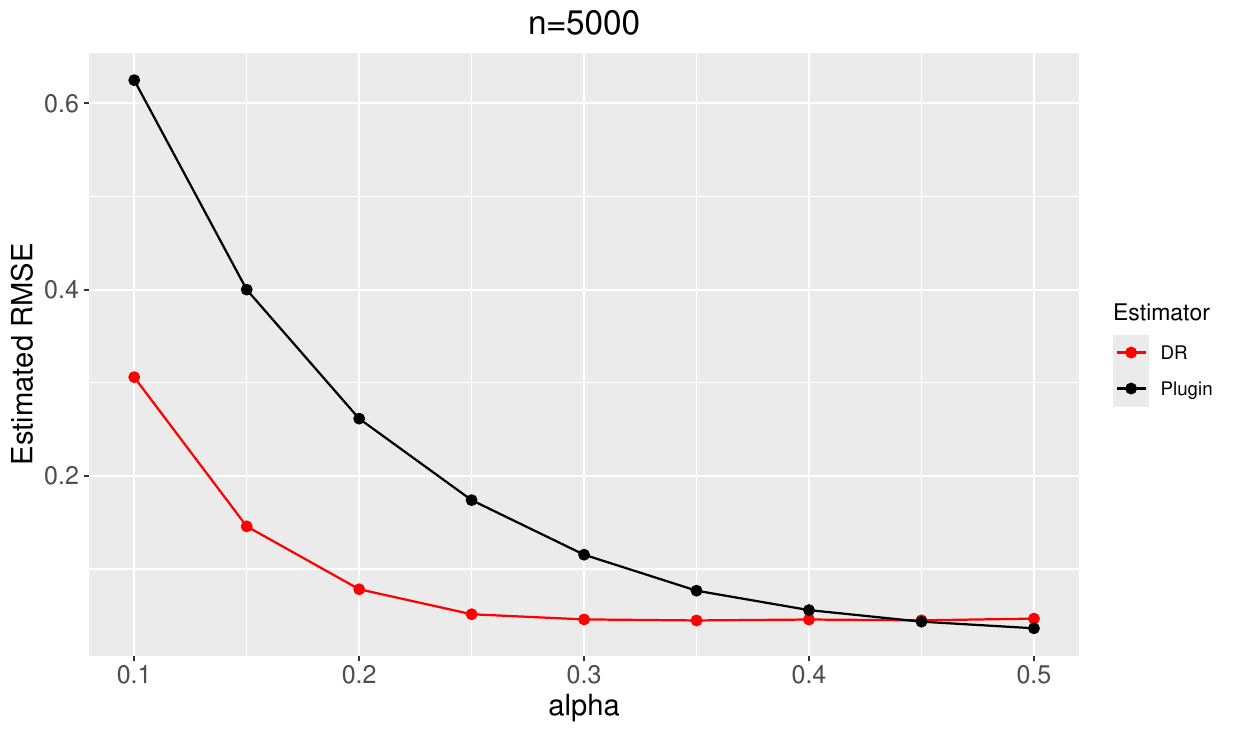}
	\end{minipage}}
	\centering
	\caption{RMSE V.S. $\alpha$ }
	\label{simu-results}
\end{figure}

From Figure \ref{simu-results} we see when the sample size is large and the convergence rate of the nuisance functions is slower than the parametric rate $O(n^{-1/2})$, the RMSE of doubly robust estimators can be much lower than the plug-in estimators. This coincides with our theoretical results. Note that although the performance of the plug-in estimator can be as good as the doubly robust estimator when the convergence rate is approximately the parametric rate, in practice we may be unable to correctly specify a parametric model for nuisance functions, and hence we are unlikely to achieve the parametric rate. So the doubly robust estimator is recommended in generalizing and transporting causal effects from the source population to the target population in real data analysis. In future work, we will compare against the performance of our higher-order estimator. 

\subsection{Estimation Error and Model Misspecification}\label{sec:simu-misspecification}

Consider the following setting: $\XX = (X_1, X_2, X_3, X_4) \sim N(0, \II_4)$, $\VV = (X_1, X_2, X_3)$. Given $n$ samples, generate $S$ according to $\rho (\vv) = \MP(S=1 | \vv)  = 0.2x_1 + 0.4x_2-0.5x_3^2$. In the source population, set $\pi_1(\xx) = \text{expit}(0.3x_1 - 0.2x_2 + 0.5x_3^2)$ and simulate the treatment $A \sim \text{Bernoulli }(\pi_1(\XX))$. Consider the following linear potential outcome model
\[
\mu_1 (\xx) = 1.5x_1+0.5x_2 + x_3^2+x_4,
\]
and $Y = A\mu_1(\XX) + N(0,1)$. So we have
\[
\tau_1(\vv) = \mathbb{E}[\mu_1(\XX)|\VV=\vv,S=1] = 1.5x_1+0.5x_2 + x_3^2.
\]
The generalization functional is $\psi_1= \ME [\tau_1(\VV) ] = 1$. We fit (generalized) linear regression models for the nuisance functions $\mu_1, \tau_1, \pi_1, \rho$. To evaluate estimator performance under model misspecification, we consider scenarios where the outcome models $\mu_1, \tau_1$ or the propensity score models $\pi_1, \rho$ are correctly specified or misspecified, with misspecification treating the quadratic term in $X_3$ as linear. We generate samples with size $n\in \{500, 1000, \dots, 5000\}$, apply two estimators under different model misspecifications, and compute the RMSE. The results are summarized in Figure \ref{fig:misspecification}.

\begin{figure}[H]
    \centering
    \includegraphics[width=3in]{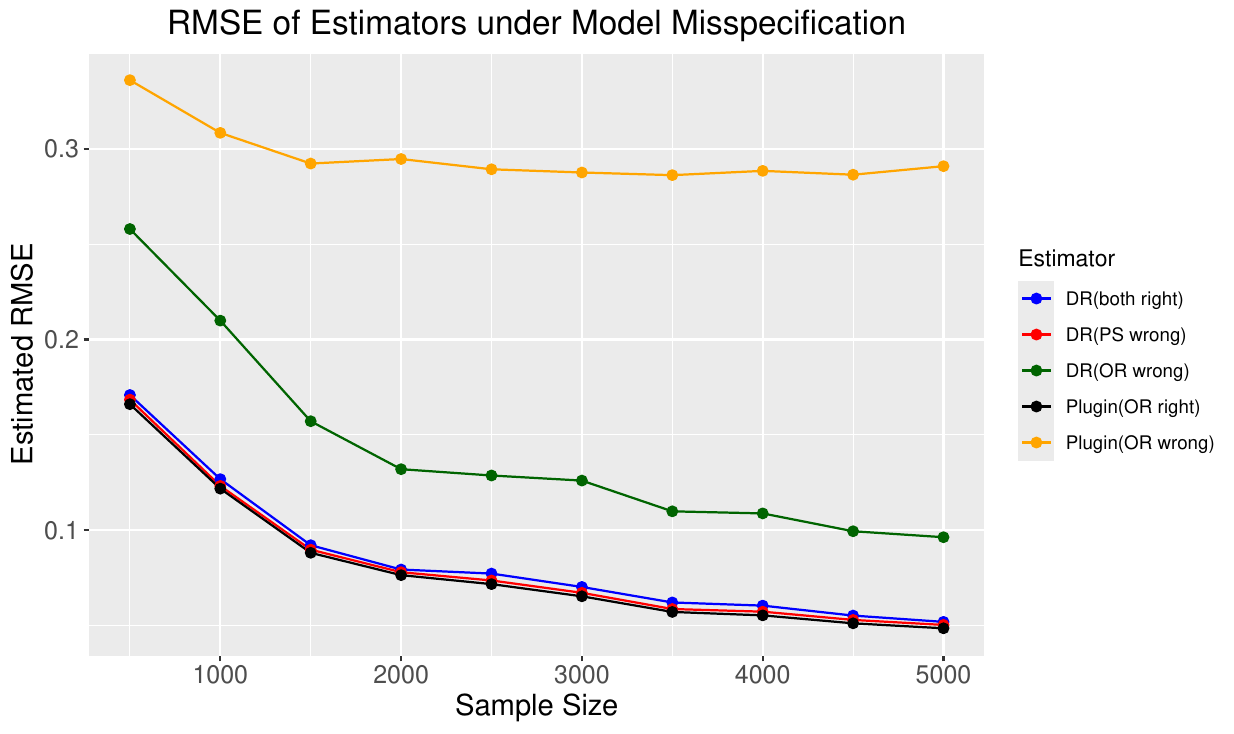}
    \caption{Performance of estimators under model misspecification}
    \label{fig:misspecification}
\end{figure}

When the outcome model is misspecified, the plug-in-style estimator, which relies solely on outcome modeling, becomes inconsistent, as illustrated in Figure \ref{fig:misspecification}. In contrast, the doubly robust estimator models both the outcome and treatment assignment, making it more robust to model misspecification. As long as either the outcome models $\tau_1, \mu_1$ or the propensity score models $\pi_1, \rho$ are correctly specified, the doubly robust estimator remains consistent, aligning with the theoretical results in Theorem \ref{thm-dr-generalization}. 

\section{Data Analysis}\label{sec:Real-data}

In this section we illustrate the proposed method with a real data example. Specifically, we aim to transport the causal effects of fruit and vegetable intake on adverse pregnancy outcomes from an observational study to all pregnancies in the U.S. We first introduce the background and motivation of the problem in Section \ref{sec:Real-data-background}. The necessity of transportation is assessed in Section \ref{sec:Real-data-difference}, i.e. we show the distributions of covariates in two populations are different and some covariates may modify effects. In Section \ref{sec:Real-data-transportation}, we apply the proposed method to two datasets and estimate the ATE of different dietary components in the target population (\textit{the whole U.S. pregnant female population}). Finally sensitivity analysis is performed when we are not confident in the exchangeability and transportability assumption in Section \ref{sec:Real-data-sensitivity}.

\subsection{Background and Motivation}\label{sec:Real-data-background}
Preterm birth, small-for-gestational-age birth, preeclampsia, and gestational diabetes are adverse pregnancy and birth outcomes that contribute to one-quarter of infant deaths in the U.S. and pose a tremendous economic and emotional burden for societies and families \citep{butler2007preterm, stevens2017short, dall2014economic}. Maternal nutrition is one of the few known modifiable risk factors for adverse pregnancy outcomes \citep{stephenson2018before}. Preventing poor outcomes by optimizing preconception dietary patterns is therefore a public health priority. We previously showed that diets with a high density of fruits and vegetables were associated with a reduced risk of poor pregnancy outcomes \citep{bodnar2020machine}. In the current analysis, we estimated the causal effects of fruit and vegetable intake on adverse pregnancy outcomes in \textit{the whole U.S. population of pregnant women}. Ideally, we would conduct randomized controlled trials on this population, or a random sample of it, and apply standard causal inference techniques to estimate the ATE. However, sampling and doing experiments on the U.S. pregnant population is not feasible. Therefore, we will transport the causal effects obtained from our prior work using data from the Nulliparous Pregnancy Outcomes
Study: monitoring mothers-to-be (nuMoM2b) to the U.S. pregnant population. 

As described in detail previously \citep{haas2015description}, nuMoM2b enrolled 10,083 people in 8 US medical centers from 2010 to 2013. Eligibility criteria included a viable singleton pregnancy, 6-13 completed weeks of gestation at enrollment, and no previous pregnancy that lasted $\geq 20$ weeks of gestation. All participants provided informed, written consent and the study was approved by each site’s institutional review board. At enrollment (6-13 completed weeks of gestation), women completed a semi-quantitative food frequency questionnaire querying usual periconceptional dietary intake. The full dietary assessment approach has been published \citep{bodnar2020machine}. The treatments of interest are servings of fruits and vegetables per day. Other food groups served as important confounders: whole grains, dairy products, total protein foods, seafood and plant proteins, fatty acids, refined grains, sodium and ``empty" calories. At least 30 days after delivery, a trained certified chart abstractor recorded final birth outcomes, medical history, and delivery diagnoses and complications. This information provides us with data on responses: preterm birth, SGA birth, gestational diabetes and preeclampsia. Data were also ascertained on maternal age, race, education, prepregnancy body mass index, smoking status, marital status, insurance status and working status. In our analysis, let the threshold be the 80\% quantile of total fruit or vegetable intake. Total fruit/vegetable intake higher than the 80\% quantile is considered treated $(A=1)$, otherwise not treated $(A=0)$. For the outcome, when the adverse pregnancy outcome occurs, we let $Y=1$. Otherwise $Y=0$. We will focus on the causal effects of fruit and vegetable intake on the aforementioned adverse pregnancy outcomes.

However, participants in the nuMoM2b cohort may not be representative of the U.S. pregnant population. For instance, 23\% of participants received an education beyond college compared with 10\% of the U.S. pregnant population. Hence the distributions of covariates in study participants and target population are quite different. The nuMoM2b cohort may not be representative of the target population. It is possible that some covariates, such as age and education level, will modify the effects of
dietary components. Then the estimates in \cite{bodnar2020machine} based on Numom study will not immediately generalize to the
U.S. population of women. To estimate the ATE in \textit{the whole U.S. pregnant female population}, we find a U.S. representative sample from the National Survey of Family Growth (NSFG), which contains information on 9553 women in the U.S. The documentation of the data is available at \verb|https://www.cdc.gov/nchs/nsfg/nsfg_2015_2017_puf.htm|.

Before applying the proposed methods, we formally assess the necessity of transportation in Section \ref{sec:Real-data-difference} in the Appendix \citep{zeng2024supple}.

\subsection{Transportation}\label{sec:Real-data-transportation}

Having observed that transportation methods are necessary based on results in Section \ref{sec:Real-data-difference}, here we assume the five identification assumptions in Section \ref{sec:Identification} and use our doubly robust estimator to estimate the ATE in the target population. The covariate sets that we use are $\XX=$\{education, age, race, marital status, insurance, work,
smoking, number of cigarettes, BMI, other dietary
components(e.g. protein)\} and $ \VV = $ \{education, age, race, marital status, insurance, work,
smoking, number of cigarettes\}. In particular, we adapt our estimators to these two datasets in two aspects. First, there exist some units with extremely small propensity score $\hat{\pi}_a(\XX)$ in the source dataset. Since we need to reweight each sample in the nuMoM2b cohort by the inverse of the probability of getting treated, our doubly robust estimator may suffer from high instability if we directly use these small propensity scores. So we enforce all the propensity scores $\pi_a(\XX_i)$'s to be in the range [0.01, 0.99] (i.e. the positivity constant $\epsilon$ is set to be 0.01 in our analysis). Similarly, we also enforce all the participation probabilities $\rho(\VV_i)$'s to be greater than 0.01. 

The other point is that the sampling process of NSFG dataset undergoes multiple stages and is more complicated than simple random sampling (SRS). According to examples given by CDC, we can approximate that sampling process with a stratified cluster sampling procedure and estimate the mean of a variable with appropriate weights. The variance of the estimator can also be obtained from the theory of stratified cluster sampling. The details on adjusting our estimator based on stratified cluster sampling can be found in the supplementary materials.

Our methods with above adjustments are applied to each combination of treatment $A \in \{\text{fruit intake}, \text{vegetable intake}\}$ and outcome $Y \in \{\text{preterm Birth, SGA birth, gestational diabetes,}$  preeclampsia$\}$. All the nuisance functions are fitted with ``SL.ranger" (random forests) and ``SL.glmnet" (penalized GLM) and ``SL.mean" in the R package ``SuperLearner". Five-fold cross fitting is used to guarantee the sample splitting condition in Theorem \ref{thm-dr-transportation}. The results are summarized in Figure \ref{fig:effects}.

\begin{figure}[H]
	\centering
	\subfigure[Fruit]{
		\begin{minipage}[t]{0.45\linewidth}
			\centering
			\includegraphics[width=3in]{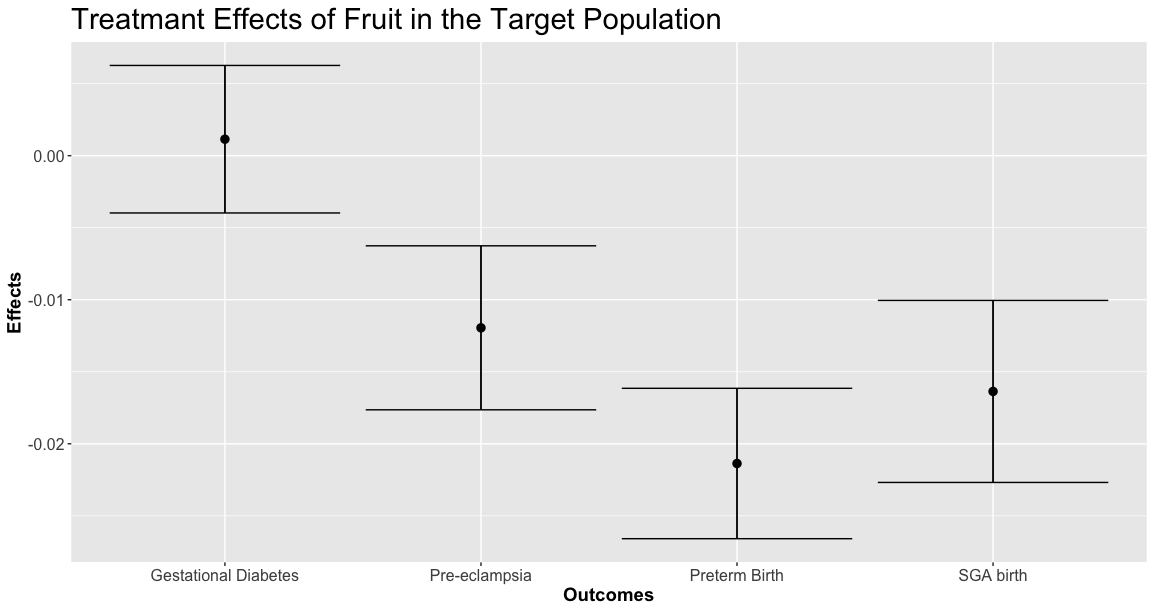}
	\end{minipage}}
	\subfigure[Vegetables]{
		\begin{minipage}[t]{0.45\linewidth}
			\centering
			\includegraphics[width=3in]{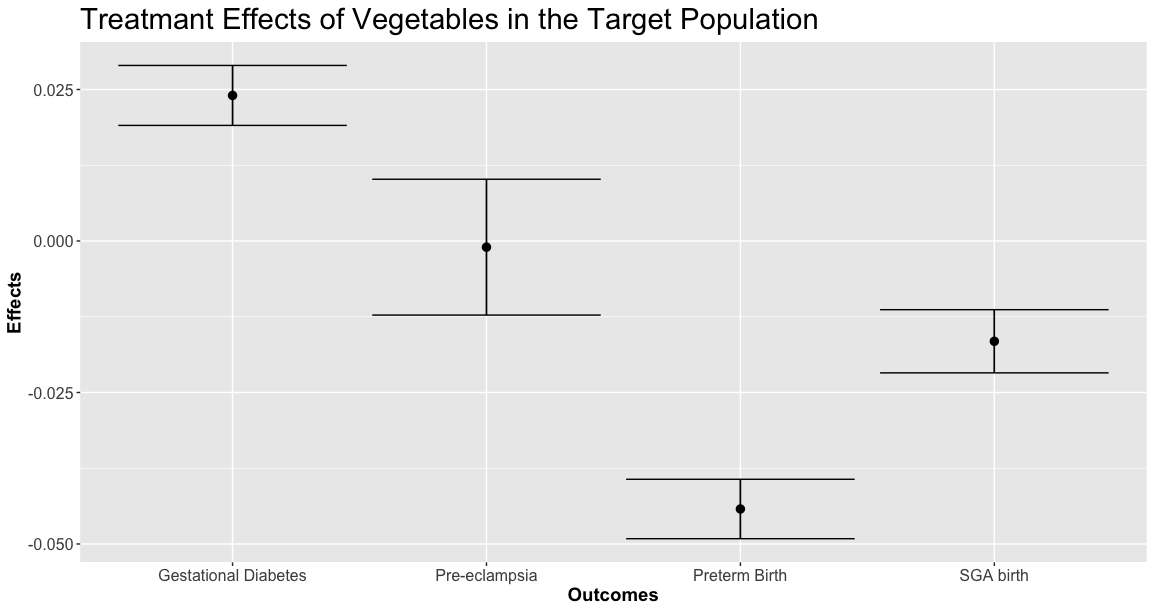}
	\end{minipage}}
	\centering
	\caption{Effects of fruit and vegetable intake in \textit{the whole U.S. pregnant female population}}
	\label{fig:effects}
\end{figure}

Detailed analysis on model selection (i.e. choice of models in SuperLearner), choice of positivity constant $\epsilon$ and potential outliers is presented in the supplementary materials, which justifies the choice in our analysis. The effects of high fruit intake on preterm birth, preeclampsia, gestational diabetes and SGA birth are -0.0214 (95\% CI [-0.0266,  -0.0162]), -0.012 (95\% CI [-0.0176, -0.00626]), 0.00114 (95\% CI [-0.00398, 0.00626]), -0.0164 (95\% CI [-0.0227, -0.0101]), respectively. The effects of high vegetable intake on preterm birth, preeclampsia, gestational diabetes and SGA birth are -0.0442 (95\% CI [-0.0491,  -0.0393]), -0.00102 (95\% CI [-0.0122,  0.0102]), 0.024 (95\% CI [0.0191, 0.029]), -0.0166 (95\% CI [-0.0218, -0.0113]).

From the results above, we see the effects of high fruit intake on preterm birth, preeclampsia and SGA birth are significantly negative at level 0.05 in the target population, which implies eating more fruit potentially causes a lower risk of these adverse pregnancy outcomes. For the results on vegetables, the effect of high vegetable intake on preterm birth and SGA birth is significantly negative. The strict interpretation for the effects of high vegetable intake on preterm birth is: Compared with participants whose vegetable intake is
less than 80\% quantile of vegetable intake in the nuMoM2b cohort, women with
higher vegetable intake have 4 fewer preterm births for every 100
women in \textit{the whole U.S. pregnant female population}. Other combinations of treatments and outcomes can be similarly interpreted. We also see (significant) positive effects of high vegetable intake on gestational diabetes, which shows eating more vegetables may potentially increase the risk of getting gestational diabetes. This result is counter-intuitive. One potential problem is the identification assumptions may not hold, so we cannot interpret our results from a causal perspective. As a result, we will perform sensitivity analysis in the following section, to deal with possible violations of the identification assumptions.

\subsection{Sensitivity Analysis and Discussions}\label{sec:Real-data-sensitivity}

Here we focus on the effects of high vegetable intake on gestational diabetes as an example. According to the discussions in Section \ref{sec:Sensitivity-analysis}, under Assumption \ref{relax-exchangeability} and Assumption \ref{relax-transportability} the bound for the ATE in the target population is given by
\[
[ 0.024 - \delta_1 - 2\delta_2,  0.024 + \delta_1 + 2\delta_2].
\]
We visualize the range of the bounds when only exchangeability is violated ($\delta_2=0$) or only transportability is violated ($\delta_1=0$) in Figure \ref{fig:sensitivity}, respectively. 

\begin{figure}[H]
	\centering
	\subfigure[Only exchangeability is violated ($\delta_2=0$)]{
		\begin{minipage}[t]{0.45\linewidth}
			\centering
			\includegraphics[width=3in]{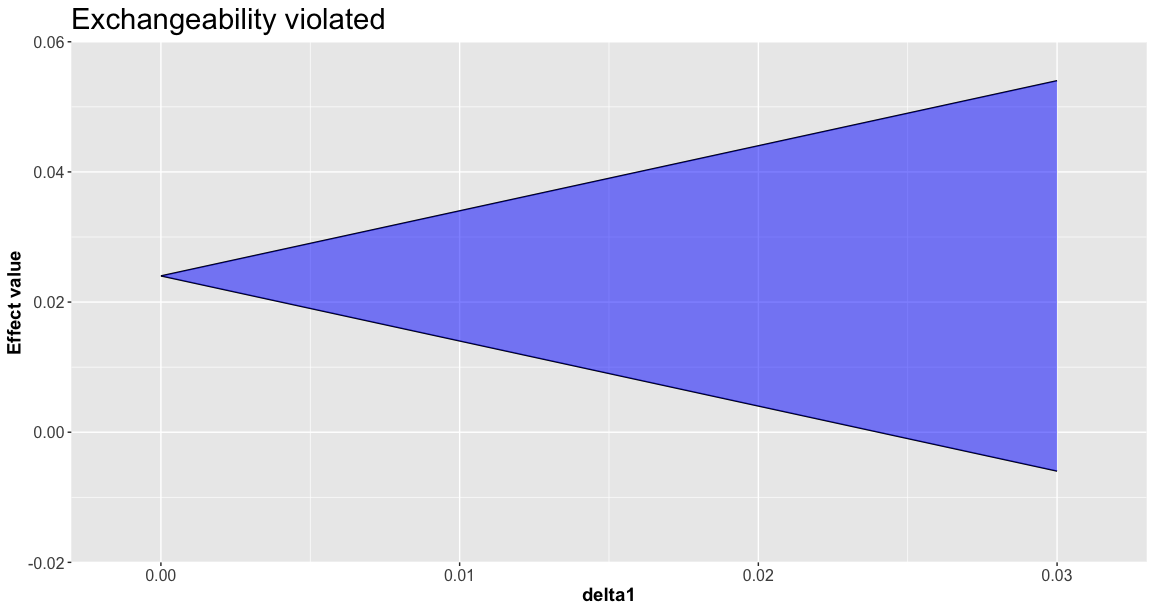}
	\end{minipage}}
	\subfigure[Only transportability is violated($\delta_1=0$)]{
		\begin{minipage}[t]{0.45\linewidth}
			\centering
			\includegraphics[width=3in]{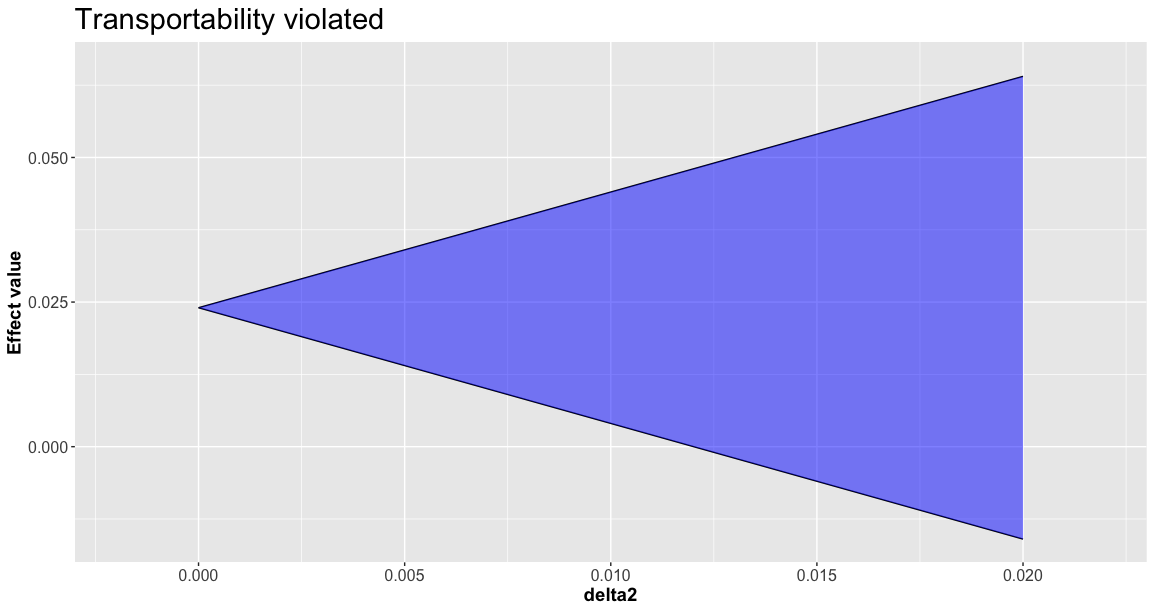}
	\end{minipage}}
	\centering
	\caption{Sensitivity analysis when treatment is high vegetable intake and outcome is gestational diabetes}
	\label{fig:sensitivity}
\end{figure}

The plots should be understood as follows: (a) For a specific value $\delta_1$, the interval that covers the ATE is given by the intersection of the blue region and the line $x=\delta_1$. (b) can be interpreted similarly. From Figure \ref{fig:sensitivity} we see if only exchangeability is violated (i.e. $\delta_2=0$), then the critical value for $\delta_1$ to turn around the result is $0.024$. If only transportability is violated (i.e. $\delta_1=0$), then the critical value for $\delta_2$ to turn around the result is $0.012$.

Considering the covariate sets $\XX=$\{education, age, race, marital status, insurance, work,
smoking, number of cigarettes, BMI, other dietary
components(e.g. protein)\} and $ \VV = $ \{education, age, race, marital status, insurance, work,
smoking, number of cigarettes\}, we find most of the covariates are categorical. The information contained in categorical variables is usually less than that in continuous ones. Furthermore, there may be some confounders or effect modifiers that are not measured in the dataset. For instance, body mass index (BMI) may be an effect modifier but we do not have information on it in the NSFG dataset and hence cannot take it into account. Therefore the categorical variables we include in the analysis may not provide sufficient information for the identification assumption \ref{exchangeability} and \ref{transportability} to hold. Hence sensitivity analysis is quite necessary in our problem.

From the perspective of the high vegetable intake treatment itself, it may be that the increased risk is due to the majority of vegetables being fried potatoes, which are less healthful than other vegetables.

\section{Discussions}\label{sec:Discussion}

In this paper we extend the generalization and transportation techniques in \citep{dahabreh2019generalizing, dahabreh2020extending} to the case where different sets of covariates can be used in the source population and the target population. To be concrete, we summarize the sufficient assumptions to identify the ATE in the target population. We also provide methods to perform sensitivity analysis when the identification assumptions fail. The first-order influence functions and efficiency bounds are derived for the target statistical functionals when they are identified. We further propose a doubly robust estimator based on the first-order influence functions and establish its asymptotic normality under proper conditions. Simulation studies show the advantage of the doubly robust estimator over the plug-in estimator. We also study the minimax lower bounds and higher-order/quadratic estimation based on second-order Von Mises expansion in the case where the source population and the target population share the same set of covariates (i.e. $\VV=\XX$). Although we rely on similar techniques used in the ATE case \citep{robins2009quadratic, robins2009semiparametric}, these results are non-trivial and important in understanding the properties of generalization and transportation functional. Finally we illustrate the proposed methods with an interesting example, where we transport the causal effects of fruit and vegetable intake on adverse pregnancy outcomes from an observational study to \textit{the whole U.S. pregnant female population}. 

In this paper we only consider the minimax rate and quadratic estimation in the special case $\VV=\XX$. In the general case $\XX=(\WW,\VV)$, the target functionals in \eqref{eq:identify} involve a triple integral and can be viewed as ``cubic functionals" \citep{tchetgen2008minimax, mukherjee2015lepski}. Such functionals are not well understood in the literature and it is more challenging to construct adversarial settings to establish the minimax lower bounds or derive higher-order influence functions to correct for the second-order bias. One special case is to assume the covariates $\VV$ are discrete and we can write
\[
\psi_a = \sum_{\vv}p(\vv) \ME[\ME(Y|\XX,A=a,S=1)|\VV=\vv,S=1]. 
\]
Now we only need to estimate $\ME[\ME(Y|\XX,A=a,S=1)|\VV=\vv,S=1] = \ME[\ME(Y|\WW,A=a,\VV=\vv,S=1)|\VV=\vv,S=1]$. Note that this functional is equivalent to estimating ATE among individuals with $\VV=\vv, S=1$ using covariate set $\WW$ and hence the standard ATE theory applies. We leave the general case as future work. Moreover, while \citet{bonvini2022fast} and \citet{zeng2024causal} have demonstrated that higher-order estimators perform well in settings with continuous treatments or discrete covariates, the implementation and selection of tuning parameters remain open problems for future research. 
 
Another relevant literature focuses on combining multiple data sources to estimate treatment effects \citep{yang2020combining, han2021federated, han2023multiply}, where some sources may have unmeasured confounders or may not represent the target population. Extending our methods to this setting, where generalization and transportation are considered with multiple datasets from the source population, is a potential avenue for future research. 
Additionally, other interesting questions in generalization and transportation include whether it is possible to generalize or transport time-varying treatment effects, where the treatment or exposure changes over time. In the classical ATE framework, several approaches exist for sensitivity analysis, where different sensitivity models are used to address potential violations of identification assumptions \citep{vanderweele2017sensitivity, zhang2019semiparametric, bonvini2022sensitivity, yadlowsky2022bounds}. Formalizing similar sensitivity models for use in generalization and transportation settings remains an open question for future investigation. A further practical challenge is how to appropriately select confounders $\XX$ and effect modifiers $\VV$. \citet{rudolph2024improving} demonstrated that researchers can focus on variables that are both effect modifiers and differentially distributed across groups in $\VV$. However, identifying this set of variables is typically not straightforward in applications. Therefore, it is crucial to develop principled methods for selecting variables to adjust for in the generalization and transportation of causal effects.

\bibliographystyle{apalike}
\bibliography{refer}

\newpage

\appendix

\section*{Appendix}

\section{Background}\label{sec:appendix-background}

 \subsection{General Strategies to Prove Minimax Lower Bounds} \label{sec:minimax-general}

We will use the following lemma (adapted from Le Cam's convex hull method), which summarizes the main ingredients in deriving minimax lower bound for a statistical functional, to derive the minimax lower bounds.

\begin{lemma}\label{lemma-minimax}
Let $\MP_{\boldsymbol{\lambda}}$ and $\mathbb{Q}_{\boldsymbol{\lambda}}$ denote distributions in $\mathcal{P}$ indexed by a vector $\boldsymbol{\lambda} = (\lambda_1, \dots, \lambda_k)$ with corresponding n-dimensional product measure $\MP_{\boldsymbol{\lambda}}^n$ and $\mathbb{Q}_{\boldsymbol{\lambda}}^n$. Let $\pi$ be a probability distribution of $\boldsymbol{\lambda}$. The functional of interest is $\psi: \mathcal{P} \mapsto \mathbb{R}$. If
\[
H^2\left(\int \MP_{\boldsymbol{\lambda}}^n d \pi(\boldsymbol{\lambda}), \int \mathbb{Q}_{\boldsymbol{\lambda}}^n d \pi(\boldsymbol{\lambda})\right) \leq \alpha<2
\]
and 
\[
\left|\psi\left(\MP_{\boldsymbol{\lambda}}\right)-\psi\left(\mathbb{Q}_{\boldsymbol{\lambda}}\right)\right| \geq s>0
\]
for all $\boldsymbol{\lambda}$, then 
\[
\inf _{\widehat{\psi}} \sup _{\MP \in \mathcal{P}} \mathbb{E}_{\MP}\{|\widehat{\psi}-\psi(\MP)|\} \geq \frac{s}{2}\left(\frac{1-\sqrt{\alpha(1-\alpha / 4)}}{2}\right).
\]
\end{lemma}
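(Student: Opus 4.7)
The plan is to run the standard Le Cam two-point argument, but with composite hypotheses turned into simple ones by mixing over the prior $\pi$. First I would reduce estimation to testing: for any estimator $\widehat{\psi}$, define the test $T = \mathbf{1}\{|\widehat{\psi} - \psi(\MP_{\boldsymbol{\lambda}})| > s/2\}$ (or its $\mathbb{Q}$-analogue); by the triangle inequality and the separation hypothesis $|\psi(\MP_{\boldsymbol{\lambda}}) - \psi(\mathbb{Q}_{\boldsymbol{\lambda}})| \geq s$, at least one of the two events $\{|\widehat{\psi} - \psi(\MP_{\boldsymbol{\lambda}})| \geq s/2\}$ and $\{|\widehat{\psi} - \psi(\mathbb{Q}_{\boldsymbol{\lambda}})| \geq s/2\}$ must occur. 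Hence Markov's inequality gives
\[
\sup_{\MP \in \mathcal{P}} \mathbb{E}_{\MP}|\widehat{\psi} - \psi(\MP)| \;\geq\; \frac{s}{2}\,\inf_{T \in \{0,1\}} \max\!\left\{\bar{\MP}(T=1),\; \bar{\mathbb{Q}}(T=0)\right\},
\]
where $\bar{\MP} := \int \MP_{\boldsymbol{\lambda}}^n\, d\pi(\boldsymbol{\lambda})$ and $\bar{\mathbb{Q}} := \int \mathbb{Q}_{\boldsymbol{\lambda}}^n\, d\pi(\boldsymbol{\lambda})$; the key point is that averaging over $\pi$ only lowers the supremum, so this passage is valid.

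Next I would apply the Neyman--Pearson characterization of the minimum testing risk in terms of total variation:
\[
\inf_{T} \max\!\left\{\bar{\MP}(T=1),\; \bar{\mathbb{Q}}(T=0)\right\} \;\geq\; \frac{1 - \mathrm{TV}(\bar{\MP}, \bar{\mathbb{Q}})}{2}.
\]
The final step is the classical Hellinger--TV conversion $\mathrm{TV}(\bar{\MP}, \bar{\mathbb{Q}}) \leq H(\bar{\MP}, \bar{\mathbb{Q}})\sqrt{1 - H^2(\bar{\MP}, \bar{\mathbb{Q}})/4}$, which combined with the hypothesis $H^2(\bar{\MP}, \bar{\mathbb{Q}}) \leq \alpha$ and monotonicity of $x \mapsto x(1 - x/4)$ on $[0,2]$ gives $\mathrm{TV}(\bar{\MP}, \bar{\mathbb{Q}}) \leq \sqrt{\alpha(1-\alpha/4)}$. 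Chaining the three displays yields the stated bound $\frac{s}{2}\cdot\frac{1-\sqrt{\alpha(1-\alpha/4)}}{2}$.

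The only subtle step is the reduction to a simple-vs-simple test against the mixtures $\bar{\MP}$ and $\bar{\mathbb{Q}}$; one has to justify that $\sup_{\MP \in \mathcal{P}} \mathbb{E}_{\MP}|\widehat{\psi} - \psi(\MP)|$ dominates the Bayes risk with respect to the mixture over $\{\MP_{\boldsymbol{\lambda}}, \mathbb{Q}_{\boldsymbol{\lambda}}\}_{\boldsymbol{\lambda}}$ rather than the pointwise risk, which follows because both $\MP_{\boldsymbol{\lambda}}$ and $\mathbb{Q}_{\boldsymbol{\lambda}}$ lie in $\mathcal{P}$ for every $\boldsymbol{\lambda}$ in the support of $\pi$. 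The Hellinger--TV inequality and the Neyman--Pearson lower bound are standard and require no adaptation, so the whole argument is essentially a bookkeeping exercise once the separation and mixture-Hellinger hypotheses are in place; this is exactly why the lemma is stated in this clean ``ingredients'' form.
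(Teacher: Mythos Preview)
The paper does not supply a proof of this lemma; it is stated as a standard tool adapted from Le Cam's convex hull method and used as a black box in the proof of Theorem~\ref{thm-minimax}. Your outline is precisely the standard argument and is essentially correct: reduce estimation to testing via Markov and the $s$-separation, bound the Bayes testing error below by $(1-\mathrm{TV})/2$, and convert Hellinger to total variation via $\mathrm{TV}\le H\sqrt{1-H^2/4}$.

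The one place to tighten is the reduction step. The test you write down, $T=\mathbf{1}\{|\widehat\psi-\psi(\MP_{\boldsymbol\lambda})|>s/2\}$, depends on $\boldsymbol\lambda$ and therefore does not directly yield a single test between the \emph{mixtures} $\bar\MP=\int\MP_{\boldsymbol\lambda}^n\,d\pi$ and $\bar{\mathbb Q}=\int\mathbb Q_{\boldsymbol\lambda}^n\,d\pi$. Your stated justification (that $\MP_{\boldsymbol\lambda},\mathbb Q_{\boldsymbol\lambda}\in\mathcal P$ for every $\boldsymbol\lambda$) explains why the supremum dominates the $\pi$-Bayes risk, but it does not by itself produce a $\boldsymbol\lambda$-free test. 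The clean fix---and what the paper's application implicitly relies on---is that $\psi(\MP_{\boldsymbol\lambda})\equiv\psi_0$ and $\psi(\mathbb Q_{\boldsymbol\lambda})\equiv\psi_1$ are constant in $\boldsymbol\lambda$ (equal to $1/2$ and $1/2+\tfrac12 k^{-(\alpha+\beta)/d}$ in the proof of Theorem~\ref{thm-minimax}); then $T=\mathbf{1}\{|\widehat\psi-\psi_1|<|\widehat\psi-\psi_0|\}$ is $\boldsymbol\lambda$-free and your chain of inequalities goes through verbatim. In full generality the lemma really needs the image sets $\{\psi(\MP_{\boldsymbol\lambda})\}_{\boldsymbol\lambda}$ and $\{\psi(\mathbb Q_{\boldsymbol\lambda})\}_{\boldsymbol\lambda}$ to be $s$-separated, not just each matched pair; this stronger hypothesis holds in every use of the lemma here and in the literature.
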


As pointed out in Lemma \ref{lemma-minimax}, the general method to derive minimax lower bound is to construct two distributions that are statistically indistinguishable (i.e. the distance between two distributions is small) but the target parameters of two distributions are sufficiently separated. In nonparametric regression problems, to arrive at the minimax rate it suffices to apply Le Cam's two-point method and construct a pair of distributions \citep{Tsybakov2009}. In nonlinear functional estimation, typically at least one distribution needs to be a mixture to obtain a tight bound \citep{robins2009semiparametric}. Intuitively, by considering mixture distributions, the signal from the distribution is further neutralized, which yields distributions with smaller statistical distance while the target parameter is still separated and thus gives tighter lower bounds. New techniques for bounding the distance between mixture distributions have been developed in the literature. For instance, \cite{birge1995estimation} considered the case where only one distribution is a mixture and \cite{robins2009semiparametric} further extended the bound to the case where both distributions are mixtures. In our analysis, we will use the following bound on Hellinger Distance from \cite{robins2009semiparametric}.

\begin{lemma}[\cite{robins2009semiparametric}]\label{bound-Hellinger}
Let $\MP_{\boldsymbol{\lambda}}$ and $\mathbb{Q}_{\boldsymbol{\lambda}}$ denote distributions in $\mathcal{P}$ indexed by a vector $\boldsymbol{\lambda} = (\lambda_1, \dots, \lambda_k)$ with corresponding n-dimensional product measure $\MP_{\boldsymbol{\lambda}}^n$ and $\mathbb{Q}_{\boldsymbol{\lambda}}^n$ and let $\mathcal{Z}=\cup_{j=1}^k \mathcal{Z}_j$ be a partition of the sample space satisfying
\begin{enumerate}
    \item $\MP_{\boldsymbol{\lambda}}\left(\mathcal{Z}_j\right)=\mathbb{Q}_{\boldsymbol{\lambda}}\left(\mathcal{Z}_j\right)=p_j$ for all $\boldsymbol{\lambda}$;
    \item the conditional distributions $I_{\mathcal{Z}_j} d \MP_{\boldsymbol{\lambda}} / p_j $ and $ I_{\mathcal{Z}_j} d \mathbb{Q}_{\boldsymbol{\lambda}} / p_j$ only depend on $\lambda_j$ (do not depend on $\lambda_k$ for $k \neq j$).
\end{enumerate}
Let $p_{\boldsymbol{\lambda}}$ and $q_{\boldsymbol{\lambda}}$ denote the density of $\MP_{\boldsymbol{\lambda}}$ and $\mathbb{Q}_{\boldsymbol{\lambda}}$, respectively. Given a product distribution $\pi$ on $\boldsymbol{\lambda}$, define $p=\int p_{\boldsymbol{\lambda}} d \pi(\boldsymbol{\lambda})$, $q=\int q_{\boldsymbol{\lambda}} d \pi(\boldsymbol{\lambda})$ and 
\begin{equation*}
    \begin{aligned}
\delta_1 &=\max _j \sup _{\boldsymbol{\lambda}} \int_{\mathcal{Z}_j} \frac{\left(p_{\boldsymbol{\lambda}}-p\right)^2}{p_{\boldsymbol{\lambda}} p_j} d \nu \\
\delta_2 &=\max _j \sup _{\boldsymbol{\lambda}} \int_{\mathcal{Z}_j} \frac{\left(q_{\boldsymbol{\lambda}}-p_{\boldsymbol{\lambda}}\right)^2}{p_{\boldsymbol{\lambda}} p_j} d \nu \\
\delta_3 &=\max _j \sup _{\boldsymbol{\lambda}} \int_{\mathcal{Z}_j} \frac{(q-p)^2}{p_{\boldsymbol{\lambda}} p_j} d \nu
\end{aligned}
\end{equation*}
for a dominating measure $\nu$. If $n p_j(1 \vee \delta_1 \vee \delta_2) \leq b$ for all $j$ and $b^{-1} \leq p_{\boldsymbol{\lambda}} \leq b$ for a positive constant $b$, then there exists a constant $C$ that only depends on $b$ such that 
\[
H^2\left(\int \MP_{\boldsymbol{\lambda}}^n d \pi(\boldsymbol{\lambda}), \int \mathbb{Q}_{\boldsymbol{\lambda}}^n d \pi(\boldsymbol{\lambda})\right) \leq C n^2\left(\max _j p_j\right)\left(\delta_1 \delta_2+\delta_2^2\right)+C n \delta_3.
\]
\end{lemma}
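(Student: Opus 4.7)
The plan is to reduce the Hellinger distance to a chi-squared-type divergence, decompose the resulting integral using the product structure of $\pi = \prod_j \pi_j$ together with the partition $\mathcal{Z} = \cup_j \mathcal{Z}_j$, and then aggregate contributions according to how many of the $n$ samples land in each block. For the first step, the elementary inequality $(\sqrt{a} - \sqrt{b})^2 \leq (a-b)^2/a$ (for $a>0$) gives
\[
H^2\Bigl(\int \MP_{\boldsymbol{\lambda}}^n\, d\pi(\boldsymbol{\lambda}),\ \int \mathbb{Q}_{\boldsymbol{\lambda}}^n\, d\pi(\boldsymbol{\lambda})\Bigr)\ \leq\ \int \frac{(\bar{p}(z^n) - \bar{q}(z^n))^2}{\bar{p}(z^n)}\, d\nu^n(z^n),
\]
where $\bar{p}(z^n) = \int p_{\boldsymbol{\lambda}}^n(z^n)\, d\pi(\boldsymbol{\lambda})$ and $\bar{q}$ is analogous.

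For the decomposition, I would partition $\mathcal{Z}^n = \bigsqcup_{(j_1,\ldots,j_n)} \mathcal{Z}_{j_1}\times\cdots\times \mathcal{Z}_{j_n}$ and, on the cell indexed by $(j_1,\ldots,j_n)$, set $I_j = \{i : j_i = j\}$ and $n_j = |I_j|$. Since the conditional density of $P_{\boldsymbol{\lambda}}$ on $\mathcal{Z}_j$ depends only on $\lambda_j$ and $\pi$ is a product measure, the mixed densities factor across blocks,
\[
\bar{p}(z^n) = \prod_{j=1}^k m_j^P(z_{I_j}), \qquad m_j^P(z_{I_j}) = \int \prod_{i\in I_j} p_{\boldsymbol{\lambda}}(z_i)\, d\pi_j(\lambda_j),
\]
and likewise $\bar{q} = \prod_j m_j^Q$. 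Telescoping $\prod_j m_j^P - \prod_j m_j^Q$ into a sum of $k$ single-block differences and using the $b$-boundedness to absorb the off-block ratio $[\prod_{j'\neq j_0}(\cdot)]/\bar{p}$ as an $O(1)$ factor reduces the chi-squared bound to a sum over configurations and blocks $j_0$ of integrals of the form $\int (m_{j_0}^P - m_{j_0}^Q)^2 / m_{j_0}^P\, d\nu^{n_{j_0}}$ weighted by the probability of the configuration.

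The main mixing computation is the per-block bound. When $n_{j_0} = 1$, the identities $\int p_{\boldsymbol{\lambda}}\, d\pi = p$ and $\int q_{\boldsymbol{\lambda}}\, d\pi = q$ give $m_{j_0}^P = p$ and $m_{j_0}^Q = q$, so the block integral equals $\int_{\mathcal{Z}_{j_0}}(q-p)^2/p\, d\nu \leq C\, p_{j_0}\,\delta_3$; summing across the $n$ positions that can be ``alone'' in a block yields the $Cn\delta_3$ term. When $n_{j_0} \geq 2$, I would expand $p_{\boldsymbol{\lambda}}^{n_{j_0}} - q_{\boldsymbol{\lambda}}^{n_{j_0}}$ telescopically within the block and write $p_{\boldsymbol{\lambda}} = p + (p_{\boldsymbol{\lambda}}-p)$: the terms linear in $(p_{\boldsymbol{\lambda}}-p)$ integrate to zero under $\pi$, leaving cross-terms of the form $\int (p_{\boldsymbol{\lambda}} - p)(p_{\boldsymbol{\lambda}} - q_{\boldsymbol{\lambda}})\, d\pi$ and $\int (p_{\boldsymbol{\lambda}} - q_{\boldsymbol{\lambda}})^2\, d\pi$, which Cauchy--Schwarz bounds by $\sqrt{\delta_1\delta_2}$ and $\delta_2$ respectively (after accounting for $p_j$ factors). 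Each such configuration contributes $O(n_{j_0}^2(\delta_1\delta_2 + \delta_2^2))$ per block; summing against the multinomial weights $\prod_j p_j^{n_j}$ is dominated by configurations with exactly two samples in a single block and yields the $C n^2 (\max_j p_j)(\delta_1\delta_2 + \delta_2^2)$ term.

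The hard part will be the bookkeeping that underlies the previous paragraph: on one hand, controlling the off-block ratios in the telescoped sum uniformly by using $b^{-1} \leq p_{\boldsymbol{\lambda}} \leq b$ together with the sample-size condition $np_j(1\vee \delta_1 \vee \delta_2) \leq b$, which is precisely what prevents cells with many samples concentrated in a single block from producing blow-up; and on the other hand, making rigorous, for each $n_{j_0}\geq 2$, the cancellation of the linear-in-$\lambda$ fluctuation under $\pi_j$, so that the block contribution is genuinely $\delta_1\delta_2 + \delta_2^2$ rather than $\delta_2$. This cancellation is the conceptual heart of the lemma: it is what separates the first-order $n\delta_3$ rate from the second-order pairwise rate, and it is what makes the mixture construction of $\MP_{\boldsymbol{\lambda}}$ strictly more powerful than a two-point Le~Cam bound.
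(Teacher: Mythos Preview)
The paper does not prove this lemma: it is stated with attribution to \cite{robins2009semiparametric} and invoked as a black-box tool in the proof of Theorem~\ref{thm-minimax}, with no argument supplied either in the main text or the appendix. There is therefore no ``paper's own proof'' to compare your proposal against.

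That said, your sketch tracks the structure of the original argument in Robins et al.\ (2009, Theorem~2.1): pass from Hellinger to a $\chi^2$-type quantity, exploit the block factorization $\bar p = \prod_j m_j^P$ induced by the product prior and the partition condition, telescope the difference of products across blocks, and then separate the $n_{j_0}=1$ contributions (which give the $n\delta_3$ term) from the $n_{j_0}\ge 2$ contributions (which give $n^2(\max_j p_j)(\delta_1\delta_2+\delta_2^2)$ after the $\pi_j$-average kills the fluctuation that is linear in $p_{\boldsymbol\lambda}-p$). The one place your outline is optimistic is the claim that ``$b$-boundedness'' alone absorbs the off-block ratios such as $\prod_{j'<j_0} m_{j'}^Q / m_{j'}^P$ as an $O(1)$ factor: the pointwise bound $b^{-1}\le p_{\boldsymbol\lambda}\le b$ only yields $b^{2n_{j'}}$ per block, which is not uniformly $O(1)$ over configurations. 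This is precisely where the hypothesis $np_j(1\vee\delta_1\vee\delta_2)\le b$ must enter, and in the original proof it does so not by bounding the ratio pointwise but by controlling its contribution after integrating against the multinomial weights on $(n_1,\dots,n_k)$. So the ``hard part'' you flag is real, and the off-block control and the within-block cancellation have to be handled together rather than sequentially; your plan is correct in spirit but would need that coupling made explicit to go through.
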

The idea of deriving minimax lower bound now becomes clear. We first carefully construct two classes of distributions indexed by $\boldsymbol{\lambda}$ such that the target parameters of two distribution classes are separated. Then we can apply Lemma \ref{bound-Hellinger} to bound the Hellinger distance between the mixture distributions of two distribution classes (with the same mixture distribution on $\boldsymbol{\lambda}$). Finally we apply Lemma \ref{lemma-minimax} to obtain the minimax lower bound. In Section \ref{sec:minimax-rates} we apply this idea to the generalization and transportation setting.

\subsection{Quadratic Von Mises Calculus}\label{sec:Quadratic-VonMises}

As discussed in Section \ref{sec:EIF-EB}, the first-order influence function is the derivative in the first-order Von Mises expansion of the statistical functional $\psi=\psi(\MP)$. Mathematically, we have
\[
\psi(\hat{\mathbb{P}})-\psi(\mathbb{P})=-\int \phi_1(\zz,\hat{\mathbb{P}}) d \mathbb{P}(\zz)+R_2(\hat{\mathbb{P}}, \mathbb{P}),
\]
where $\phi_1(\zz,\MP)$ is the influence function of $\psi(\MP)$, $\psi(\hat{\mathbb{P}})$ is the plug-in estimator and $R_2(\hat{\mathbb{P}}, \mathbb{P})$ is the second-order reminder. This suggests that the plug-in estimator has first-order bias $-\int \phi_1(\zz,\hat{\mathbb{P}}) d \mathbb{P}(\zz)$. Equivalently, we can write
\[
\psi(\mathbb{P})=\psi(\hat{\mathbb{P}})+\int \phi_1(\zz,\hat{\mathbb{P}}) d \mathbb{P}(\zz) +R_2(\hat{\mathbb{P}}, \mathbb{P}),
\]
which motivates us to correct for the first-order bias and arrive at the doubly robust estimator
\[
\hat{\psi}^{dr} = \psi(\hat{\mathbb{P}})+\mathbb{P}_n\{\phi_1(\ZZ,\hat{\mathbb{P}})\}.
\]
Under further empirical process assumptions or sample splitting assumptions, we can show the dominating term in the conditional bias of $\hat{\psi}^{dr}$ is $R_2(\hat{\mathbb{P}}, \mathbb{P})$, which is usually a second-order error term (e.g. $R_2(\hat{\mathbb{P}}, \mathbb{P})$ is of order $\|\hat{\mu}_a - \mu_a\|\|\hat{\pi}_a -\pi_a\|+ \|\hat{\rho} -\rho\| \|\hat{\tau}_a -\tau_a\|$ in generalization and transportation functionals). At this point, one may consider further expressing $\psi(\MP)$ as a second-order Von Mises expansion and hopefully the remainder error term can be of ``third-order small". This motivates the study of second-order 
influence functions. Mathematically, we can write
\begin{equation}\label{eq:2nd-expansion}
\begin{aligned}
    \psi(\mathbb{P})-\psi(\hat{\mathbb{P}})=& \,\int \phi_1(\zz,\hat{\mathbb{P}}) d( \mathbb{P} -\hat{\MP})(\zz)+\frac{1}{2} \int \phi_2(\zz_1,\zz_2, \hat{\MP})\, d(\MP -\hat{\MP}) \times (\MP -\hat{\MP}) (\zz_1, \zz_2) +R_3(\hat{\mathbb{P}}, \mathbb{P})\\
    =&\, \int \phi_1(\zz,\hat{\mathbb{P}}) d \mathbb{P}(\zz)+\frac{1}{2} \int \phi_2(\zz_1,\zz_2, \hat{\MP})\, d(\MP  \times \MP ) (\zz_1, \zz_2) +R_3(\hat{\mathbb{P}}, \mathbb{P}),
\end{aligned}
\end{equation}
where $\phi_1(\zz, \MP)$ is the first-order influence function and $\phi_2(\zz_1,\zz_2,\MP)$ is the second-order influence function. Here we assume $\int \phi_1(\zz, \MP)d \MP (\zz) = 0$ and $\int \phi_2 (\zz_1,\zz_2,\MP) d \MP(\zz_i) = 0$ for $i=1, 2$. A strict mathematical definition of second-order influence function requires the notion of second-order score and we refer to \cite{robins2009quadratic} for more detailed discussions. When the second-order influence function exists, it can be derived by differentiating the first-order influence function (i.e. deriving the influence function of the first-order influence function), as summarized in the following Lemma.

\begin{lemma}
\cite{robins2009quadratic}
Suppose $\{\MP_t, t\geq 0\}$ is a sufficiently smooth submodel and $\phi_1 : \mathcal{Z} \mapsto \mathbb{R}, \phi_2: \mathcal{Z}\times \mathcal{Z} \mapsto \mathbb{R}$ are measurable functions that satisfy 
\begin{equation*}
\begin{aligned}
    \frac{\partial \psi(\MP_t)}{\partial t} =&\,  \int \phi_1(\zz,\MP_t) \frac{\partial p_t(\zz)  }{\partial t} d\zz, \, t\geq 0, \\
    \frac{\partial \phi_1(\zz_1, \MP_t)}{\partial t} \bigg|_{t=0} =&\, \int \phi_2 (\zz_1,\zz_2,\MP) \frac{\partial \log p_t(\zz_2)}{\partial t} \bigg|_{t=0} d\MP(\zz_2), \, \forall \zz_1 \in \mathcal{Z}.
\end{aligned}
\end{equation*}
Then the function $\phi_2$ is a second-order influence function of $\psi$.
\end{lemma}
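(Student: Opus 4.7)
The plan is to establish that $\phi_2$ satisfies the defining Von Mises characterization of a second-order influence function, i.e.\ that
\[
\frac{\partial^2 \psi(\MP_t)}{\partial t^2}\bigg|_{t=0} = \iint \phi_2(\zz_1,\zz_2,\MP)\, s(\zz_1)s(\zz_2)\, d\MP(\zz_1)\, d\MP(\zz_2) + \int \phi_1(\zz,\MP)\, \tilde s(\zz)\, d\MP(\zz),
\]
where $s(\zz)=\partial_t \log p_t(\zz)|_{t=0}$ is the first-order score and $\tilde s(\zz)$ is the second-order score of the submodel. Once this identity holds, $\psi(\MP_t)-\psi(\MP)$ agrees through second order in $t$ with the quadratic expansion \eqref{eq:2nd-expansion} evaluated at any one-parameter path (in particular, at a point-mass contamination), so $\phi_2$ qualifies as a second-order influence function of $\psi$.

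The key step is to differentiate the first assumed identity $\partial_t \psi(\MP_t) = \int \phi_1(\zz,\MP_t)\, \partial_t p_t(\zz)\, d\zz$ once more in $t$ and evaluate at $t=0$. The product rule splits the result into
\[
\int \frac{\partial \phi_1(\zz,\MP_t)}{\partial t}\bigg|_{t=0}\, \frac{\partial p_t(\zz)}{\partial t}\bigg|_{t=0} d\zz \;+\; \int \phi_1(\zz,\MP)\, \frac{\partial^2 p_t(\zz)}{\partial t^2}\bigg|_{t=0}\, d\zz.
\]
Substituting the second hypothesis converts $\partial_t \phi_1(\zz_1,\MP_t)|_{t=0}$ into $\int \phi_2(\zz_1,\zz_2,\MP)\, s(\zz_2)\, d\MP(\zz_2)$, and using $\partial_t p_t|_{t=0} = p\cdot s$ turns the first piece into exactly the desired double integral of $\phi_2$ against the product of scores. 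Decomposing $\partial_t^2 p_t|_{t=0} = p(\zz)\{s(\zz)^2 + \tilde s(\zz)\}$ separates the second piece into an $s(\zz)^2$ contribution, which is absorbed into the quadratic kernel (after symmetrization of $\phi_2$ and a diagonal correction), and an $\tilde s(\zz)$ contribution, which is precisely the admissible first-order residual in the Robins--van der Vaart definition.

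The main obstacle is less algebraic than technical: one must verify that the submodel is sufficiently regular to (i) interchange $\partial_t$ with the integrals in $\zz$ appearing on both sides of the two hypotheses, (ii) admit a well-defined second-order score $\tilde s$ so that the decomposition $\partial_t^2 p_t|_{t=0}=p(s^2+\tilde s)$ is valid, and (iii) ensure that $\phi_2(\cdot,\cdot,\MP)$ obeys the mean-zero side conditions $\int \phi_2(\zz_1,\zz_2,\MP)\, d\MP(\zz_i)=0$ for $i=1,2$, which are implicit in the notion of a second-order influence function. Condition (iii) can be extracted by integrating the second hypothesis against $d\MP(\zz_1)$ and using $\ME[\phi_1(\ZZ,\MP)]=0$, and by integrating against $s(\zz_2)\,d\MP(\zz_2) = \partial_t p_t(\zz_2)|_{t=0}\, d\zz_2$ and commuting integration with differentiation. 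Once this bookkeeping is in place, matching coefficients of the second-order Taylor expansion of $t\mapsto \psi(\MP_t)$ at $t=0$ against the quadratic Von Mises expansion closes the argument.
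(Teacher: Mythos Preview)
The paper does not supply its own proof of this lemma: it is quoted verbatim with a citation to \cite{robins2009quadratic} and then used as a tool in Section~\ref{sec:Quadratic-VonMises}. So there is no in-paper argument to benchmark your proposal against; the comparison is really with the original Robins et al.\ derivation.

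Your sketch follows the standard route and is essentially how the result is established in the cited reference: differentiate the first-order identity once more, apply the product rule, and insert the second hypothesis to produce the bilinear score term. One point deserves tightening. You write that the contribution $\int \phi_1(\zz,\MP)\, s(\zz)^2\, d\MP(\zz)$ ``is absorbed into the quadratic kernel (after symmetrization of $\phi_2$ and a diagonal correction).'' That phrasing obscures what actually happens. In the Robins et al.\ framework the relevant second-order score of the submodel is not $\tilde s$ alone but the combination built from $\partial_t^2 p_t/p|_{t=0} = s^2 + \tilde s$ (together with the appropriate centering), so the $s^2$ piece stays paired with $\phi_1$ rather than migrating into $\phi_2$. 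The defining identity for a second-order influence function is that the second derivative equals the inner product of $\phi_2$ with the degenerate product score $s\otimes s$ \emph{plus} the inner product of $\phi_1$ with the full second-order score; your display already has exactly that form once you stop trying to shuffle the $s^2$ term elsewhere. Your side-condition check for (iii) is also incomplete as stated: integrating the second hypothesis against $d\MP(\zz_1)$ and using $\ME[\phi_1]=0$ gives $\int\!\!\int \phi_2\, s(\zz_2)\, d\MP(\zz_2)\, d\MP(\zz_1)=0$ for every score $s$, which yields $\int \phi_2(\zz_1,\zz_2)\, d\MP(\zz_1)=0$, but the symmetry in $\zz_1,\zz_2$ (needed for the other marginal) is an additional structural requirement on $\phi_2$ that should be stated rather than derived.
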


After deriving the second-order influence function, we replace the integral with respect to $\MP$ and $\MP \times \MP$ with  a sample average and U-statistics measure in \eqref{eq:2nd-expansion}, and arrive at the general form of the quadratic estimator for a statistical functional $\psi$ as follows
\begin{equation}\label{eq:qr-estimator}
    \hat{\psi}^{qr} = \psi(\hat{\mathbb{P}})+\mathbb{P}_n\{\phi_1(\ZZ,\hat{\mathbb{P}})\} + \frac{1}{2} \MU_n \{ \phi_2(\ZZ_1,\ZZ_2,\hat{\MP}) \}.
\end{equation}

Since we have corrected for both first-order and second-order bias in the plug-in estimator, ideally we may hope $R_3(\hat{\MP}, \MP)$ is a third-order error term and thus the bias of $\hat{\psi}^{qr}$ is smaller than $\hat{\psi}^{dr}$. Unfortunately, the nonexistence of a second-order influence function is typical in many functionals, implying that a third-order error term $R_3(\hat{\MP}, \MP)$ may not be attainable. However, we may still be able to correct for ``partial second-order bias". The key is to find a finite-dimensional subspace $L$ and define an ``approximate second-order influence function" based on $L$, which induces an additional representational error to $R_3(\hat{\MP},\MP)$. The dimension of $L$ is carefully selected to balance the representation error, variance of the quadratic estimator and the remainder term in Von Mises expansion. In Section \ref{sec:Quadratic-estimation} we illustrate how to construct quadratic estimators based on second-order influence functions in generalization and transportation settings.

\section{Proof of Theorem \ref{thm-identification}}

\begin{proof}
For the generalization functional, we have
\begin{equation*}
    \begin{aligned}
    \psi_a = & \, \ME[Y^a] \\
    = & \, \ME [\ME( Y^a |\VV ) ] \\
    = & \, \ME [\ME( Y^a |\VV, S=1 ) ] \\
    = & \, \ME \{ \ME[\ME(Y^a | \XX, S=1)|\VV,S=1] \} \\
    = & \, \ME \{ \ME[\ME(Y^a | \XX, S=1, A=a)|\VV,S=1] \} \\
    = & \, \ME \{ \ME[\ME(Y | \XX, S=1, A=a)|\VV,S=1] \}.
    \end{aligned}
\end{equation*}
The second and fourth equations follow from the tower property of conditional expectation. The third equation follows from Assumption \ref{transportability}, which implies $\ME( Y^a |\VV ) = \ME (Y^a |\VV, S=1)$. The fifth equation follows from Assumption \ref{exchangeability} (conditional exchangeability in the source population). The last equation holds due to Assumption \ref{consistency} (consistency). The identification of transportation can be proved similarly.
\end{proof}

\section{Proof of Theorem \ref{thm-sensitivity}}
\begin{proof}
\begin{equation*}
    \begin{aligned}
    &\,\ME[Y^a] \\
    =&\, \ME \{ \ME[Y^a|\VV]\} \\
    =&\, \ME \{ \ME[Y^a|\VV,S=1]\MP(S=1|\VV) + \ME[Y^a|\VV,S=0]\MP(S=0|\VV)  \} \\
    \leq &\, \ME \{ \ME[Y^a|\VV,S=1]\MP(S=1|\VV) + \ME[Y^a|\VV,S=1]\MP(S=0|\VV) + \delta_2 \MP(S=0|\VV)  \}\\
    &\, \text{ (By Assumption \ref{relax-transportability}) } \\
    = &\, \ME \{ \ME[Y^a|\VV,S=1]\} + \delta_2 \MP(S=0) \\
    = &\, \ME \{ \ME[ \ME(Y^a|\XX,S=1)|\VV,S=1]\} + \delta_2 \MP(S=0) \\
    = &\, \ME \{ \ME[ \ME(Y^a|\XX,A=a,S=1) \pi_a(\XX) + \ME(Y^a|\XX,A=1-a,S=1)(1-\pi_a(\XX))|\VV,S=1]\} + \delta_2 \MP(S=0)\\
    \leq &\, \ME \{ \ME[ \ME(Y^a|\XX,A=a,S=1) \pi_a(\XX) + \ME(Y^a|\XX,A=a,S=1)(1-\pi_a(\XX)) + \delta_1(1-\pi_a(\XX))|\VV,S=1]\} \\
    &\, + \delta_2 \MP(S=0) \text{ (By Assumption \ref{relax-exchangeability}) }\\
    = &\, \ME \{ \ME[ \ME(Y^a|\XX,A=a,S=1)|\VV,S=1] \}  + \delta_1\ME[\MP(A=1-a|\VV,S=1)]+ \delta_2 \MP(S=0)\\
    = &\, \ME \{ \ME[ \ME(Y|\XX,A=a,S=1)|\VV,S=1] \}  + \delta_1\ME[\MP(A=1-a|\VV,S=1)]+ \delta_2 \MP(S=0)
    \end{aligned}
\end{equation*}
The lower bound can be proved similarly. For the transportation functional, we have
\begin{equation*}
    \begin{aligned}
    &\, \ME\{ Y^a|S=0\}\\
    =&\, \ME\{ \ME[Y^a|\VV,S=0] |S=0\} \\
    \leq &\, \ME\{ \ME[Y^a|\VV,S=1] |S=0\} + \delta_2 \\
    =&\, \ME\{ \ME[ \ME(Y^a|\XX,S=1)|\VV,S=1] |S=0\} + \delta_2  \text{ (By Assumption \ref{relax-transportability} )}\\
    =&\, \ME\{ \ME[ \ME(Y^a|\XX,A=a,S=1)\pi_a(\XX)+\ME(Y^a|\XX,A=1-a,S=1)(1-\pi_a(\XX))|\VV,S=1] |S=0\} + \delta_2 \\
    \leq &\, \ME\{ \ME[ \ME(Y^a|\XX,A=a,S=1)\pi_a(\XX)+\ME(Y^a|\XX,A=a,S=1)(1-\pi_a(\XX))|\VV,S=1] |S=0\} \\
    &\, + \delta_1 \ME[ \ME(1-\pi_a(\XX)|\VV,S=1) |S=0] + \delta_2 \text{ (By Assumption \ref{relax-exchangeability}) }\\
    =&\, \ME \{ \ME[ \ME(Y|\XX,A=a,S=1) |\VV,S=1]|S=0  \} + \delta_1 \ME[\MP(A=1-a|\VV,S=1)|S=0]+\delta_2.
    \end{aligned}
\end{equation*}
The lower bound can be achieved similarly. The bounds on the ATE in the target population are obtained by combining the lower bounds and the upper bounds.
\end{proof}

\section{Proof of Theorem \ref{thm-if}}

\begin{proof}
Let $p_{\epsilon}(\zz) = p(\zz ; \epsilon)$ denote a parametric submodel with parameter $\epsilon \in \mathbb{R}$. Denote $S_{\epsilon} (\UU | \OO) = \left.\{\partial \log p_{\epsilon}(\UU|\OO) / \partial \varepsilon\}\right|_{\varepsilon=0}$ as the scores on the parametric submodels $p_{\epsilon}(\uu|\mathbf{o})$, where $\UU$ and $\OO$ are some set of random variables. In our setting, we can decompose the score on the parametric submodels as (recall we assume $\VV \subseteq \XX$ and can write $\XX = (\WW, \VV)$)
\[
S_{\epsilon} (Y,A,\WW,\VV,S) = S_{\epsilon} (Y|A,\WW,\VV,S) + S_{\epsilon} (A | \WW,\VV,S) + S_{\epsilon} (\WW|\VV,S) + S_{\epsilon} (\VV|S) + S_{\epsilon} (S).
\]
We will prove the theorem by checking $\phi_a^{ge}$ is a pathwise derivative in the sense that $\partial \psi_a \left(\MP_{\epsilon}\right) /\left.\partial \epsilon\right|_{\epsilon=0}=\ME\left(\phi_a^{ge} S_{\epsilon}\right)$. First consider the term $\partial \psi_a \left(\MP_{\epsilon}\right) /\left.\partial \epsilon\right|_{\epsilon=0}$. By definition,
\[
\psi_a (\MP) = \iiint y p(y|A=a, S=1, \ww, \vv) \, dy \, p(\ww|\vv,S=1) \, d\ww \, p(\vv) \, d\vv.
\]
So we have
\begin{equation*}
    \begin{aligned}
    \frac{\partial \psi_a(\MP_{\epsilon})}{\partial \epsilon} = & \, \iiint y \frac{\partial p_{\epsilon} (y| A=a, S=1, \ww, \vv)}{\partial \epsilon}  \, dy \, p_{\epsilon}(\ww|\vv, S=1) \, d\ww\, p_{\epsilon}(\vv) \, d\vv \\
    &\, + \iiint y  p_{\epsilon}(y|A=a, S=1, \ww,\vv) \, dy \, \frac{\partial p_{\epsilon} (\ww|\vv, S=1)}{\partial \epsilon}  \, d\ww\, p_{\epsilon}(\vv) \, d\vv \\
    & \, + \iiint y  p_{\epsilon}(y|A=a, S=1, \ww,\vv) \, dy \,  p_{\epsilon}(\ww|\vv, S=1) \, d\ww\, \frac{\partial p_{\epsilon} (\vv)}{\partial \epsilon} \, d\vv. \\
    \end{aligned}
\end{equation*}

\begin{equation}\label{derivative-generalization}
    \begin{aligned}
    \frac{\partial \psi_a(\MP_{\epsilon})}{\partial \epsilon} \bigg|_{\epsilon=0} = & \, \iiint y S_{\epsilon}(y|A=a, S=1, \ww, \vv) p(y|A=a, S=1, \ww,\vv) \, dy \, p(\ww|\vv, S=1) \, d\ww\, p(\vv) \, d\vv \\
    &\, + \iiint y  p(y|A=a, S=1, \ww,\vv) \, dy \, S_{\epsilon} (\ww|\vv, S=1) p(\ww|\vv, S=1) \, d\ww\, p(\vv) \, d\vv \\
    & \, + \iiint y  p(y|A=a, S=1, \ww,\vv) \, dy \,  p(\ww|\vv, S=1) \, d\ww\, S_{\epsilon}(\vv) p(\vv) \, d\vv. \\
    \end{aligned}
\end{equation}
On the other hand, recall
\[
\phi_a^{ge} (\ZZ) = \frac{I(A=a, S=1)(Y-\mu_a(\XX))}{\rho(\VV) \pi_a(\XX)} + \frac{I(S=1)(\mu_a(\XX) - \tau_a(\VV))}{\rho(\VV)} + \tau_a(\VV) - \psi_a.
\]
Then we compute $\ME\left(\phi_a^{ge} S_{\epsilon}\right)$. First note that $\ME[S_{\epsilon}(Y,A,\XX,S)] = 0$ so $\psi_a \ME[S_{\epsilon}(Y,A,\XX,S)] = 0 $. To evaluate $\ME[\tau_a(\VV) S_{\epsilon}]$, we write
\[
\ME[\tau_a(\VV) S_{\epsilon}(Y,A,\XX,S)] = \ME[\tau_a(\VV) S_{\epsilon}(Y,A,\WW,S|\VV)] + \ME[\tau_a(\VV) S_{\epsilon} (\VV)].
\]
By the property of score, we have $\ME[S_{\epsilon}(Y,A,\WW,S|\VV) | \VV] = 0$, which implies
\[
\ME[\tau_a(\VV) S_{\epsilon}(Y,A,\WW,S|\VV)] = \ME \left\{ \tau_a(\VV) \ME[S_{\epsilon}(Y,A,\WW,S|\VV) | \VV] \right\} = 0.
\]
Note that $\ME[\tau_a(\VV) S_{\epsilon} (\VV)]$ is exactly the last term in \eqref{derivative-generalization}.

Next we evaluate 
\[
\ME \left\{ \frac{I(S=1)(\mu_a(\XX) - \tau_a(\VV))}{\rho(\VV)} S_{\epsilon}(Y,A,\XX,S) \right \}.
\]
We decompose $S_{\epsilon}(Y,A,\XX,S)$ as $S_{\epsilon}(Y,A,\XX,S) = S_{\epsilon}(Y,A|\XX,S) + S_{\epsilon}(\WW|\VV,S) + S_{\epsilon}(\VV,S)$ and compute each term separately.
\begin{equation*}
    \begin{aligned}
     \ME \left\{ \frac{I(S=1)(\mu_a(\XX) - \tau_a(\VV))}{\rho(\VV)} S_{\epsilon}(\VV,S) \right \} = & \, \ME\left\{ \frac{I(S=1)S_{\epsilon}(\VV,S)}{\rho(\VV)}  \ME[\mu_a(\XX) - \tau_a(\VV) | \VV, S] \right \} \\
     = &\, \ME\left\{ \frac{I(S=1)S_{\epsilon}(\VV,S)}{\rho(\VV)}  \ME[\mu_a(\XX) - \tau_a(\VV) | \VV, S=1] \right \} = 0
    \end{aligned}
\end{equation*}
since $\ME[\mu_a(\XX) - \tau_a(\VV) | \VV, S=1]  = 0$ by definition. 

\begin{equation*}
    \begin{aligned}
        \ME \left\{ \frac{I(S=1)(\mu_a(\XX) - \tau_a(\VV))}{\rho(\VV)} S_{\epsilon}(\WW|\VV,S) \right \} =& \, \ME \left\{ \frac{I(S=1)\mu_a(\XX)}{\rho(\VV)} S_{\epsilon}(\WW|\VV,S) \right \} \\
        - & \,\ME \left\{ \frac{I(S=1)\tau_a(\VV)}{\rho(\VV)} S_{\epsilon}(\WW|\VV,S) \right \}.
    \end{aligned}
\end{equation*} 
For the second term we have
\[
\ME \left\{ \frac{I(S=1)\tau_a(\VV)}{\rho(\VV)} S_{\epsilon}(\WW|\VV,S) \right \} = \ME \left\{ \frac{I(S=1)\tau_a(\VV)}{\rho(\VV)} \ME [S_{\epsilon}(\WW|\VV,S)| \VV,S]  \right \} = 0.
\]
For the first term
\begin{equation*}
    \begin{aligned}
     & \,\ME \left\{ \frac{I(S=1)\mu_a(\XX)}{\rho(\VV)} S_{\epsilon}(\WW|\VV,S) \right \}\\
     = &\, \ME \left\{ \frac{I(S=1)\mu_a(\XX)}{\rho(\VV)} S_{\epsilon}(\WW|\VV,S=1) \right \} \\
     = & \, \ME \left\{ \frac{I(S=1)}{\rho(\VV)}   \ME[\mu_a(\XX)S_{\epsilon}(\WW|\VV,S=1)|\VV,S=1 ] \right \}(\text{Condition on } \VV,S) \\
     = & \, \ME \left\{ \ME[\mu_a(\XX)S_{\epsilon}(\WW|\VV,S=1)|\VV,S=1 ] (\text{Condition on } \VV)\right\}.
    \end{aligned}
\end{equation*}
This is the second term in \eqref{derivative-generalization}. The remaining term 
\[
\ME \left\{ \frac{I(S=1)(\mu_a(\XX) - \tau_a(\VV))}{\rho(\VV)} S_{\epsilon}(Y,A|\XX,S) \right \} = \ME \left\{ \frac{I(S=1)(\mu_a(\XX) - \tau_a(\VV))}{\rho(\VV)} \ME[S_{\epsilon}(Y,A|\XX,S)|\XX,S] \right \}=0
\]
since $\ME[S_{\epsilon}(Y,A|\XX,S)|\XX,S] = 0$.

Finally we evaluate
\[
\ME \left\{ \frac{I(A=a, S=1)(Y-\mu_a(\XX))}{\rho(\VV) \pi_a(\XX)} S_{\epsilon}(Y,A,\XX,S) \right\}.
\]
First note that 
\begin{equation*}
    \begin{aligned}
    & \,\ME \left\{ \frac{I(A=a, S=1)(Y-\mu_a(\XX))}{\rho(\VV) \pi_a(\XX)} S_{\epsilon}(A,\XX,S) \right\} \\
    = & \, \ME \left\{ \frac{I(A=a, S=1)S_{\epsilon}(A,\XX,S)}{\rho(\VV) \pi_a(\XX)} \ME [Y-\mu_a(\XX)|\XX,A=a, S=1] \right\} (\text{Condition on } A,\XX,S)\\
    = & \, 0
    \end{aligned}
\end{equation*}
by definition of $\mu_a$. 
\begin{equation*}
    \begin{aligned}
     & \,\ME \left\{ \frac{I(A=a, S=1)Y}{\rho(\VV) \pi_a(\XX)} S_{\epsilon}(Y|A,\XX,S) \right\} \\
     = & \,\ME \left\{ \frac{I(A=a, S=1)Y}{\rho(\VV) \pi_a(\XX)} S_{\epsilon}(Y|A=a,\XX,S=1) \right\}  \\
     = & \, \ME \left\{ \frac{I(A=a, S=1)}{\rho(\VV) \pi_a(\XX)} \ME[YS_{\epsilon}(Y|A=a,\XX,S=1) |A=a, \XX, S=1]\right\}  (\text{Condition on } A,\XX,S)\\
     = & \, \ME \left\{ \frac{I(S=1)}{\rho(\VV) }  \ME[YS_{\epsilon}(Y|A=a,\XX,S=1) |A=a, \XX, S=1] \right\} (\text{Condition on } \XX,S) \\
     = & \, \ME \left\{ \frac{I(S=1)}{\rho(\VV) }  \ME \{\ME[YS_{\epsilon}(Y|A=a,\XX,S=1) |A=a, \XX, S=1] |\VV, S=1\} \right\} (\text{Condition on } \VV,S) \\
     = & \, \ME \left\{   \ME \{\ME[YS_{\epsilon}(Y|A=a,\XX,S=1) |A=a, \XX, S=1] |\VV, S=1\} \right\} (\text{Condition on } \VV). \\
    \end{aligned}
\end{equation*}
This is the first term in \eqref{derivative-generalization}. The remaining term
\[
\ME \left\{ \frac{I(A=a, S=1)\mu_a(\XX)}{\rho(\VV) \pi_a(\XX)} S_{\epsilon}(Y|A,\XX,S) \right\} = \ME \left\{ \frac{I(A=a, S=1)\mu_a(\XX)}{\rho(\VV) \pi_a(\XX)} \ME[S_{\epsilon}(Y|A,\XX,S)|A,\XX,S] \right\}=0
\]
since $\ME[S_{\epsilon}(Y|A,\XX,S)|A,\XX,S] =0$. This proves $\partial \psi_a \left(\MP_{\epsilon}\right) /\left.\partial \epsilon\right|_{\epsilon=0}=\ME\left(\phi_a^{ge} S_{\epsilon}\right)$.

We then prove the influence function of transportation functional similarly. First the target functional $\theta_a$ is
\[
\theta_a (\MP) = \iiint y p(y|A=a, S=1, \ww, \vv) \, dy \, p(\ww|\vv,S=1) \, d\ww \, p(\vv | S=0) \, d\vv.
\]

So we have
\begin{equation*}
    \begin{aligned}
    \frac{\partial \theta_a(\MP_{\epsilon})}{\partial \epsilon} = & \, \iiint y \frac{\partial p_{\epsilon} (y| A=a, S=1, \ww, \vv)}{\partial \epsilon}  \, dy \, p_{\epsilon}(\ww|\vv, S=1) \, d\ww\, p_{\epsilon}(\vv | S=0) \, d\vv \\
    &\, + \iiint y  p_{\epsilon}(y|A=a, S=1, \ww,\vv) \, dy \, \frac{\partial p_{\epsilon} (\ww|\vv, S=1)}{\partial \epsilon}  \, d\ww\, p_{\epsilon}(\vv | S=0) \, d\vv \\
    & \, + \iiint y  p_{\epsilon}(y|A=a, S=1, \ww,\vv) \, dy \,  p_{\epsilon}(\ww|\vv, S=1) \, d\ww\, \frac{\partial p_{\epsilon} (\vv | S=0)}{\partial \epsilon} \, d\vv. \\
    \end{aligned}
\end{equation*}

\begin{equation}\label{derivative-transportation}
    \begin{aligned}
    \frac{\partial \theta_a(\MP_{\epsilon})}{\partial \epsilon} \bigg|_{\epsilon=0} = & \, \iiint y S_{\epsilon}(y|A=a, S=1, \ww, \vv) p(y|A=a, S=1, \ww,\vv) \, dy \, p(\ww|\vv, S=1) \, d\ww\, p(\vv|S=0) \, d\vv \\
    &\, + \iiint y  p(y|A=a, S=1, \ww,\vv) \, dy \, S_{\epsilon} (\ww|\vv, S=1) p(\ww|\vv, S=1) \, d\ww\, p(\vv|S=0) \, d\vv \\
    & \, + \iiint y  p(y|A=a, S=1, \ww,\vv) \, dy \,  p(\ww|\vv, S=1) \, d\ww\, S_{\epsilon}(\vv|S=0) p(\vv|S=0) \, d\vv. \\
    \end{aligned}
\end{equation}
Recall the claimed influence function is
\begin{equation*}
	    \begin{aligned}
        \phi_{a}^{\text{tr}} (Z) &=\frac{1}{\MP(S=0)}\left\{\frac{I(A=a, S=1) (1 - \rho(\VV))\left(Y-\mu_{a}(\XX)\right)}{\rho (\VV) \pi_a(\XX)}\right.\\
        &\left.+\frac{I(S=1) (1 - \rho(\VV))\left(\mu_{a}(\XX)-\tau_{a}(\VV)\right)}{\rho(\VV)} +I(S=0)\left[\tau_{a}(\VV)-\theta_{a}\right]\right\}.
        \end{aligned}
\end{equation*}
We first compute 
\begin{equation*}
\begin{aligned}
    & \, \frac{1}{\MP(S=0)}\ME \left[ I(S=0)(\tau_{a}(\VV)-\theta_{a}) S_{\epsilon} (Y,A,\XX,S)\right] \\
    = & \, \frac{1}{\MP(S=0)}\ME \left[ I(S=0)(\tau_{a}(\VV)-\theta_{a}) S_{\epsilon} (S)\right] + \frac{1}{\MP(S=0)}\ME \left[ I(S=0)\tau_{a}(\VV) S_{\epsilon} (Y,A,\XX|S)\right] \\
    - & \,  \frac{\theta_a}{\MP(S=0)}\ME \left[ I(S=0) S_{\epsilon} (Y,A,\XX|S)\right].
\end{aligned}
\end{equation*}
For the first term, we condition on $S$ and have
\[
\ME[I(S=0)(\tau_{a}(\VV)-\theta_{a}) S_{\epsilon} (S)] = \ME[I(S=0)S_{\epsilon} (S) \ME(\tau_{a}(\VV)-\theta_{a} | S=0) ] = 0
\]
since we have $ \ME(\tau_{a}(\VV)-\theta_{a} | S=0)=0$ by definition of $\theta_a$. For the third term we have
\[
\ME \left[ I(S=0) S_{\epsilon} (Y,A,\XX|S)\right] = \ME \left[ I(S=0) \ME \left(S_{\epsilon} (Y,A,\XX|S)|S \right)\right]=0
\]
by the property of score $\ME \left(S_{\epsilon} (Y,A,\XX|S)|S \right)=0$.
For the second term, we further write $S_{\epsilon} (Y,A,\XX|S) = S_{\epsilon} (Y,A,\WW|\VV,S) + S_{\epsilon} (\VV|S)$ and note that 
\begin{equation*}
    \begin{aligned}
    &\, \frac{1}{\MP(S=0)}\ME[I(S=0)\tau_a(\VV)S_{\epsilon}(\VV|S) ] \\
    = &\, \frac{1}{\MP(S=0)}\ME[I(S=0)\tau_a(\VV)S_{\epsilon}(\VV|S=0) ] \\
    = &\, \frac{1}{\MP(S=0)}\ME\left\{I(S=0)\ME[\tau_a(\VV)S_{\epsilon}(\VV|S=0)|S=0] \right\} \text{ (Condition on S) }\\
    = & \, \ME[\tau_a(\VV)S_{\epsilon}(\VV|S=0)|S=0]. 
    \end{aligned}
\end{equation*}
This is the third term in the expression of pathwise derivative in \eqref{derivative-transportation}. The remaining term is 
\[
\ME [I(S=0)\tau_a(\VV)S_{\epsilon} (Y,A,\WW|\VV,S)] = \ME\left\{I(S=0)\tau_a(\VV) \ME[S_{\epsilon} (Y,A,\WW|\VV,S) |\VV,S] \right\}=0.
\]
Next we evaluate 
\[
\frac{1}{\MP (S=0)}\ME \left[\frac{I(S=1) (1 - \rho(\VV))\left(\mu_{a}(\XX)-\tau_{a}(\VV)\right)}{\rho(\VV)} S_{\epsilon}(Y,A,\XX,S) \right].
\]
We decompose $S_{\epsilon}(Y,A,\XX,S) = S_{\epsilon}(\VV,S) + S_{\epsilon}(\WW|\VV,S) + S_{\epsilon}(Y,A|\XX,S)$ and note
\begin{equation*}
    \begin{aligned}
    & \, \ME \left[\frac{I(S=1) (1 - \rho(\VV))\left(\mu_{a}(\XX)-\tau_{a}(\VV)\right)}{\rho(\VV)} S_{\epsilon}(\VV,S) \right] \\
     = & \, \ME \left[ \frac{I(S=1) (1 - \rho(\VV))}{\rho(\VV)} S_{\epsilon}(\VV,S) \ME \left(\mu_{a}(\XX)-\tau_{a}(\VV) |\VV,S=1\right)\right] \\
     = & \, 0
    \end{aligned}
\end{equation*}
since by definition of $\tau_a$ we have $\tau_a(\VV) = \ME [\mu_a(\XX)|\VV,S=1]$. The second term is 
\begin{equation*}
    \begin{aligned}
    &\, \frac{1}{\MP (S=0)}\ME \left[\frac{I(S=1) (1 - \rho(\VV))\left(\mu_{a}(\XX)-\tau_{a}(\VV)\right)}{\rho(\VV)} S_{\epsilon}(\WW|\VV,S) \right] \\
    = &\, \frac{1}{\MP (S=0)}\ME \left[\frac{I(S=1) (1 - \rho(\VV))\mu_{a}(\XX)}{\rho(\VV)} S_{\epsilon}(\WW|\VV,S) \right] \\
    &\, - \frac{1}{\MP (S=0)}\ME \left[\frac{I(S=1) (1 - \rho(\VV))\tau_{a}(\VV)}{\rho(\VV)} S_{\epsilon}(\WW|\VV,S) \right]
    \end{aligned}
\end{equation*}
The first part is
\begin{equation*}
    \begin{aligned}
    & \, \frac{1}{\MP (S=0)}\ME \left[\frac{I(S=1) (1 - \rho(\VV))\mu_{a}(\XX)}{\rho(\VV)} S_{\epsilon}(\WW|\VV,S) \right] \\
    = &\, \frac{1}{\MP (S=0)}\ME \left[\frac{I(S=1) (1 - \rho(\VV))}{\rho(\VV)} \ME(\mu_{a}(\XX)S_{\epsilon}(\WW|\VV,S)|\VV,S=1) \right] (\text{Condition on } \VV,S) \\
    = &\, \frac{1}{\MP (S=0)}\ME \left[ (1 - \rho(\VV)) \ME(\mu_{a}(\XX)S_{\epsilon}(\WW|\VV,S)|\VV,S=1) \right] (\text{Condition on }\VV) \\
    = & \, \iiint y  p(y|A=a, S=1, \ww,\vv) \, dy \, S_{\epsilon} (\ww|\vv, S=1) p(\ww|\vv, S=1) \, d\ww\, \frac{p(\vv)(1-\rho(\vv))}{\MP(S=0)} \, d\vv \\
    = & \, \iiint y  p(y|A=a, S=1, \ww,\vv) \, dy \, S_{\epsilon} (\ww|\vv, S=1) p(\ww|\vv, S=1) \, d\ww\, p(\vv|S=0) \, d\vv
    \end{aligned}
\end{equation*}
where the last equation follows from
\[
\frac{p(\vv)(1-\rho(\vv))}{\MP(S=0)} = \frac{p(\vv)\MP(S=0|\vv)}{\MP(S=0)} = p(\vv|S=0).
\]
This is the second term in \eqref{derivative-transportation}. By conditioning on $(\VV,S)$ we have 
\[
\ME \left[\frac{I(S=1) (1 - \rho(\VV))\tau_{a}(\VV)}{\rho(\VV)} S_{\epsilon}(\WW|\VV,S) \right] = \ME \left[\frac{I(S=1) (1 - \rho(\VV))\tau_{a}(\VV)}{\rho(\VV)} \ME(S_{\epsilon}(\WW|\VV,S)|\VV,S) \right] = 0.
\]
The third term is 
\begin{equation*}
    \begin{aligned}
    &\, \frac{1}{\MP (S=0)}\ME \left[\frac{I(S=1) (1 - \rho(\VV))\left(\mu_{a}(\XX)-\tau_{a}(\VV)\right)}{\rho(\VV)} S_{\epsilon}(Y,A|\XX,S) \right] \\
    = & \, \frac{1}{\MP (S=0)}\ME \left[\frac{I(S=1) (1 - \rho(\VV))\left(\mu_{a}(\XX)-\tau_{a}(\VV)\right)}{\rho(\VV)} \ME [S_{\epsilon}(Y,A|\XX,S)|\XX,S] \right]\\
    = &\, 0.
    \end{aligned}
\end{equation*}
Finally we evaluate 
\[
\frac{1}{\MP(S=0)}\ME \left[\frac{I(A=a, S=1) (1 - \rho(\VV))\left(Y-\mu_{a}(\XX)\right)}{\rho (\VV) \pi_a(\XX)} S_{\epsilon} (Y,A,\XX,S) \right].
\]
By conditioning on $(A,\XX,S)$, we have
\begin{equation*}
    \begin{aligned}
    &\, \ME \left[\frac{I(A=a, S=1) (1 - \rho(\VV))\mu_{a}(\XX)}{\rho (\VV) \pi_a(\XX)} S_{\epsilon} (Y,A,\XX,S) \right] \\
    =&\, \ME \left[\frac{I(A=a, S=1) (1 - \rho(\VV))\mu_{a}(\XX)}{\rho (\VV) \pi_a(\XX)} \ME(S_{\epsilon} (Y,A,\XX,S)|A,\XX,S) \right] \\
    =&\, 0.
    \end{aligned}
\end{equation*}
The remaining term is 
\begin{equation*}
    \begin{aligned}
    &\, \frac{1}{\MP (S=0)}\ME \left[\frac{I(A=a, S=1) (1 - \rho(\VV))Y}{\rho (\VV) \pi_a(\XX)} S_{\epsilon} (Y,A,\XX,S) \right] \\
    =&\, \frac{1}{\MP (S=0)}\ME \left[\frac{I(A=a, S=1) (1 - \rho(\VV))}{\rho (\VV) \pi_a(\XX)}\ME[ YS_{\epsilon} (Y,A,\XX,S)|A=a,S=1,\XX] \right] (\text{Condition on }A,\XX,S) \\
    =&\, \frac{1}{\MP (S=0)}\ME \left[\frac{I(S=1) (1 - \rho(\VV))}{\rho (\VV) }\ME[ YS_{\epsilon} (Y,A,\XX,S)|A=a,S=1,\XX] \right] (\text{Condition on }\XX,S) \\
    =&\, \frac{1}{\MP (S=0)}\ME \left[\frac{I(S=1) (1 - \rho(\VV))}{\rho (\VV) } \ME[ \ME( YS_{\epsilon} (Y,A,\XX,S)|A=a,S=1,\XX) |\VV,S=1] \right] (\text{Condition on }\VV,S) \\
    =&\, \frac{1}{\MP (S=0)}\ME \left[(1 - \rho(\VV)) \ME[ \ME( YS_{\epsilon} (Y,A,\XX,S)|A=a,S=1,\XX) |\VV,S=1] \right] (\text{Condition on }\VV) \\
    = &\, \iiint y S_{\epsilon}(y|A=a, S=1, \ww, \vv) p(y|A=a, S=1, \ww,\vv) \, dy \, p(\ww|\vv, S=1) \, d\ww\, p(\vv|S=0) \, d\vv
    \end{aligned}
\end{equation*}
This is the first term in the pathwise derivative in \eqref{derivative-transportation}.

The efficiency bounds are given by the variance of the influence functions. Note that the means of all cross-terms are zero, for instance,
\begin{equation*}
    \begin{aligned}
    &\, \ME\left[\frac{I(A=a, S=1)(Y-\mu_a(\XX))}{\rho(\VV) \pi_a(\XX)} \frac{I(S=1)(\mu_a(\XX) - \tau_a(\VV))}{\rho(\VV)} \right] \\
    = &\, \ME\left[\frac{I(A=a, S=1)}{\rho(\VV) \pi_a(\XX)} \frac{I(S=1)(\mu_a(\XX) - \tau_a(\VV))}{\rho(\VV)} \ME (Y-\mu_a(\XX)|\XX,S,A)\right]\\
    = &\, \ME\left[\frac{I(A=a, S=1)}{\rho(\VV) \pi_a(\XX)} \frac{I(S=1)(\mu_a(\XX) - \tau_a(\VV))}{\rho(\VV)} \ME (Y-\mu_a(\XX)|\XX,S=1,A=a)\right]\\
    =&\, 0.
    \end{aligned}
\end{equation*}
Then the variance is given by the expectation of three individual terms squared, as summarized in the theorem.
\end{proof}

\section{Proof of Theorem \ref{thm-dr-generalization}}
\begin{proof}

In this section, all expectations are taken over a new sample $(Y,A,\XX,S)$ independent of the samples used to train the nuisance functions. For a general function $f$ on the sample $(Y,A,\XX,S)$, we have
\begin{equation}\label{eq:decompose-error}
    \MP_n[\hat{f}] - \ME[f] = (\MP_n - \ME) (\hat{f} - f) + (\MP_n -\ME)(f) + \ME(\hat{f}-f)
\end{equation}
We apply the decomposition of error \eqref{eq:decompose-error} to $\varphi_a^{\text{ge}}$. Note that $\ME[\varphi_a^{\text{ge}}] = \psi_a$. Since we are using sample splitting, by Lemma 2 in \cite{kennedy2020sharp} and $\left\|\hat{\varphi}_{a}^{\text{ge}}-\varphi_{a}^{\text{ge}}\right\|_{2}=o_{p}(1)$ we have
\[
(\MP_n - \ME) (\hat{\varphi}_{a}^{\text{ge}} - \varphi_{a}^{\text{ge}})=o_{p}\left(1/\sqrt{n}\right).
\]
We then focus on the conditional bias term
\begin{equation*}
    \begin{aligned}
    &\, \ME(\hat{\varphi}_{a}^{\text{ge}} - \varphi_{a}^{\text{ge}}) \\
    = &\, \ME \left[ \frac{I(A=a,S=1)}{\hat{\pi}_a(\XX)\hat{\rho}(\VV)}(\mu_a(\XX)-\hat{\mu}_a(\XX)) + \frac{I(S=1)(\tilde{\tau}_a(\VV) -\tau_a(\VV))}{\hat{\rho}(\VV)} + \frac{(\hat{\tau}_a(\VV) -\tau_a(\VV))(\hat{\rho}(\VV) - I(S=1))}{\hat{\rho}(\VV)} \right],
    \end{aligned}
\end{equation*}
where $\tilde{\tau}_a(\VV) = \ME[\hat{\mu}_a(\XX)|\VV,S=1]$.
By conditioning on $(\VV,S)$ we have
\[
\ME\left[ \frac{I(S=1)(\hat{\mu}_a(\XX) -\mu_a(\XX))}{\hat{\rho}(\VV)}\right] = \ME\left[ \frac{I(S=1)(\tilde{\tau}_a(\VV) -\tau_a(\VV))}{\hat{\rho}(\VV)}\right].
\]
Hence the conditional bias can be rewritten as
\[
\ME \left \{\left[ \frac{I(S=1)}{\hat{\rho}(\VV)}-\frac{I(A=a,S=1)}{\hat{\pi}_a(\XX)\hat{\rho}(\VV)} \right](\hat{\mu}_a(\XX)-\mu_a(\XX)) + \frac{(\hat{\tau}_a(\VV) -\tau_a(\VV))(\hat{\rho}(\VV) - \rho(\VV))}{\hat{\rho}(\VV)} \right\}.
\]
Since $\hat{\rho} \geq \epsilon$ and by Cauchy-Schward inequality the second term is bounded by
\[
\frac{\|\hat{\tau}_a - \tau_a\| \|\hat{\rho} -\rho\|}{\epsilon}
\]
and by assumption this term is $o_{p}\left(1/\sqrt{n}\right).$ The first term can be expressed as
\begin{equation*}
    \begin{aligned}
    &\, \ME \left \{\left[ \frac{I(S=1)}{\hat{\rho}(\VV)}-\frac{I(A=a,S=1)}{\hat{\pi}_a(\XX)\hat{\rho}(\VV)} \right](\hat{\mu}_a(\XX)-\mu_a(\XX)) \right \} \\
    = &\, \ME \left\{ \frac{\MP(S=1|\XX)}{\hat{\rho}(\VV) \hat{\pi}_a(\XX)} (\hat{\mu}_a(\XX) - \mu_a(\XX))(\hat{\pi}_a(\XX) - \pi_a(\XX))  \right\}
    \end{aligned}
\end{equation*}
By the boundness assumption on $\hat{\rho}, \hat{\pi}_a$ this term is bounded by
\[
\frac{\|\hat{\mu}_a - \mu_a\| \|\hat{\pi}_a -\pi_a\|}{\epsilon^2} = o_{p}\left(1/\sqrt{n}\right). 
\]
Thus under the conditions in Theorem \ref{thm-dr-generalization}, we have
\[
\MP_n(\hat{\varphi}_a^{\text{ge}}) -\ME(\varphi_a^{\text{ge}}) = (\MP_n - \ME) (\varphi_a^{\text{ge}}) + o_{p}\left(1/\sqrt{n}\right).
\]
The proof is completed by noting $(\MP_n - \ME) (\varphi_a^{\text{ge}}) = \MP_n (\phi_a^{\text{ge}})$.

\end{proof}

\section{Proof of Theorem \ref{thm-dr-transportation}}
\begin{proof}
We apply error decomposition \eqref{eq:decompose-error} to $\varphi_a^{\text{tr}}$ and similar to generalization functional we have
\[
(\MP_n - \ME) (\hat{\varphi}_{a}^{\text{tr}} - \varphi_{a}^{\text{tr}})=o_{p}\left(1/\sqrt{n}\right)
\]
under the conditions in the Theorem. The conditional bias is 
\begin{equation*}
    \begin{aligned}
    \ME(\hat{\varphi}_{a}^{\text{tr}} - \varphi_{a}^{\text{tr}}) =&\, \ME \left[ \frac{I(A=a,S=1) (1-\hat{\rho}(\VV))}{\hat{\rho}(\VV)\hat{\pi}_a(\XX)} (\mu_a(\XX)-\hat{\mu}_a(\XX)) \right] \\
    &\, + \ME \left[ \frac{I(S=1)(1-\hat{\rho}(\VV))}{\hat{\rho}(\VV)}(\tilde{\tau}_a(\VV) - \hat{\tau}_a(\VV))\right] + \ME[I(S=0)(\hat{\tau}_a(\VV) - \tau_a(\VV)) ]
    \end{aligned}
\end{equation*}
We write the second term as 
\[
\ME \left[ \frac{I(S=1)(1-\hat{\rho}(\VV))}{\hat{\rho}(\VV)}(\tilde{\tau}_a(\VV) - \tau_a(\VV))\right] - \ME \left[ \frac{I(S=1)(1-\hat{\rho}(\VV))}{\hat{\rho}(\VV)}(\hat{\tau}_a(\VV) - \tau_a(\VV))\right]
\]
and note 
\[
\ME \left[ \frac{I(S=1)(1-\hat{\rho}(\VV))}{\hat{\rho}(\VV)}(\tilde{\tau}_a(\VV) - \tau_a(\VV))\right] = \ME \left[ \frac{I(S=1)(1-\hat{\rho}(\VV))}{\hat{\rho}(\VV)}(\hat{\mu}_a(\XX) - \mu_a(\XX))\right].
\]
Thus the conditional bias can be written as 
\begin{equation}\label{eq:bias-dr-transportation}
    \begin{aligned}
    &\, \ME \left[ \frac{(1-\hat{\rho}(\VV))(\hat{\mu}_a(\XX) -\mu_a(\XX))}{\hat{\rho}(\VV)} \left( I(S=1) - \frac{I(A=a,S=1)}{\hat{\pi}_a(\XX)} \right)\right] \\
    &\, + \ME \left[ \left( I(S=0) -\frac{I(S=1)(1-\hat{\rho}(\VV))}{\hat{\rho}(\VV)} \right) (\hat{\tau}_a(\VV) -\tau_a(\VV)) \right].
    \end{aligned}
\end{equation}
For the first term in \eqref{eq:bias-dr-transportation}, we condition on $\XX$ and have
\begin{equation*}
    \begin{aligned}
    &\, \ME \left[ \frac{(1-\hat{\rho}(\VV))(\hat{\mu}_a(\XX) -\mu_a(\XX))}{\hat{\rho}(\VV)} \left( I(S=1) - \frac{I(A=a,S=1)}{\hat{\pi}_a(\XX)} \right)\right] \\
    = &\, \ME \left[ \frac{\MP(S=1|\XX)(1-\hat{\rho}(\VV))(\hat{\mu}_a(\XX) -\mu_a(\XX))}{\hat{\rho}(\VV)\hat{\pi}_a(\XX)} \left( \hat{\pi}_a(\XX) - \pi_a(\XX) \right)\right].
    \end{aligned}
\end{equation*}
Then by the boundness of $\pi_a, \hat{\rho}$ and Cauchy-Schwarz inequality this term is bounded by
\[
\frac{\|\hat{\mu}_a- \mu_a\|\|\hat{\pi}_a -\pi_a\|}{\epsilon^2} = o_p(1/\sqrt{n})
\]
Conditioning on $\VV$, the second term in \eqref{eq:bias-dr-transportation} is 
\[
\ME \left[ \left( I(S=0) -\frac{I(S=1)(1-\hat{\rho}(\VV))}{\hat{\rho}(\VV)} \right) (\hat{\tau}_a(\VV) -\tau_a(\VV)) \right] = \ME \left[ \frac{(\hat{\rho}(\VV) - \rho(\VV))(\hat{\tau}_a(\VV) -\tau_a(\VV))}{\hat{\rho}(\VV)} \right].
\]
This term is bounded by 
\[
\frac{\|\hat{\rho}-\rho\|\|\hat{\tau}_a -\tau_a\|}{\epsilon}=o_p(1/\sqrt{n}).
\]
Hence we have
\[
\MP_n (\hat{\varphi}_a^{\text{tr}}) - \ME(\varphi_a^{\text{tr}}) = \MP_n[\varphi_a^{\text{tr}} - \ME(\varphi_a^{\text{tr}})] + o_p(1/\sqrt{n}).
\]
Note that 
\[
\frac{\ME[\varphi_a^{\text{tr}}]}{\MP(S=0)} = \theta_a.
\]
We have
\begin{equation*}
\begin{aligned}
    &\,\hat{\theta}_a^{dr} - \theta_a \\
    = &\, \frac{\MP_n (\hat{\varphi}_a^{\text{tr}})}{\MP_n(1-S)} - \theta_a \\
    =&\, \frac{\MP_n (\hat{\varphi}_a^{\text{tr}}) - \theta_a \MP_n(1-S)}{\MP_n(1-S)}  \\
    =&\, \frac{\MP_n (\hat{\varphi}_a^{\text{tr}}) - \theta_a \MP_n(1-S)}{\MP(S=0)} + \left(\frac{1}{\MP_n(1-S)} - \frac{1}{\MP(S=0)} \right) (\MP_n (\hat{\varphi}_a^{\text{tr}}) - \theta_a \MP_n(1-S)) \\
    =&\, \MP_n (\phi_a^{\text{tr}}) + \frac{\MP_n(1-S) - \MP(S=0)}{\MP_n(1-S)\MP(S=0)}(\MP_n (\hat{\varphi}_a^{\text{tr}}) - \theta_a \MP_n(1-S)).
\end{aligned}
\end{equation*}
Since $\MP(S=0) > 0$ we have
\[
\frac{\MP_n(1-S) - \MP(S=0)}{\MP_n(1-S)\MP(S=0)} = O_p(1/\sqrt{n}).
\]
Also note
\begin{equation*}
    \begin{aligned}
    \MP_n (\hat{\varphi}_a^{\text{tr}}) - \theta_a \MP_n(1-S) =&\, \MP_n (\hat{\varphi}_a^{\text{tr}}) -\ME(\varphi_a^{\text{tr}})- \theta_a (\MP_n(1-S) -\MP(S=0)) \\
    =&\,\MP_n[\varphi_a^{\text{tr}} - \ME(\varphi_a^{\text{tr}})] - \theta_a (\MP_n(1-S) -\MP(S=0))+ o_p(1/\sqrt{n}) \\
    = &\, O_p(1/\sqrt{n})
    \end{aligned}
\end{equation*}
since after multiplying this term by $\sqrt{n}$ it is asymptotically normal. Hence we have
\[
\frac{\MP_n(1-S) - \MP(S=0)}{\MP_n(1-S)\MP(S=0)}(\MP_n (\hat{\varphi}_a^{\text{tr}}) - \theta_a \MP_n(1-S)) = O_{p}(1/n) = o_p(1/\sqrt{n}),
\]
which completes the proof.
\end{proof}

\section{Proof of Theorem \ref{thm-minimax}}
\begin{proof}
The proof mainly mimics the strategy in \cite{robins2009semiparametric} and we adapt it to our setting. First in the smooth regime $\alpha + \beta \geq d/2$ the minimax rate can be proved by a standard two-point method and we focus on the under-smooth regime. Let $H: \mathbb{R}^d \mapsto \mathbb{R}$ be a $C^{\infty}$ function supported on $[-1/2,1/2]^d$ such that $\int_{[-1/2,1/2]^d} H(\xx) d\xx=0$ and $\int_{[-1/2,1/2]^d} H^2(\xx) d\xx=1$. Let $\mathcal{X}_1, \dots, \mathcal{X}_k$ denote a partition of $[0,1]^d$ into $k$ cubes of equal size with midpoints $m_1, \dots, m_k$. Each $\mathcal{X}_i$ has side length $k^{-1/d}$. Note that $H\left(k^{1/d}(\xx-m_j)\right)$ is only supported on $\mathcal{X}_j$ and by change of variables
\[
\int_{\mathcal{X}_j} H^2\left(k^{1/d}(\xx-m_j) \right) d\xx = \frac{1}{k}\int_{[-1/2,1/2]^d} H^2(\xx) d\xx = \frac{1}{k}.
\]
Set 
\begin{equation*}
    \begin{aligned}
    \frac{1}{(\pi_a\rho)_{\boldsymbol{\lambda}} (\xx)} =&\, 2 + k^{-\alpha/d}\sum_{i=1}^k \lambda_i H \left( k^{1/d}(\xx-m_i) \right),\\
    (\mu_a)_{\boldsymbol{\lambda}}(\xx) =&\, \frac{1}{2} + k^{-\beta/d} \sum_{i=1}^k \lambda_i H \left( k^{1/d}(\xx-m_i) \right),
    \end{aligned}
\end{equation*}
where $\lambda_1,\dots, \lambda_k$ are i.i.d. Rademacher variables. When $k$ is sufficiently large any $\MP_{\boldsymbol{\lambda}}$ with $(\mu_a)_{\boldsymbol{\lambda}}, (\pi_a\rho)_{\boldsymbol{\lambda}}$ defined as above is in class $\mathcal{P}_{\text{ge}}$. Partition the sample space $\mathcal{X} \times \mathcal{S} \times \mathcal{A} \times \mathcal{Y} = [0,1]^d \times \{0,1\} \times \{0,1\} \times \{0,1\}$ as $\cup_{i=1}^k (\mathcal{X}_i \times \{0,1\} \times \{0,1\} \times \{0,1\}) = \cup_{i=1}^k \mathcal{Z}_i$. For any distribution $\MP_{\boldsymbol{\lambda}} \in \mathcal{P}_{\text{ge}}$ with nuisance functions defined above (so the density $f=\frac{1}{2\pi_a\rho}$), we have
\[
p_j = \MP_{\boldsymbol{\lambda}}(\XX \in \mathcal{X}_j) = \frac{1}{k}.
\]
Let $T=I(S=1, A=a)$. The likelihood (where we take $Y=1$ when $T=0$) is
\[
p(\xx,t,y) = \begin{cases} (1-\pi_a(\xx)\rho(\xx))f(\xx) = \frac{1}{2} (\frac{1}{\rho(\xx)\pi_a(\xx)}-1) & \text { if } t=0 \\ f(\xx)\rho(\xx)\pi_a(\xx) \mu_a(\xx)^y (1-\mu_a(\xx))^{1-y} = \frac{1}{2} \mu_a(\xx)^y (1-\mu_a(\xx))^{1-y}& \text { if } t=1.\end{cases}
\]
The conditional distribution $I_{\mathcal{Z}_j} d \MP_{\boldsymbol{\lambda}} / p_j$ clearly only depends on $\lambda_j$ since $H\left(k^{1/d}(\xx-m_i)\right)=0$ on $\mathcal{X}_j$ for $i\neq j$. If we set distribution $\MP_{\boldsymbol{\lambda}}$ with nuisance values $(\frac{1}{\pi_a \rho}, \mu_a, f\pi_a \rho) = (\frac{1}{(\pi_a\rho)_{\boldsymbol{\lambda}} }, \frac{1}{2}, \frac{1}{2})$ or $(2, (\mu_a)_{\boldsymbol{\lambda}}, \frac{1}{2})$ the generalization functional equals
\[
\psi_a(\MP_{\boldsymbol{\lambda}}) = \int \mu_a(\xx)f(\xx) d\xx = \frac{1}{2}.
\]
And the functional at $\mathbb{Q}_{\boldsymbol{\lambda}}$ with nuisance values $ (\frac{1}{(\pi_a\rho)_{\boldsymbol{\lambda}} }, (\mu_a)_{\boldsymbol{\lambda}}, \frac{1}{2})$ equals (note functions $H \left( k^{1/d}(\xx-m_i) \right)$'s are support on disjoint sets)
\begin{equation*}
    \begin{aligned}
    \psi_a(\mathbb{Q}_{\boldsymbol{\lambda}}) =&\, \int \left[\frac{1}{2} + k^{-\beta/d} \sum_{i=1}^k \lambda_i H \left( k^{1/d}(\xx-m_i) \right) \right] \left[1 + \frac{1}{2} k^{-\alpha/d}\sum_{i=1}^k \lambda_i H \left( k^{1/d}(\xx-m_i) \right) \right] d\xx \\
    =&\, \frac{1}{2} + \frac{1}{2} k^{-(\alpha+\beta)/d}\sum_{i=1}^k \int H^2 \left( k^{1/d}(\xx-m_i) \right) d\xx\\
    =&\, \frac{1}{2} + \frac{1}{2} k^{-(\alpha+\beta)/d}\sum_{i=1}^k \frac{1}{k} \\
    =&\, \frac{1}{2} + \frac{1}{2} k^{-(\alpha+\beta)/d}.
    \end{aligned}
\end{equation*}
In the case $\alpha \geq \beta$, we let $\MP_{\boldsymbol{\lambda}} = (2,(\mu_a)_{\boldsymbol{\lambda}}, \frac{1}{2})$ and $\mathbb{Q}_{\boldsymbol{\lambda}} = (\frac{1}{(\pi_a\rho)_{\boldsymbol{\lambda}}}, (\mu_a)_{\boldsymbol{\lambda}}, \frac{1}{2})$, if $t=0$ we have
\[
p_{\boldsymbol{\lambda}}(\xx,t,y) = \frac{1}{2}, q_{\boldsymbol{\lambda}}(\xx,t,y) = \frac{1}{2} \left(\frac{1}{(\pi_a\rho)_{\boldsymbol{\lambda}}} (\xx) -1 \right), \bar{p} = \int p_{\boldsymbol{\lambda}} d\pi(\boldsymbol{\lambda}) = \frac{1}{2}, \bar{q} = \int q_{\boldsymbol{\lambda}}d\pi(\boldsymbol{\lambda}) = \frac{1}{2}.
\]
If $t=1$ we have
\[
p_{\boldsymbol{\lambda}}(\xx,t,y) = \frac{1}{2}[y(\mu_a)_{\boldsymbol{\lambda}}(\xx) + (1-y)(1-(\mu_a)_{\boldsymbol{\lambda}}(\xx))]=\frac{1}{2}\left[ \frac{1}{2} + (2y-1)k^{-\beta/d} \sum_{i=1}^k \lambda_i H \left( k^{1/d}(\xx-m_i) \right) \right]
\]
\[
\bar{p} = \frac{1}{4}, q_{\boldsymbol{\lambda}} = p_{\boldsymbol{\lambda}}, \bar{q}=\bar{p}.
\]
So we can write 
\[
p_{\boldsymbol{\lambda}} - \bar{p} = \frac{1}{2}t(2y-1)[(\mu_a)_{\boldsymbol{\lambda}}(\xx) -\frac{1}{2}],
\]
\[
q_{\boldsymbol{\lambda}} - p_{\boldsymbol{\lambda}} = \frac{1}{2}(1-t)\left[\frac{1}{(\pi_a\rho)_{\boldsymbol{\lambda}}(\xx)}-2 \right],
\]
\[
\bar{p} -\bar{q}=0.
\]
Hence we can bound $\delta_i$'s in Lemma \ref{bound-Hellinger}. Note that if we set $k$ sufficiently large, $p_{\boldsymbol{\lambda}}$ will be bounded away from 0 and $p_j=1/k$. We have
\begin{equation*}
    \begin{aligned}
    \delta_1 \lesssim &\, k \max_j \int_{\mathcal{X}_j} k^{-2\beta/d} \left[\sum_{i=1}^k \lambda_i H \left( k^{1/d}(\xx-m_i) \right) \right]^2 d\xx \\
    \lesssim & \, k^{-2\beta/d+1} \max_j \int_{\mathcal{X}_j} \sum_{i=1}^k H^2 \left( k^{1/d}(\xx-m_i) \right) d\xx \\
    =&\, k^{-2\beta/d+1} \max_j \int_{\mathcal{X}_j}  H^2 \left( k^{1/d}(\xx-m_j) \right) d\xx \\
    =&\, k^{-2\beta/d}
    \end{aligned}
\end{equation*}
where we again use the fact that the functions $H \left( k^{1/d}(\xx-m_i) \right)$'s are support on disjoint sets $\mathcal{X}_i$'s.
Similarly we have
\[
\delta_2 \lesssim k^{-2\alpha/d}, \delta_3=0.
\]
By Lemma \ref{bound-Hellinger} we can show the following bound on Hellinger distance
\[
H^2\left(\int \MP_{\boldsymbol{\lambda}}^n d \pi(\boldsymbol{\lambda}), \int \mathbb{Q}_{\boldsymbol{\lambda}}^n d \pi(\boldsymbol{\lambda})\right) \leq  \frac{Cn^2}{k} (k^{-2(\alpha+\beta)/d} + k^{-4\alpha/d}).
\]
The dominating term in the bracket is $k^{-2(\alpha+\beta)/d}$ since $\alpha \geq \beta$. By taking $k = C^{\prime} n^{2d/(2\alpha+ 2\beta+d)}$ with $C^{\prime}$ sufficiently large we have 
\[
H^2\left(\int \MP_{\boldsymbol{\lambda}}^n d \pi(\boldsymbol{\lambda}), \int \mathbb{Q}_{\boldsymbol{\lambda}}^n d \pi(\boldsymbol{\lambda})\right) \leq c < 2.
\]
By Lemma \ref{lemma-minimax} the minimax lower bound is at least 
\[
\inf _{\widehat{\psi}_a} \sup _{\MP \in \mathcal{P}_{g e}}\left(\mathbb{E}_\MP\left(\widehat{\psi}_a-\psi_a\right)^2\right)^{1 / 2} \geq  \inf _{\widehat{\psi}_a} \sup _{\MP \in \mathcal{P}_{g e}}\left(\mathbb{E}_\MP|\widehat{\psi}_a-\psi_a|\right) \geq k^{-(\alpha+\beta)/d} \gtrsim n^{-2(\alpha+\beta)/(2\alpha+2\beta+d)}.
\]
The results on transportation functional can be similarly proved by further setting 
\[
\frac{1}{(\pi_a)_{\boldsymbol{\lambda}} (\xx)} =\, \frac{3}{4}\left[2 + k^{-\alpha/d}\sum_{i=1}^k \lambda_i H \left( k^{1/d}(\xx-m_i) \right) \right],
\]
\[
\rho_{\boldsymbol{\lambda}}(\xx) = \frac{3}{4}.
\]
Then $S$ is independent of $\XX$ and the transportation functional equals the generalization functional. The calculations on the bounds of Hellinger distance still hold.

In the case $\alpha \leq \beta$, we let $\MP_{\boldsymbol{\lambda}} = (\frac{1}{(\pi_a\rho)_{\boldsymbol{\lambda}}},\frac{1}{2}, \frac{1}{2})$ and $\mathbb{Q}_{\boldsymbol{\lambda}} = (\frac{1}{(\pi_a\rho)_{\boldsymbol{\lambda}}}, (\mu_a)_{\boldsymbol{\lambda}}, \frac{1}{2})$ and can similarly prove the results.
\end{proof}

\section{Proof of Theorem \ref{thm-qr-generalization}}
We first prove a lemma that will be useful in the following derivations.
\begin{lemma}\label{lemma-basis}
Let $\bb : \mathbb{R}^d \mapsto \mathbb{R}^k$ be a $k$-dimensional basis and $w$ be a weight function. Define
\begin{equation*}
    \begin{aligned}
\mathbf{\Omega} &=\int \bb(\xx) \bb(\xx)^{\top} w(\xx) d F(\xx), \\
\widehat{\mathbf{\Omega}} &=\int \bb(\xx) \bb(\xx)^{\top} \hat{w}(\xx) d \hat{F}(\xx), \\
\Pi_{\bb}(g)(\xx) &=\bb(\xx)^{\top} \mathbf{\Omega}^{-1} \int \bb(\uu) g(\uu) w(\uu) d F(\uu), \\
\widehat{\Pi}_{\bb}(g)(\xx) &=\bb(\xx)^{\top} \widehat{\mathbf{\Omega}}^{-1} \int \bb(\uu) g(\uu) w(\uu) d F(\uu) .
\end{aligned}
\end{equation*}
Then we have
\begin{enumerate}
    \item For any two functions $g_1,g_2$ with $\|g_1\|_w < \infty, \|g_2\|_w < \infty$, 
    \[
    \left|\int g_1(\xx) (\Pi_{\bb} -\hat{\Pi}_{\bb}) (g_2) (\xx) w(\xx) dF(\xx) \right| \leq \|\mathbf{\Omega}\| \|g_1\|_w\|g_2\|_w\|\mathbf{\Omega}^{-1} - \widehat{\mathbf{\Omega}}^{-1}\|.
    \]
    \item 
    \[
    \int \left( \bb(\xx_1)^{\top} \widehat{\mathbf{\Omega}}^{-1} \bb(\xx_2) \right)^2 w(\xx_1) w(\xx_2) dF(\xx_1) dF(\xx_2) \leq 2k(1+\|\mathbf{\Omega}\|^2\|\widehat{\mathbf{\Omega}}^{-1} - \mathbf{\Omega}^{-1}\|^2).
    \]
    \item For any function $g$ with $\|g\|_w < \infty$, the estimated projection of $g$ has norm bound
    \[
    \|\hat{\Pi}_{\bb} (g)\|_w^2 \leq \|\mathbf{\Omega}\|\|\widehat{\mathbf{\Omega}}^{-1}\mathbf{\Omega} \widehat{\mathbf{\Omega}}^{-1}\| \|g\|_w^2
    \]
\end{enumerate}
\end{lemma}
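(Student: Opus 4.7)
My plan is to reduce all three statements to algebraic manipulations on the vector $\mathbf{a}_g := \int \bb(\uu) g(\uu) w(\uu)\,dF(\uu)$, so that $\Pi_\bb(g)(\xx) = \bb(\xx)^\top \mathbf{\Omega}^{-1}\mathbf{a}_g$ and $\widehat{\Pi}_\bb(g)(\xx) = \bb(\xx)^\top \widehat{\mathbf{\Omega}}^{-1}\mathbf{a}_g$. The key preliminary estimate I would establish first is $\|\mathbf{a}_g\|_2^2 \le \|\mathbf{\Omega}\|\,\|g\|_w^2$. To see this I would compute $\|\Pi_\bb(g)\|_w^2 = \mathbf{a}_g^\top \mathbf{\Omega}^{-1}\mathbf{a}_g$ directly from the definition of $\Pi_\bb$, use that $\Pi_\bb$ is an orthogonal projection in $L_2(w\,dF)$ to get $\mathbf{a}_g^\top\mathbf{\Omega}^{-1}\mathbf{a}_g \le \|g\|_w^2$, and then bound $\mathbf{a}_g^\top \mathbf{a}_g \le \|\mathbf{\Omega}\|\,\mathbf{a}_g^\top\mathbf{\Omega}^{-1}\mathbf{a}_g$ via the eigendecomposition of $\mathbf{\Omega}$.

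For part (1), I would rewrite the integral as the bilinear form $\mathbf{a}_{g_1}^\top(\mathbf{\Omega}^{-1} - \widehat{\mathbf{\Omega}}^{-1})\mathbf{a}_{g_2}$, apply Cauchy--Schwarz with the spectral norm of the middle matrix, and then finish with the $\|\mathbf{a}_g\|_2 \le \|\mathbf{\Omega}\|^{1/2}\|g\|_w$ bound on each side, producing exactly the factor $\|\mathbf{\Omega}\|$ in the statement.

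For part (2), I would compute the double integral by first integrating out $\xx_2$ to get $\int \bb(\xx_2)\bb(\xx_2)^\top w(\xx_2)\,dF(\xx_2) = \mathbf{\Omega}$, producing the expression $\text{tr}(\widehat{\mathbf{\Omega}}^{-1}\mathbf{\Omega}\widehat{\mathbf{\Omega}}^{-1}\mathbf{\Omega}) = \|\mathbf{\Omega}^{1/2}\widehat{\mathbf{\Omega}}^{-1}\mathbf{\Omega}^{1/2}\|_F^2$. The crucial step is then to write $\mathbf{\Omega}^{1/2}\widehat{\mathbf{\Omega}}^{-1}\mathbf{\Omega}^{1/2} = I + \mathbf{\Omega}^{1/2}(\widehat{\mathbf{\Omega}}^{-1} - \mathbf{\Omega}^{-1})\mathbf{\Omega}^{1/2}$, apply $\|A+B\|_F^2 \le 2\|A\|_F^2 + 2\|B\|_F^2$, and bound the Frobenius norm of the perturbation by $\|\mathbf{\Omega}\|^2\|\widehat{\mathbf{\Omega}}^{-1}-\mathbf{\Omega}^{-1}\|_F^2 \le k\|\mathbf{\Omega}\|^2\|\widehat{\mathbf{\Omega}}^{-1}-\mathbf{\Omega}^{-1}\|^2$, using $\|I\|_F^2 = k$ to produce the $2k$ prefactor. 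I expect the main obstacle to be precisely this reduction from the asymmetric trace to a symmetric Frobenius quantity; once that is in place, the rest is a two-line calculation.

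For part (3), I would write $\|\widehat{\Pi}_\bb(g)\|_w^2 = \mathbf{a}_g^\top \widehat{\mathbf{\Omega}}^{-1}\mathbf{\Omega}\widehat{\mathbf{\Omega}}^{-1}\mathbf{a}_g$ directly from the definitions, bound this by $\|\widehat{\mathbf{\Omega}}^{-1}\mathbf{\Omega}\widehat{\mathbf{\Omega}}^{-1}\|\,\|\mathbf{a}_g\|_2^2$ by the Rayleigh-quotient characterization of spectral norm, and then close out using the same preliminary bound $\|\mathbf{a}_g\|_2^2 \le \|\mathbf{\Omega}\|\,\|g\|_w^2$ established at the start.
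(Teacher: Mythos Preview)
Your proposal is correct and follows essentially the same argument as the paper's proof, only reparametrized: where the paper works with the projection-coefficient vector $\boldsymbol{\theta}_g = \mathbf{\Omega}^{-1}\int \bb g\,w\,dF$ and the key bound $\|\mathbf{\Omega}^{1/2}\boldsymbol{\theta}_g\| = \|\Pi_{\bb}(g)\|_w \le \|g\|_w$, you work with $\mathbf{a}_g = \mathbf{\Omega}\boldsymbol{\theta}_g$ and the equivalent bound $\|\mathbf{a}_g\|_2^2 \le \|\mathbf{\Omega}\|\,\|g\|_w^2$. The reductions in parts (1)--(3) to the bilinear form, the Frobenius-norm identity $\operatorname{tr}(\widehat{\mathbf{\Omega}}^{-1}\mathbf{\Omega}\widehat{\mathbf{\Omega}}^{-1}\mathbf{\Omega}) = \|\mathbf{\Omega}^{1/2}\widehat{\mathbf{\Omega}}^{-1}\mathbf{\Omega}^{1/2}\|_F^2$ followed by the $I + \text{perturbation}$ split, and the quadratic-form bound, respectively, match the paper's steps exactly.
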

\begin{proof}
(1) Let $\ttt_g = \mathbf{\Omega}^{-1} \int \bb(\xx)g(\xx)w(\xx)dF(\xx)$ be the coefficient vector of the projection, we have
\begin{equation*}
    \begin{aligned}
    \int g_1(\xx) \Pi_{\bb} (g_2) (\xx) w(\xx) d F(\xx) =&\, \int \Pi_{\bb}(g_1)(\xx) \Pi_{\bb} (g_2) (\xx) w(\xx) d F(\xx) \\
    =&\, \int \ttt_{g_1}^\top \bb(\xx) \bb(\xx)^{\top} \ttt_{g_2} w(\xx) dF(\xx) = \ttt_{g_1}^{\top}\mathbf{\Omega} \ttt_{g_2}.
    \end{aligned}
\end{equation*}
Note that 
\[
\hat{\Pi}_{\bb} (g) (\xx) = \bb(\xx)^{\top}\widehat{\mathbf{\Omega}}^{-1}\mathbf{\Omega}\ttt_{g}
\]
is still a linear combination of $\bb$, we have
\begin{equation*}
    \begin{aligned}
    \int g_1(\xx) \hat{\Pi}_{\bb} (g_2) (\xx) w(\xx) d F(\xx) =&\, \int \Pi_{\bb}(g_1)(\xx) \hat{\Pi}_{\bb} (g_2) (\xx) w(\xx) d F(\xx) \\
    =&\, \int \ttt_{g_1}^\top \bb(\xx) \bb(\xx)^{\top} \widehat{\mathbf{\Omega}}^{-1}\mathbf{\Omega}\ttt_{g_2} w(\xx) dF(\xx) = \ttt_{g_1}^{\top}\mathbf{\Omega} \widehat{\mathbf{\Omega}}^{-1}\mathbf{\Omega}\ttt_{g_2}.
    \end{aligned}
\end{equation*}
Hence 
\begin{equation*}
    \begin{aligned}
    &\, \int g_1(\xx) (\Pi_{\bb} -\hat{\Pi}_{\bb}) (g_2) (\xx) w(\xx) dF(\xx)\\
    =&\, \ttt_{g_1}^{\top}\mathbf{\Omega} \ttt_{g_2} - \ttt_{g_1}^{\top}\mathbf{\Omega} \widehat{\mathbf{\Omega}}^{-1}\mathbf{\Omega}\ttt_{g_2} \\
    =&\, (\mathbf{\Omega}^{1/2}\ttt_{g_1})^{\top} \mathbf{\Omega}^{1/2}(\mathbf{\Omega}^{-1} - \widehat{\mathbf{\Omega}}^{-1}) \mathbf{\Omega}^{1/2}(\mathbf{\Omega}^{1/2}\ttt_{g_2}).
    \end{aligned}
\end{equation*}
By Cauchy-Schwarz inequality and property of operator norm we have
\[
\left|\int g_1(\xx) (\Pi_{\bb} -\hat{\Pi}_{\bb}) (g_2) (\xx) w(\xx) dF(\xx) \right| \leq \|\mathbf{\Omega}^{1/2}\ttt_{g_1}\| \|\mathbf{\Omega}\| \|\mathbf{\Omega}^{-1} - \widehat{\mathbf{\Omega}}^{-1}\| \|\mathbf{\Omega}^{1/2}\ttt_{g_2}\|.
\]
Further note $\|\mathbf{\Omega}^{1/2}\ttt_{g}\| = \|\Pi_{\bb}(g)\|_w \leq \|g\|_w$ and then we obtain the bound in (1).

(2) Let $\bar{\bb}(\xx) = \mathbf{\Omega}^{-1 / 2} \bb(\xx)$ be the orthonormal basis and $\MM = \mathbf{\Omega}^{1/2} \widehat{\mathbf{\Omega}}^{-1}\mathbf{\Omega}^{1/2}$.
\begin{equation*}
    \begin{aligned}
    &\,\int \left( \bb(\xx_1)^{\top} \widehat{\mathbf{\Omega}}^{-1} \bb(\xx_2) \right)^2 w(\xx_1) w(\xx_2) dF(\xx_1) dF(\xx_2) \\
    =&\, \int \left( \bar{\bb}(\xx_1)^{\top} \MM \bar{\bb}(\xx_2) \right)^2 w(\xx_1) w(\xx_2) dF(\xx_1) dF(\xx_2) \\
    = & \, \int\left\{\sum_{j, \ell} M_{j \ell} \bar{\bb}_j\left(\xx_1\right) \bar{\bb}_{\ell}\left(\xx_2\right)\right\}^2  w(\xx_1) w(\xx_2)d F\left(\xx_1\right) d F\left(\xx_2\right) \\
    =&\, \sum_{j, \ell} \sum_{j^{\prime}, \ell^{\prime}} M_{j \ell} M_{j^{\prime} \ell^{\prime}} \int \bar{\bb}_j\left(\xx_1\right) \bar{\bb}_{j^{\prime}}\left(\xx_1\right)w(\xx_1) d F\left(\xx_1\right) \int \bar{\bb}_{\ell}\left(\xx_2\right) \bar{\bb}_{\ell^{\prime}}\left(\xx_2\right) w(\xx_2)d F\left(\xx_2\right) \\
    =&\, \sum_{j, \ell} M_{j \ell}^2 \\
    =&\, \|\MM\|_F^2 \\
    \leq&\, 2(\|\II_k\|_F^2 + \|\II-\mathbf{\Omega}^{1/2} \widehat{\mathbf{\Omega}}^{-1}\mathbf{\Omega}^{1/2}\|_F^2) \\
    =&\,  2(k + \|\mathbf{\Omega}^{1/2}(\mathbf{\Omega}^{-1}- \widehat{\mathbf{\Omega}}^{-1})\mathbf{\Omega}^{1/2}\|_F^2).
    \end{aligned}
\end{equation*}
Now use the property $\|\AA \BB\|_F \leq \|\AA\|\|\BB\|_F$ and $\|\AA\|_F^2 \leq k\|\AA\|^2$, we can further bound the above expression as
\[
2(k+ \|\mathbf{\Omega}\|^2 \|\widehat{\mathbf{\Omega}}^{-1} -\mathbf{\Omega}^{-1}\|_F^2) \leq 2k(1+\|\mathbf{\Omega}\|^2\|\widehat{\mathbf{\Omega}}^{-1} - \mathbf{\Omega}^{-1}\|^2).
\]
(3) 
\[
\|\hat{\Pi}_{\bb} (g)\|_w^2 =\int \ttt_{g}^{\top}\mathbf{\Omega} \widehat{\mathbf{\Omega}}^{-1} \bb(\xx) \bb(\xx)^{\top}\widehat{\mathbf{\Omega}}^{-1}\mathbf{\Omega}\ttt_{g} w(\xx) dF(\xx) = \ttt_{g}^{\top}\mathbf{\Omega} \widehat{\mathbf{\Omega}}^{-1}\mathbf{\Omega} \widehat{\mathbf{\Omega}}^{-1}\mathbf{\Omega}\ttt_{g}.
\]
By Cauchy Schwarz's inequality this is further bounded by
\[
\|\mathbf{\Omega}^{1/2}\ttt_g\| \|\mathbf{\Omega}\|\|\widehat{\mathbf{\Omega}}^{-1}\mathbf{\Omega} \widehat{\mathbf{\Omega}}^{-1}\|\|\mathbf{\Omega}^{1/2}\ttt_g\|.
\]
Again by noting $\|\mathbf{\Omega}^{1/2}\ttt_g\| \leq \|g\|_w$ the proof is completed.

We then prove the theorem. The conditional bias of the $\mathbb{P}_n\left[\widehat{\phi}_{a, 1}^{\text {ge }}(\ZZ)\right]$ is
\[
\ME[\widehat{\phi}_{a, 1}^{\text {ge }}(\ZZ)] - \psi_a = -\int (\hat{\mu}_a(\xx) -\mu_a(\xx)) \left(\frac{1}{\hat{\pi}_a(\xx)\hat{\rho}(\xx)} - \frac{1}{\pi_a(\xx)\rho(\xx)} \right) \pi_a(\xx)\rho(\xx) dF(\xx).
\]
The conditional expectation of the estimated second-order influence function is 
\begin{equation*}
\begin{aligned}
    \ME[\widehat{\phi}_{a, 2}^{\text {ge }}(\ZZ_1,\ZZ_2)] =&\,  -\int \pi_a(\xx)\rho(\xx) (\mu_a(\xx)-\hat{\mu}_a(\xx))\bb(\xx)^{\top} \widehat{\mathbf{\Omega}}^{-1} \int \bb(\uu) \left(\frac{1}{\hat{\pi}_a(\uu)\hat{\rho}(\uu)} - \frac{1}{\pi_a(\uu)\rho(\uu)} \right)\pi_a(\uu)\rho(\uu) d \uu \\
    =&\, -\int \rho(\xx)\pi_a(\xx)(\mu_a(\xx)-\hat{\mu}_a(\xx)) \hat{\Pi}_{\bb} (\frac{1}{\hat{\pi}_a\hat{\rho}} -\frac{1}{\pi_a\rho}) (\xx) dF(\xx).
\end{aligned}
\end{equation*}
Thus the conditional bias of quadratic estimator is
\begin{equation*}
    \begin{aligned}
    &\, \ME[\widehat{\phi}_{a, 1}^{\text {ge }}(\ZZ)]+ \ME[\widehat{\phi}_{a, 2}^{\text {ge }}(\ZZ_1,\ZZ_2)] - \psi_a \\
    =&\, -\int \rho(\xx)\pi_a(\xx)(\hat{\mu}_a(\xx)-\mu_a(\xx)) (\II-\hat{\Pi}_{\bb}) (\frac{1}{\hat{\pi}_a\hat{\rho}} -\frac{1}{\pi_a\rho}) (\xx) dF(\xx) \\
    =&\, -\int \rho(\xx)\pi_a(\xx)(\hat{\mu}_a(\xx)-\mu_a(\xx)) (\II-\Pi_{\bb}) (\frac{1}{\hat{\pi}_a\hat{\rho}} -\frac{1}{\pi_a\rho}) (\xx) dF(\xx) \\
    &\, + \int \rho(\xx)\pi_a(\xx)(\hat{\mu}_a(\xx)-\mu_a(\xx)) (\hat{\Pi}_{\bb}-\Pi_{\bb}) (\frac{1}{\hat{\pi}_a\hat{\rho}} -\frac{1}{\pi_a\rho}) (\xx) dF(\xx) \\
    =&\, -\int (\II-\Pi_{\bb})(\hat{\mu}_a-\mu_a)(\xx) (\II-\Pi_{\bb}) (\frac{1}{\hat{\pi}_a\hat{\rho}} -\frac{1}{\pi_a\rho}) (\xx) \rho(\xx)\pi_a(\xx) dF(\xx) \\
    &\,+ \int \rho(\xx)\pi_a(\xx)(\hat{\mu}_a(\xx)-\mu_a(\xx)) (\hat{\Pi}_{\bb}-\Pi_{\bb}) (\frac{1}{\hat{\pi}_a\hat{\rho}} -\frac{1}{\pi_a\rho}) (\xx) dF(\xx). \\
    \end{aligned}
\end{equation*}
By Cauchy-Schwarz inequality, the first term can be bounded as
\[
\left\|\left(\II-\Pi_{\bb}\right)\left(\frac{1}{\widehat{\rho} \widehat{\pi}_a}-\frac{1}{\rho \pi_a}\right)\right\|_w\left\|\left(\II-\Pi_{\bb}\right)\left(\widehat{\mu}_a-\mu_a\right)\right\|_w.
\]
By (1) in Lemma \ref{lemma-basis}, the second term can be bounded as 
\[
\left\| \frac{1}{\hat{\rho}\hat{\pi}_a} - \frac{1}{\rho\pi_a} \right\|_w \left\|  \hat{\mu}_a - \mu_a \right\|_w \|\widehat{\mathbf{\Omega}}^{-1} - \mathbf{\Omega}^{-1}\|\|\mathbf{\Omega}\|.
\]
For the bound on the variance, we will apply results on variance bounds of U-Statistics.
\[
\hat{\phi}_{a,2}^{\text{ge}}(\ZZ_1) =  \ME[\hat{\phi}_{a,2}^{\text{ge}}(\ZZ_1,\ZZ_2)|\ZZ_1] = -I(S=1,A_1=a)(Y_1-\hat{\mu}_a(\XX_1)) \hat{\Pi}_{\bb} \left( \frac{1}{\hat{\rho}\hat{\pi}_a} - \frac{1}{\rho\pi_a} \right) (\XX_1).
\]
By (3) in lemma \ref{lemma-basis} we have
\[
\ME[(\hat{\phi}_{a,2}^{\text{ge}}(\ZZ_1))^2] \lesssim \left\|\hat{\Pi}_{\bb} \left( \frac{1}{\hat{\rho}\hat{\pi}_a} - \frac{1}{\rho\pi_a} \right) \right\|_w^2 \lesssim \left\| \left( \frac{1}{\hat{\rho}\hat{\pi}_a} - \frac{1}{\rho\pi_a} \right)\right\|_w^2 \lesssim 1.
\]
Furthermore
\begin{equation*}
    \begin{aligned}
    \ME[(\hat{\phi}_{a,2}^{\text{ge}}(\ZZ_1,\ZZ_2))^2] \lesssim&\, \int \left( \bb(\xx_1)^{\top} \widehat{\mathbf{\Omega}}^{-1} \bb(\xx_2) \right)^2 \pi_a(\xx_1)\rho(\xx_1) \pi_a(\xx_2)\rho(\xx_2) dF(\xx_1) dF(\xx_2) \\
    \leq &\, 2k(1+\|\mathbf{\Omega}\|^2\|\widehat{\mathbf{\Omega}}^{-1} - \mathbf{\Omega}^{-1}\|^2),
    \end{aligned}
\end{equation*}
where the last inequality follows from (2) in Lemma \ref{lemma-basis}. Since all the terms are bounded, the above bounds still hold if we consider the term $\hat{\phi}_{a,1}^{\text{ge}}(\ZZ_1)+ \hat{\phi}_{a,2}^{\text{ge}}(\ZZ_1,\ZZ_2)$ (the constant may change). Finally by Lemma 6 in \cite{robins2009quadratic} we have the conditional variance of the quadratic estimator is bounded by $O(1/n+k/n^2)$.

\end{proof}

\section{Proof of Theorem \ref{thm-qr-transportation}}
The conditional bias of the $\mathbb{P}_n\left[\widehat{\phi}_{a, 1}^{\text {tr }}(\ZZ)\right]$ is
\[
\ME[\widehat{\phi}_{a, 1}^{\text {tr }}(\ZZ)] - \psi_a = -\int (\hat{\mu}_a(\xx) -\mu_a(\xx)) \left(\frac{1-\hat{\rho}(\xx)}{\hat{\pi}_a(\xx)\hat{\rho}(\xx)} - \frac{1-\rho(\xx)}{\pi_a(\xx)\rho(\xx)} \right) \pi_a(\xx)\rho(\xx) dF(\xx).
\]
The conditional expectation of the second-order term is 
\begin{equation*}
    \begin{aligned}
    \ME[\widehat{\phi}_{a, 2}^{\text {tr }}(\ZZ_1,\ZZ_2)]=&\, \int \rho(\xx)\pi_a(\xx) (\hat{\mu}_a(\xx)-\mu_a(\xx)) \bb(\xx)^{\top}\widehat{\mathbf{\Omega}}^{-1}\int \bb(\uu) \left(\frac{1-\hat{\rho}(\uu)}{\hat{\pi}_a(\uu)\hat{\rho}(\uu)} - \frac{1-\rho(\uu)}{\pi_a(\uu)\rho(\uu)} \right) \pi_a(\uu)\rho(\uu) dF(\uu) \\
    =&\, \int \rho(\xx)\pi_a(\xx) (\hat{\mu}_a(\xx)-\mu_a(\xx)) \hat{\Pi}_{\bb} \left( \frac{1-\hat{\rho}}{\hat{\pi}_a \hat{\rho}} -\frac{1-\rho}{\pi_a\rho} \right) (\xx) dF(\xx).
    \end{aligned}
\end{equation*}
Thus the conditional bias of quadratic estimator is
\begin{equation*}
    \begin{aligned}
    &\, \ME[\widehat{\phi}_{a, 1}^{\text {tr }}(\ZZ)]+ \ME[\widehat{\phi}_{a, 2}^{\text {tr }}(\ZZ_1,\ZZ_2)] - \theta_a \\
    =&\, -\int \rho(\xx)\pi_a(\xx)(\hat{\mu}_a(\xx)-\mu_a(\xx)) (\II-\hat{\Pi}_{\bb})\left( \frac{1-\hat{\rho}}{\hat{\pi}_a \hat{\rho}} -\frac{1-\rho}{\pi_a\rho} \right) (\xx) dF(\xx) \\
    =&\, -\int \rho(\xx)\pi_a(\xx)(\hat{\mu}_a(\xx)-\mu_a(\xx)) (\II-\Pi_{\bb}) \left( \frac{1-\hat{\rho}}{\hat{\pi}_a \hat{\rho}} -\frac{1-\rho}{\pi_a\rho} \right) (\xx) dF(\xx) \\
    &\, + \int \rho(\xx)\pi_a(\xx)(\hat{\mu}_a(\xx)-\mu_a(\xx)) (\hat{\Pi}_{\bb}-\Pi_{\bb}) \left( \frac{1-\hat{\rho}}{\hat{\pi}_a \hat{\rho}} -\frac{1-\rho}{\pi_a\rho} \right) (\xx) dF(\xx) \\
    =&\, -\int (\II-\Pi_{\bb})(\hat{\mu}_a-\mu_a)(\xx) (\II-\Pi_{\bb}) \left( \frac{1-\hat{\rho}}{\hat{\pi}_a \hat{\rho}} -\frac{1-\rho}{\pi_a\rho} \right) (\xx) \rho(\xx)\pi_a(\xx) dF(\xx) \\
    &\,+ \int \rho(\xx)\pi_a(\xx)(\hat{\mu}_a(\xx)-\mu_a(\xx)) (\hat{\Pi}_{\bb}-\Pi_{\bb}) \left( \frac{1-\hat{\rho}}{\hat{\pi}_a \hat{\rho}} -\frac{1-\rho}{\pi_a\rho} \right) (\xx) dF(\xx). \\
    \end{aligned}
\end{equation*}
By Cauchy-Schwarz inequality, the first term can be bounded as
\[
\left\|\left(\II-\Pi_{\bb}\right)\left( \frac{1-\hat{\rho}}{\hat{\pi}_a \hat{\rho}} -\frac{1-\rho}{\pi_a\rho} \right)\right\|_w\left\|\left(\II-\Pi_{\bb}\right)\left(\widehat{\mu}_a-\mu_a\right)\right\|_w.
\]
By (1) in Lemma \ref{lemma-basis}, the second term can be bounded as 
\[
\left\| \frac{1-\hat{\rho}}{\hat{\pi}_a \hat{\rho}} -\frac{1-\rho}{\pi_a\rho} \right\|_w \left\|  \hat{\mu}_a - \mu_a \right\|_w \|\widehat{\mathbf{\Omega}}^{-1} - \mathbf{\Omega}^{-1}\|\|\mathbf{\Omega}\|.
\]
For the bound on the variance, 
\[
\hat{\phi}_{a,2}^{\text{tr}}(\ZZ_1) =  \ME[\hat{\phi}_{a,2}^{\text{tr}}(\ZZ_1,\ZZ_2)|\ZZ_1] = -I(S=1,A_1=a)(Y_1-\hat{\mu}_a(\XX_1)) \hat{\Pi}_{\bb} \left( \frac{1-\hat{\rho}}{\hat{\pi}_a \hat{\rho}} -\frac{1-\rho}{\pi_a\rho} \right) (\XX_1).
\]
By (3) in Lemma \ref{lemma-basis} we have
\[
\ME[(\hat{\phi}_{a,2}^{\text{tr}}(\ZZ_1))^2] \lesssim \left\|\hat{\Pi}_{\bb} \left( \frac{1-\hat{\rho}}{\hat{\pi}_a \hat{\rho}} -\frac{1-\rho}{\pi_a\rho} \right) \right\|_w^2 \lesssim \left\| \left( \frac{1-\hat{\rho}}{\hat{\pi}_a \hat{\rho}} -\frac{1-\rho}{\pi_a\rho} \right)\right\|_w^2 \lesssim 1.
\]
Furthermore
\begin{equation*}
    \begin{aligned}
    \ME[(\hat{\phi}_{a,2}^{\text{tr}}(\ZZ_1,\ZZ_2))^2] \lesssim&\, \int \left( \bb(\xx_1)^{\top} \widehat{\mathbf{\Omega}}^{-1} \bb(\xx_2) \right)^2 \pi_a(\xx_1)\rho(\xx_1) \pi_a(\xx_2)\rho(\xx_2) dF(\xx_1) dF(\xx_2) \\
    \leq &\, 2k(1+\|\mathbf{\Omega}\|^2\|\widehat{\mathbf{\Omega}}^{-1} - \mathbf{\Omega}^{-1}\|^2),
    \end{aligned}
\end{equation*}
where the last inequality follows from (2) in Lemma \ref{lemma-basis}. Since all the terms are bounded, the above bounds still hold if we consider the term $\hat{\phi}_{a,1}^{\text{tr}}(\ZZ_1)+ \hat{\phi}_{a,2}^{\text{tr}}(\ZZ_1,\ZZ_2)$ (the constant may change). Finally by Lemma 6 in \cite{robins2009quadratic} we have the conditional variance of the quadratic estimator is bounded by $O(1/n+k/n^2)$.

\section{Details in Real Data Analysis}

\subsection{Difference in Covariates Distributions and Possible Effect Modifiers}\label{sec:Real-data-difference}

We first examine the difference in covariate distributions. Since most variables are discrete we will present the bar plots. All the missing values are omitted. The bar plots of education level and race are shown in Figure \ref{fig:education} and Figure \ref{fig:race}. One can perform Pearson's Chi-square tests to test whether there exists a significant difference in the distributions of education level and race from different datasets. The p-values are both smaller than $2.2 \times 10^{-16}$ and we conclude the difference in distribution is significant. For brevity, we only present two variables. From our analysis, similar results applied to most variables, indicating the distributions of most variables from two populations are different. 
\begin{figure}[H]
	\centering
	\subfigure[Numom]{
		\begin{minipage}[t]{0.45\linewidth}
			\centering
			\includegraphics[width=2.5in]{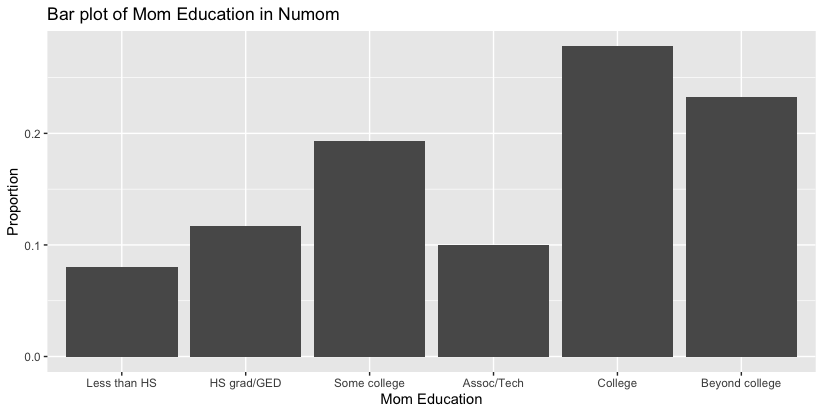}
	\end{minipage}}
	\subfigure[NSFG]{
		\begin{minipage}[t]{0.45\linewidth}
			\centering
			\includegraphics[width=2.5in]{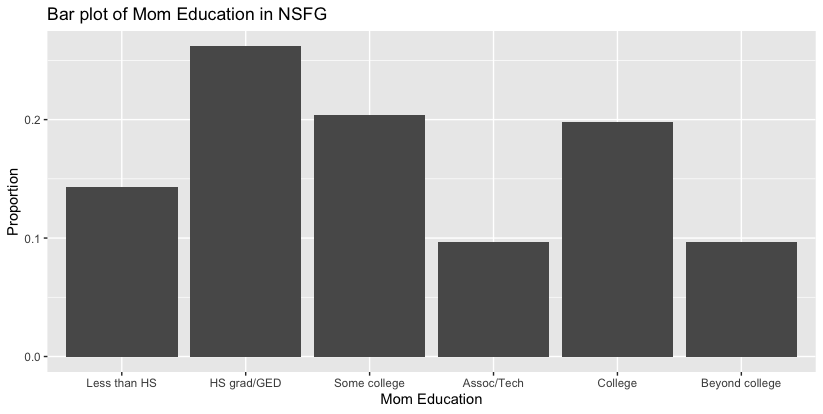}
	\end{minipage}}
	\centering
	\caption{Bar plots of education level in the two datasets}
	\label{fig:education}
\end{figure}

\begin{figure}[H]
	\centering
	\subfigure[Numom]{
		\begin{minipage}[t]{0.45\linewidth}
			\centering
			\includegraphics[width=2.5in]{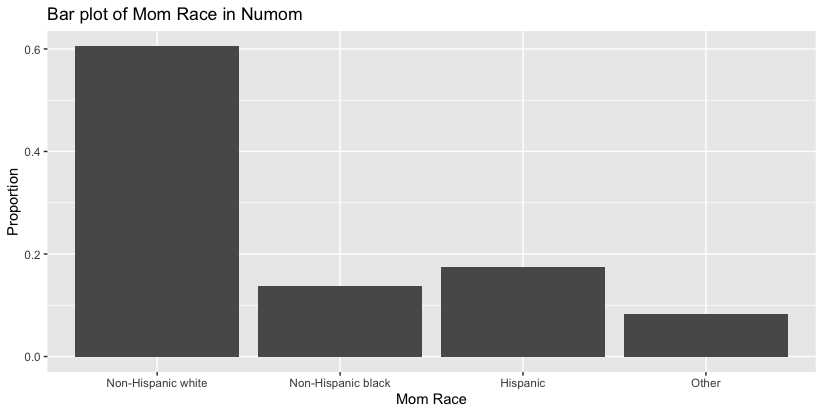}
	\end{minipage}}
	\subfigure[NSFG]{
		\begin{minipage}[t]{0.45\linewidth}
			\centering
			\includegraphics[width=2.5in]{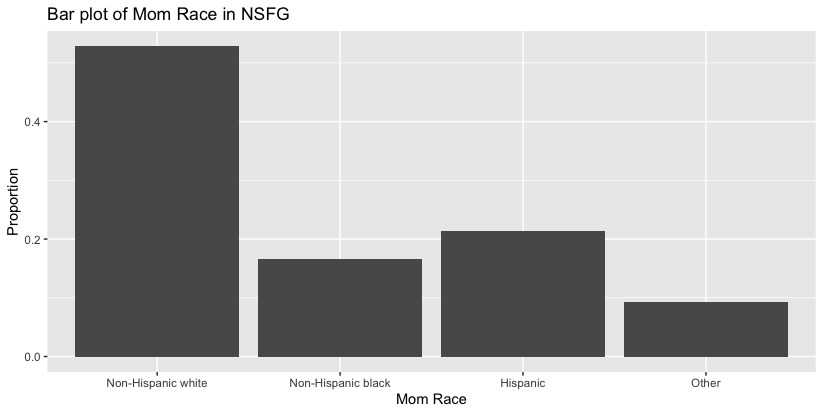}
	\end{minipage}}
	\centering
	\caption{Bar plots of race in the two datasets}
	\label{fig:race}
\end{figure}

Then we will focus on the effect of high fruit intake on preterm birth as an example to show the existence of effect modifiers. The covariate we consider is education level. For each subject in the nuMoM2b dataset, we estimate the non-centered influence function for ATE of fruit on preterm birth. We then divide subjects into groups according to their education level and take the average of influence functions among subjects in the same group to estimate the ATE of high fruit intake on preterm birth in each group. The estimates together with 95\% confidence intervals are shown in Figure \ref{fig:effect_modifier}.

\begin{figure}
    \centering
    \includegraphics[width=4in]{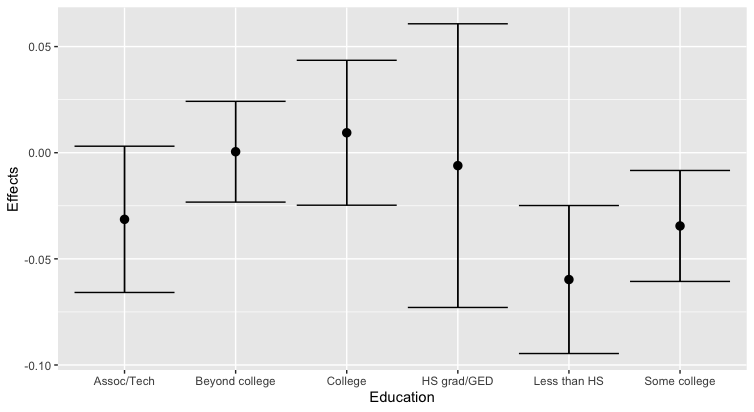}
    \caption{Hetergeneous effects of high fruit intake on preterm birth}
    \label{fig:effect_modifier}
\end{figure}

We can see that the confidence intervals for effects among subjects with education level less than high school and beyond college do not overlap. Similar results can be found if we use some other covariate, treatment and outcome. These results imply heterogeneous effects exist in different levels of covariates and some covariates may modify the effects.

Combining these two observations we found in this part: (1) the distributions of most variables in the two populations are different; and (2) some covariates may modify the effects, so we conclude that the results in \cite{bodnar2020machine} based on nuMoM2b dataset may not generalize to the whole U.S. pregnant female population directly. It is therefore necessary to apply transportation methods to obtain reliable results on the target population.

\subsection{Adjustments by Stratified Cluster Sampling}
The process of stratified cluster sampling is summarized below:

\begin{itemize}
    \item Form $h=1, \dots, H$ strata.
    \item Sample $\alpha = 1,\dots, a_h$ clusters independently from each stratum.
    \item Query feature of interests from all individuals in each cluster $y_{h\alpha i}, i=1,\dots,n_{h \alpha}$.
    \item Weight $w_{h\alpha i}$: this data point can represent $w_{h\alpha i}$ individuals in the population.
\end{itemize}
The estimator for the population mean of $y$ is given by
\begin{equation}\label{weighted-est}
\bar{y}_w=\frac{\sum_{h=1}^{H} \sum_{\alpha=1}^{a_{h}} \sum_{i=1}^{n_{h \alpha}} w_{h \alpha i} y_{h \alpha i}}{\sum_{h=1}^{H} \sum_{\alpha=1}^{a_{h}} \sum_{i=1}^{n _{h \alpha}} w_{h \alpha i}} = \frac{\hat{Y}}{\hat{N}} .    
\end{equation}
The estimated variance is 
\[
\text{Var}\left(\bar{y}_w\right) \doteq \frac{\text{Var}(\hat{Y})+\bar{y}^{2} \times \text{Var}(\hat{N})-2 \times \bar{y} \times \text{Cov}(\hat{Y}, \hat{N})}{\hat{N}^{2}},
\]
where $\text{Var}(\hat{Y}), \text{Var}(\hat{N}), \text{Cov}(\hat{Y}, \hat{N})$ can be estimated from the samples. More detailed discussions on stratified cluster sampling mechanisms can be found in \cite{heeringa2017applied}.

To motivate how to modify the doubly robust estimator, we note that the doubly robust estimator can be written as
\[
\hat{\theta}_a^{dr} = \MP_n \left[ \frac{\hat{\varphi}_a^{\text{tr}}(\ZZ)}{\hat{\MP}(S=0)} \right]  = \frac{n_1}{n} \MP_{n_1} \left[ \frac{\hat{\varphi}_a^{\text{tr}}(\ZZ)}{\hat{\MP}(S=0)} \right]  + \frac{n_2}{n} \MP_{n_2} \left[ \frac{\hat{\varphi}_a^{\text{tr}}(\ZZ)}{\hat{\MP}(S=0)} \right].
\]
Replace $y$ with the non-centered influence function $\hat{\varphi}_a^{\text{tr}}(\ZZ)/\hat{\MP}(S=0)$ in \eqref{weighted-est} and take weighted average in the target dataset (NSFG), we can estimate the mean of potential outcomes among target population as $\hat{\theta}_{a,2} $ with variance $\sigma_{a,2}^2$. In the source population we can use the simple average $\hat{\theta}_{a,1} = \frac{1}{n_1}\sum_{i=1}^{n_1} \hat{\varphi}_a^{\text{tr}}(\ZZ_i)/\hat{\MP}(S=0)$ with variance $\sigma_{a,1}^2$, then we can combine them as
\begin{equation}
    \hat{\theta}_a^{dr} = \frac{n_1}{n} \hat{\theta}_{a,1} + \frac{n_2}{n} \hat{\theta}_{a,2}.
\end{equation}
The variance is $\text{Var}(\hat{\theta}_a) = (n_1^2\sigma_{a,1}^2 + n_2^2 \sigma_{a,2}^2)/n^2$. In practice, both variances can be estimated either by sample variance (for $\sigma_{a,1}^2$) or estimators in stratified cluster sampling (for $\sigma_{a,2}^2$).

\subsection{Model Selection}

We use cross-validation to evaluate the performance of different models. We will use vegetable intake as the treatment and gestational diabetes as the outcome in this part. Consider the following models in the superlearner:
\begin{itemize}
    \item SL.ranger: (mtry, min.node.size, replacement) $\in \{3, 5, 7 \} \times \{5, 10, 20\} \times $ \{with/without replacement\}.
    \item SL.xgboost: (max\_depth, shrinkage) $\in \{4, 6, 8\} \times \{ 0.1, 0.01, 0.001\}$.
    \item SL.glmnet: Use CV to find lambda.min.
    \item SL.ksvm: nu $ \in \{0.1, 0.3, 0.5, 0.7\}$.
\end{itemize}
Five-fold cross-fitting is used to obtain the estimated values of nuisance functions for each subject. We then use the square loss to measure the performance of different models. The average loss together with the 95\% confidence interval for each model is summarized in Figure \ref{fig:cv}. The x-axis indicates the models we are using. For instance, ``glm+ranger'' is the model we used to plot Figure \ref{fig:effects} (where in ``ranger'' we just use the default tuning parameters). ``glm+rf'' is the superlearner including all random forest models with different tuning parameters specified above and ``glm+rf+svm+xg'' is the superlearner including all models with different tuning parameters. From the cross-validation results we conclude including more complicated models does not significantly improve the performance. Under such circumstances, people usually use the simple model for generalization purpose. This justifies the use of the simplest model ``glm+ranger'' in our previous analysis.

\begin{figure}[H]
	\centering
	\subfigure[$\mu_0(\xx)$]{
		\begin{minipage}[t]{0.45\linewidth}
			\centering
			\includegraphics[width=3in]{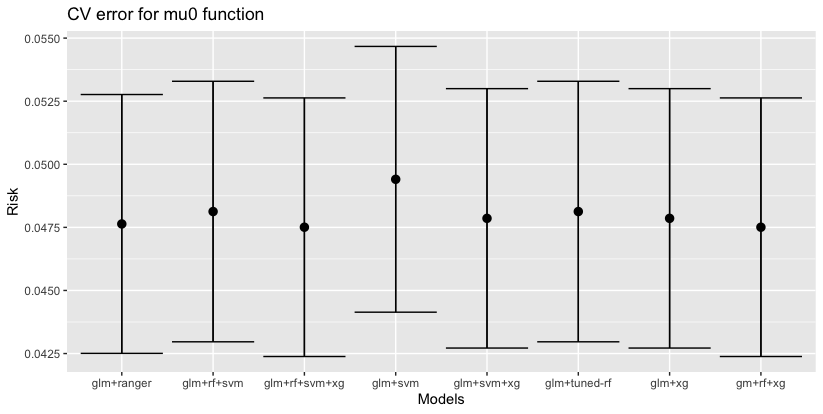}
	\end{minipage}}
	\subfigure[$\mu_1(\xx)$]{
		\begin{minipage}[t]{0.45\linewidth}
			\centering
			\includegraphics[width=3in]{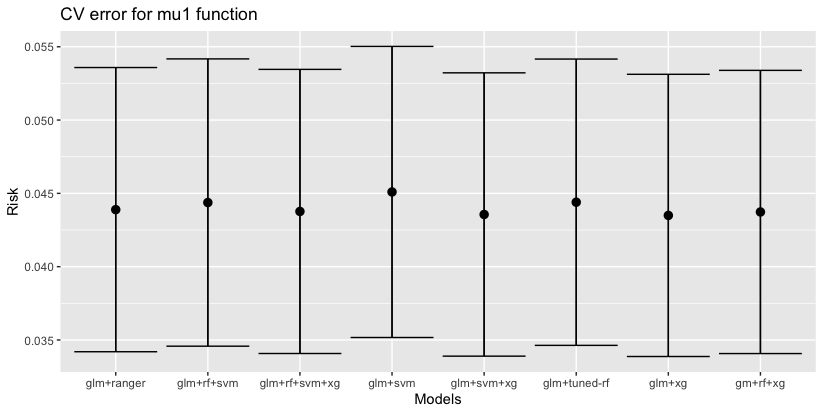}
	\end{minipage}}\\
	\subfigure[$\tau_0(\vv)$]{
		\begin{minipage}[t]{0.45\linewidth}
			\centering
			\includegraphics[width=3in]{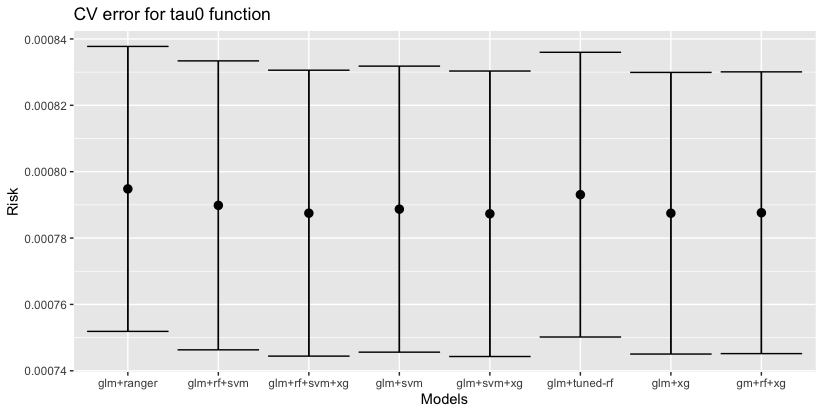}
	\end{minipage}}
	\subfigure[$\tau_1(\vv)$]{
		\begin{minipage}[t]{0.45\linewidth}
			\centering
			\includegraphics[width=3in]{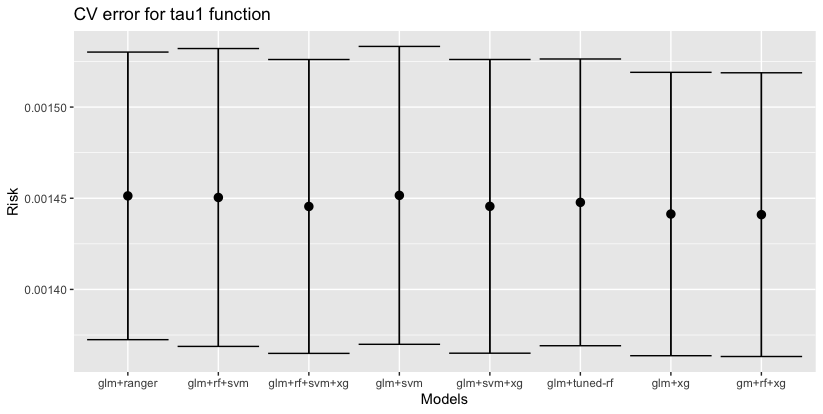}
	\end{minipage}}\\
	\subfigure[$\pi(\xx)$]{
		\begin{minipage}[t]{0.45\linewidth}
			\centering
			\includegraphics[width=3in]{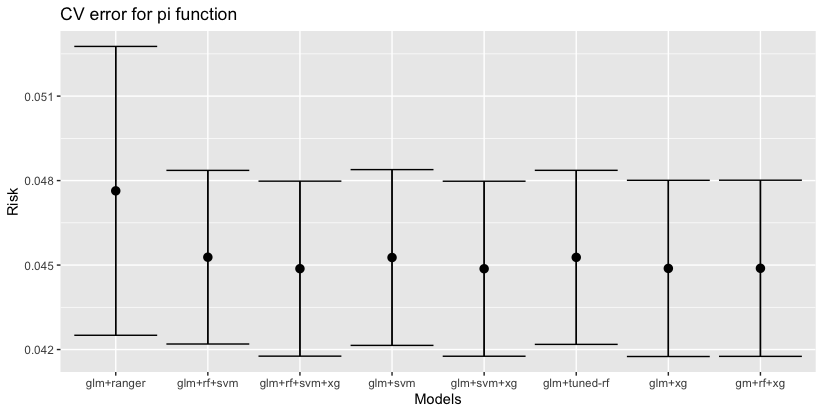}
	\end{minipage}}
	\subfigure[$\rho(\vv)$]{
		\begin{minipage}[t]{0.45\linewidth}
			\centering
			\includegraphics[width=3in]{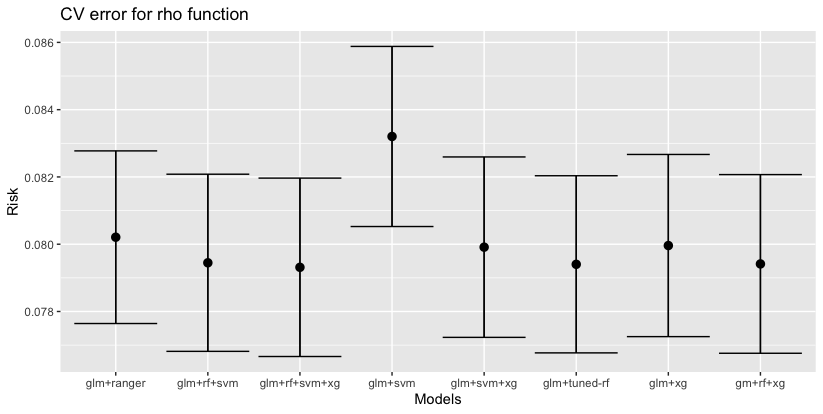}
	\end{minipage}}\\
	\centering
	\caption{Cross-validation results. Treatment is vegetable, outcome is gestational diabetes.}
	\label{fig:cv}
\end{figure}

\subsection{Choice of Positivity Constant}

We then try different choices of positivity parameter $\epsilon$. We set $\epsilon$ to be a sequence from 0.01 to 0.15 with an increment of 0.01. The 15 point estimates of the effects in the target population together with the confidence intervals are presented in Figure  \ref{fig:positive}. We see there is no significant difference in the results with different choices of $\epsilon$. To obtain more stable results, one may want to use a large $\epsilon$ so that the denominators in the doubly robust estimators are not too small. However, a small $\epsilon$ respects the fitted model (it will modify fewer estimated probabilities) and hence the results will be closer to the truth in some sense. So there is a trade-off in the choice of $\epsilon$. In our analysis, the results are fairly stable when using different $\epsilon$'s. So we will stick to the original choice $\epsilon = 0.01$.
\begin{figure}
    \centering
    \includegraphics[width=4in]{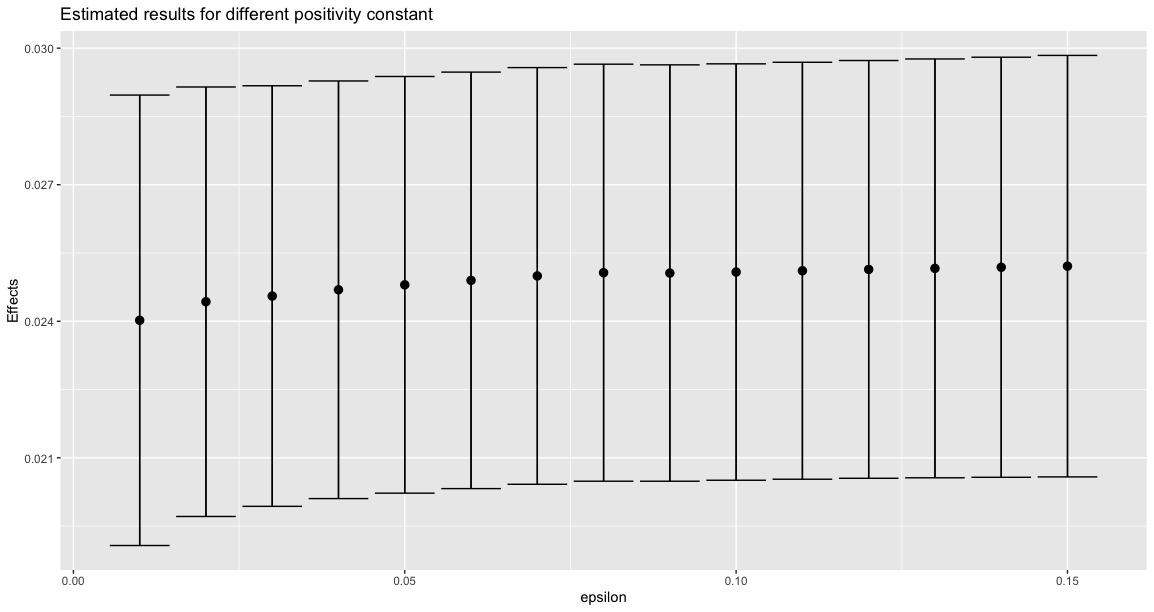}
    \caption{Results with different choice of $\epsilon$. Treatment is vegetable, outcome is gestational
diabetes.}
    \label{fig:positive}
\end{figure}

\subsection{Potential Outliers}

Finally we evaluate the impacts of potential outliers in our analysis. We first estimate the non-centered influence functions for each data point. For each $q$ in the sequence from 0.004 to 0.014 with increment 0.002, we only include samples with influence functions greater than $q$-quantile and smaller than $(1-q)$-quantile of all the empirical influence functions in the subsequent analysis (i.e. we only take an average over these empirical influence function values which are not very extreme). The estimated results are presented in Figure \ref{fig:outlier}. From Figure \ref{fig:outlier} it seems that all the 95\% confidence intervals overlap and our results are quite robust against outliers in the empirical influence functions.
\begin{figure}
    \centering
    \includegraphics[width=4in]{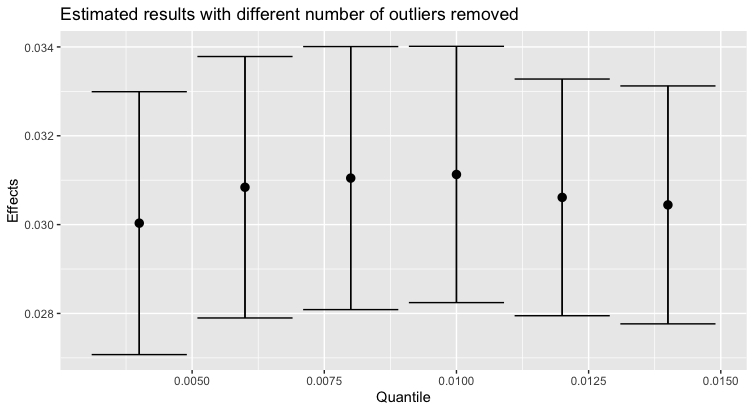}
    \caption{Assessing the effects of outliers in empirical influence functions}
    \label{fig:outlier}
\end{figure}

\end{document}